\newtheorem{proposition}{Proposition}
\newtheorem{theorem}{Theorem}
\newtheorem{definition}{Definition}
\newtheorem{corollary}{Corollary}
\newtheorem{lemma}{Lemma}
\newtheorem{remark}{Remark}
\newtheorem{ej}{Example}
\newcommand{\Me}{\mathcal{M}}
\newcommand{\Se}{\mathcal{S}}
\newcommand{\He}{\mathcal{H}}
\newcommand{\se}{{\bf S}}
\begin{document}

\title{No-Arbitrage Symmetries}

\author{I. L. Degano}
\address[I.L. Degano]{Department of Mathematics\\
Mar del Plata National University\\
Mar del Plata, Funes 3350, 7600 Argentina}
\email{ivandegano@mdp.edu.ar}
\author{S.E.Ferrando}
\address[S.E.Ferrando]{Department of Mathematics\\
Ryerson University\\
350 Victoria Street}
\email{ferrando@ryerson.ca}
\author{A.L. Gonzalez}
\address[A.L. Gonzalez]{Department of Mathematics\\
Mar del Plata National University\\
Mar del Plata, Funes 3350, 7600 Argentina}
\email{algonzal@mdp.edu.ar}


\date{August 12, 2020}


\maketitle

\begin{abstract}
{The no-arbitrage property is widely accepted to be a centerpiece of modern financial mathematics and could be considered to be a financial law applicable to a large class of (idealized) markets. The paper addresses the following basic question: can one characterize the class of transformations that leave the law of no-arbitrage invariant? We provide a geometric formalization of this question  in a non probabilistic setting of discrete time, the so-called trajectorial models. The paper then characterizes, in a local sense, the no-arbitrage symmetries and illustrates their meaning in a detailed example. Our context makes the result available to the stochastic setting as a special case.}
\end{abstract}

\vspace{.1in}
\noindent
\keywords Keywords:~~{No arbitrage symmetry, convexity preserving maps,  non-probabilistic markets.}

\vspace{.1in}
\noindent
2000 Math Subject Classification: 91B24; 91G10; 26B25; 49J35.

\section{Introduction}
The principle of no-arbitrage plays a fundamental role in modern financial mathematics, see \cite{follmer} and references therein (we mostly restrict our comments and developments to a discrete time setting). In plain language, the assumption of no-arbitrage  means that risky asset models should rule out apriori the possibility of making a profit without taking in any risk. This hypothesis implies a pricing methodology based on martingale stochastic processes, this is the risk-neutral valuation (\cite{bingham}), and as such plays the role of a financial law. Its empirical validity has been studied (\cite{kamara}) and even if arbitrage opportunities may be available they are believed to be rare, short lived and hard to profit from. One could compare the notion of no-arbitrage to a physical law that applies in idealized conditions such as the principle of inertia and think of it as a fundamental financial law applicable to a large class of (idealized) markets. With this point of view in mind, and as a preliminary step, we pose the question: {\it can we characterize the class of transformations that leave the law of no-arbitrage invariant}? This is  much akin to Galilean/Lorentz transformations leaving the class of inertial frames invariant. Therefore, we look for no-arbitrage preserving transformations (referred also as no-arbitrage symmetries) mapping a set of  financial events
into another set  of financial events with the property that the no-arbitrage
property holds for both classes of events. One is also interested in providing a financial interpretation to such set of symmetries (much like Galilean/Lorentz transformations having a physical interpretation) and exploring some financial implications.

The financial events mentioned in the previous paragraph have to be linked to financial
transactions as it is the latter that fall under the scope of the no-arbitrage principle. In most situations, each such transaction involves two goods $X$ and $Y$ and  a {\it price} $X_Y(t)$ so that $X = X_Y(t)~ Y$. $X_Y(t)$ is the integer
number of units of asset $Y$ required to purchase one unit of asset $X$.
In terms of dimensional units  $[X_Y(t)]= [X]/[Y]$ and the reference asset $Y$ is called the (chosen) numeraire (\cite{vecer}). This discussion suggests that an analysis of the no-arbitrage principle could be done in terms of prices and numeraires (numeraire free approaches are also possible) and that is the way we proceed in the paper.

The setting where we precisely pose and answer the above raised question is a set  of sequences of  multidimensional  prices that evolve in discrete time.
That type of set is called a trajectorial model (for a set of risky assets); in our investigation there is no need to assume any probability structure on such a set. In this way we can work on a more general setting than the (discrete time) stochastic framework and the question we study becomes a natural one unencumbered by unnecessary additional structure. We also briefly indicate how our main results, characterizing no-arbitrage transformations, apply to the stochastic setting as a special case.

The mentioned trajectorial framework has been developed in \cite{ferrando} and \cite{degano} (see also \cite{ferrando2}) for the $1$-dimensional case  with a $0$ interest rate bank account as numeraire and is here extended to the $d$-dimensional case and a general numeraire (but we restrict ourselves to finite time, as opposed to unbounded or infinite discrete time as in \cite{ferrando} and \cite{ferrando2}, respectively). It then follows that the global notion of no-arbitrage, i.e. involving several time steps, can be reduced to the one step notion of no-arbitrage. This is a classical reduction in discrete and finite time and allows to concentrate our efforts in the local notion (i.e. involving one step into the future) of no-arbitrage.

We can now be more precise about our search for no-arbitrage preserving transformations.

\vspace{.1in}
\noindent
{\bf Definition} [Disperse sets]  
Consider a set $E \subseteq \mathbb{R}^d$; $E$ is called {\it disperse} if for each  $h \in \mathbb{R}^d$:
$    [h \cdot Y= 0~\forall ~Y \in E]~\mbox{or}~~[\inf_{Y \in E}  h \cdot Y < 0  ~~\&~~~\sup_{Y \in E}  h \cdot Y > 0]$,
where $h \cdot Y$ represents the euclidean inner product.

\vspace{.1in}
We prove in the paper (by means of  Proposition \ref{localDefinitions}, Proposition \ref{suffCondForArb-freeAnd0Neutral} and Definition \ref{trSpaceLocArbFree})
that the notion of a set being disperse is equivalent to the risky assets (one step time evolution) obeying  the no-arbitrage principle. Therefore, our original question on no-arbitrage preserving transformations becomes: characterize the set of transformations that leave the disperse property invariant. We are then in the context of the modern view of a geometry where we study the set of transformations that leave certain properties of a space of points invariant.

We work on a self contained framework for financial markets centered on a set of (mutidimensional) trajectories modeling a collection of assets with the possibility that any of them could play the role of a numeraire asset. No probability measures, filtrations, cardinality or topological assumptions are required of the trajectory set. The approach singles out local trajectory properties that can be used to consistently build an associated option price theory. The paper does not pursue this latter possibility as we focus on symmetry transformations (option pricing developments are in \cite{ferrando} for the $1$-dimensional case). Such trajectory properties have already made their appearance in the stochastic literature (\cite{dalang}, \cite{bender2}, \cite{jarrow}). To relate to the well established stochastic approach (\cite{follmer}) the reader could think that our paper concentrates on financial developments that only depend on the support of a given stochastic process independently of any possible probability distribution. Some connections with the stochastic setting are developed in \cite{ferrando},
the reference \cite{ferrando2} provides a first mathematical step
to extend some martingale notions from the standard setting to a trajectorial setting.

The subject of the paper is the study of fundamental symmetry transformations associated to the no-arbitrage principle in a non-probabilistic setting. Results are obtained with minimal assumptions and, in this way, providing a wider financial context for their availability.

We describe the contents of the paper.
Section \ref{sec:setting} introduces the setting which is centered on a trajectory space. Section \ref{sec:arby0neutral} studies the notions of no-arbitrage and $0$-neutrality (a weakening of no-arbitrage) in trajectory based markets. Section \ref{global&local} introduces local conditions (i.e. properties that are conditioned on a given state of affairs and involve one step into the future)  which are necessary  and sufficient to establish no-arbitrage and $0$-neutrality. These local conditions play the analogue role of the martingale condition in stochastic markets. Section \ref{geometricCharacterizations} develops purely geometric results in $\mathbb{R}^d$, independent of any financial setting, that form the backbone to derive the set of  no-arbitrage and $0$-neutral symmetries. 
Section \ref{sec:invarianceOfLocalProperties} characterizes two classes of transformations, one preserving  the local no-arbitrage property and the other class preserving the $0$-neutral property.  In particular, we prove that a change of numeraire belongs to both such classes of transformations. The uncovered transformations
can then be considered to be symmetries satisfied by price relationships and we provide an illustrative example.
Appendix \ref{proofsAndResults} provides proofs for results in the main body of the paper. Appendix \ref{app:conv} develops some results on convex analysis that we rely upon. Finally, we use the words arbitrage-free and no-arbitrage interchangeably.

\section{General Trajectorial Setting} \label{sec:setting}

We introduce the mathematical setting of a dynamic financial market with a finite number of assets whose initial prices are known. Uncertainty of future prices is given by a set of multidimensional sequences that we call trajectories. The trading strategies are given by portfolios that will be successively re-adjusted, taking into account the information available at each stage. The present paper is essentially self-contained We extend work presented in  \cite{degano} and \cite{ferrando}. The latter reference presents a non-probabilistic one-dimensional, discrete time, setting to price European options. The reference \cite{degano} provides examples and a computational algorithm to evaluate price bounds for European options. A detailed discussion and justification of why a trajectorial modeling approach is
worth studying is presented in \cite{ferrando} as well as in \cite{degano}.   The present paper extends the setting from those two papers to the multidimensional case.

There is empirical evidence suggesting that liquid markets do not allow for arbitrage opportunities. Therefore, and from a modeling point of view,
the {\it no-arbitrage principle} assumes that market models should not contain arbitrage strategies. The no-arbitrage assumption allows to develop a theory constraining relative prices. We remark in passing that under the weaker condition of $0$-neutrality (see Definition \ref{0Neutral} as well as \cite{ferrando}), it is possible to obtain well defined price bounds for European options.

More precisely, we consider a market with $d+1$ assets that evolve in a fixed time interval $[0,T]$. The model will be discrete in the sense that the trading instances are indexed by integer numbers. Given $s_0=(s_0^0,s_0^1,s_0^2,\dots,s_0^d) \in \mathbb{R}^{d+1}$, as initial prices of assets $S^k$, we will denote by a trajectory $S$, a sequence taking values in $\mathbb{R}^{d+1}$ such that
\[ S_i=(S^0_i,S^1_i,S^2_i,\dots,S^d_i)~~ \mbox{with}~~ S_0=s_0. \]

A portfolio will be a sequence of functions defined on the trajectory sets which we will denote by
\[ \Phi=(H^0,H)=\{(H^0_i,H^1_i, \dots, H^d_i)\}_{i \ge 0} .\]
Each coordinate $H^j_i$, $0 \le j \le d$ represents the portfolio holdings at stage $i$, for the $j$-th asset with $[H^j_i]= {\bf 1}_{S^j}$ (a unit of asset $S^j$). The asset values and the invested amounts can take values in general subsets of the real numbers.

The portfolio re-balancing stages may be triggered by arbitrary events of the market without the need to be directly associated with time. To incorporate this greater degree of generality we will add a new source of uncertainty to the trajectories' coordinates (these additional coordinates are relevant when constructing specific models). We will denote them by $W = \{W_i\}_{i \ge 0}$, the $W_i$ can be vector valued and take values in arbitrary sets. In financial terms, this new variable can represent any observable value of interest, such as volume of transactions, time, quadratic variation of trajectories, etc., as in  \cite{ferrando19}.

In case one intends to price financial derivatives in the proposed setting, we add a finite time horizon $T$.  We will  use a positive integer $m$, to indicate the stage at which the trajectory reaches the time $T$. This new variable plays a key role in calculating the fair price interval for options, although it does not intervene in the market properties.

\begin{definition}[Trajectory set]  \label{trajectories}
Consider  $\Sigma =\{\Sigma_i\}$  a given family of subsets of $\mathbb{R}^{d+1}$, $\Omega=\{ \Omega_i \}$ is a family of sets and $\Theta \subseteq \mathbb{N}$.
For given $s_0 \in \mathbb{R}^{d+1}$ and $w_0 \in \Omega_0$, a \emph{trajectory based set} $\mathcal{S}$ is a subset of
\begin{equation}  \nonumber
\mathcal{S}_{\infty}(s_0,w_0) \equiv \left\lbrace \se = \{{\bf S}_i \equiv (S_i,W_i,m)\}_{i \geq 0}: ~S_i \in \Sigma_i, ~ W_i \in \Omega_i,~m \in \Theta \right\rbrace,
\end{equation}
such that $(S_0,W_0)= (s_0,w_0)$. The elements of $\mathcal{S}$  will be called \emph{trajectories}.
\end{definition}

It is important to note that if $\tilde{{\bf S}} = \{(\tilde{S}_i, \tilde{W}_i, \tilde{m}) \} $ and $ \hat{{\bf S}} = \{(\hat{S}_i, \hat{W}_i, \hat{m}) \} $ are two trajectories, $ \tilde{\se}_i $ could unfold at a different time than $ \hat{\se}_i $. That is, the index $i$ will be associated with portfolio re-balances stages but they will not be necessarily associated to (uniform) time. It is only assumed that the stage $i + 1$ occurs temporarily after the  stage $ i $.

We define $M: \Se \rightarrow \mathbb{N} $ as the projection on the third coordinate of $ \se $, that is: $ M(\se) = m $. The results and properties that appear in this section only involve the first coordinate $ S_i $ nonetheless, we will continue using the notation that includes the coordinates $W_i$ for consistency.

To build an adequate market model, we are going to require that any portfolio be non-anticipative. The non-anticipativity of the portfolios expresses the fact that investments must be made at the beginning of each period, so that they can not anticipate specific future price changes.

\begin{definition}[Portfolio]  \label{locallyDefinedPortfolios}
Let $\Se$ be a trajectory  set, a \emph{portfolio} $\Phi$ is a sequence of (pairs of) functions $\Phi \equiv \{(H^0_i, H_i)\}_{i\geq 0}$ with
$H^0_i:\mathcal{S} \rightarrow \mathbb{R}$ and $H_i: \mathcal{S} \rightarrow \mathbb{R}^d$ such that for all $\se, \se' \in \mathcal{S}$, with $\se'_i = \se_i$ for all $0 \leq i \leq k$, where $k < \min\{M(\se),M(\se')\}$, then $\Phi_k(\se) = \Phi_k(\se')$.
\end{definition}

For a portfolio $ \Phi $, $ H^j_i (\se) $ represents the number of units held for the $j$-th asset during the period between $ i $ and $ i + 1 $. Therefore, $ H^j_i(\se)~ S ^ j_i $ is the {\it value}  invested in the $ j $-th asset at stage $ i $, while $ H^j_i(\se)~ S^ j_{i + 1} $ is the value  just before rebalancing at the end of the period. So, the total value of the portfolio $ \Phi $ at the beginning of the period $ i  $ is
\begin{equation} \nonumber
H^0_i (\se) S^ 0_i + H_i (\se) \cdot S_i 
\equiv H^0_i (\se) S^0_i + \sum_{j = 1}^d H^j_i(\se)~ S^ j_i,
\end{equation}
and at the end of the period, the value of $ \Phi $ will change to
\begin{equation} \nonumber
H^0_i (\se) S^0_{i + 1} + H_i(\se) \cdot S_{i + 1} = H^0_i(\se)~ S^0_{i + 1} + \sum_{j = 1}^d H^j_i(\se)~ S^j_{i + 1}.
\end{equation}

In the next re-balancing, the investor will invest $\Phi_{i + 1}$; in general, $ H^0_{i + 1}(\se)~ S^0_{i + 1} + H_{i + 1}(\se) \cdot S_{i + 1} $ may be different from $ H^0_{i}(\se) ~S^0_{i + 1} + H_{i}(\se) \cdot S_{i + 1} $. In this latter case, it follows that some units of the assets were added or removed, without replacement, from the portfolio. However, this situation is precluded for many applications. For example, if the goal is to look for a ``fair'' price for a certain financial contract, this value should be the minimum necessary to cover the obligations generated by the contract, that is, any injection or withdrawal of money will affect this property. This reasoning justifies the use of the following concept.

\begin{definition}[Self-financing portfolio]
A portfolio $\Phi$ is called \emph{self-financing} if for all $\se \in \mathcal{S}$ and $i \geq 0$,
\begin{equation} \label{selfFinancing}
H^0_{i}(\se)~S^0_{i+1}+ H_{i}(\se) \cdot S_{i+1} = H^0_{i+1}(\se)~S^0_{i+1}+ H_{i+1}(\se) \cdot S_{i+1}.
\end{equation}
\end{definition}

The self-financing property means that the portfolio is re-balanced in such a way that its value is preserved. From this property it is clear that the accumulated gains and losses resulting from price fluctuations are the only sources of variation of the portfolio:
\begin{equation}  \label{portfolioValue}
H^0_{k}(\se)~S^0_{k}+ H_{k}(\se) \cdot S_{k} = H^0_{0}~S^0_{0}+ H_{0} \cdot S_{0} + \sum_{i=0}^{k-1} \left( H^0_{i}(\se)~\Delta_i S^0 + H_{i}(\se) \cdot \Delta_i S \right),
\end{equation}
for $ k \ge 0 $, where $ \Delta_i S ^ 0 = S ^ 0_ {i + 1} -S ^ 0_i $ and $ \Delta_i S = S_ {i + 1} -S_i $. The value $ H^0_{0}~ S ^ 0_{0} + H_{0} \cdot S_{0} $ represents the initial investment corresponding to the portfolio coordinate $\Phi_0 $.

We will mention below some examples of strategies that will be used later.

\begin{ej} \label{ej:estrategias}
\begin{enumerate}
\item The null portfolio $\Phi=0$,
\[ 0(\se)=\{(0,{\bf 0})\}_{i \ge 0} \mbox{ for all } \se \in \Se \]
where ${\bf 0}$ is the null vector of $\mathbb{R}^d$.
\item Set $h \in \mathbb{R}$ and ${\bf h} \in \mathbb{R}^d$, we will define by constant portfolio $\Phi={\bf h}$ by
\[ {\bf h}(\se)=\{(h,{\bf h})\}_{i \ge 0} \mbox{ for all } \se \in \Se. \]
\item Set $\Phi = \{(H^0_i, H_i)\}_{i\geq 0}$ a self-financing portfolio. We will denote by $-\Phi$ to the sequence of functions $\{(-H^0_i, -H_i)\}_{i\geq 0}$. It is easy to see that $-\Phi$ is a  self-financing portfolio.

\item Set $\hat{\Phi}=\{(\hat{H}^0_i, \hat{H}_i)\}_{i\geq 0}$ and $\tilde{\Phi}=\{(\tilde{H}^0_i, \tilde{H}_i)\}_{i\geq 0}$ two portfolios. We define $\Phi \equiv \hat{\Phi} + \tilde{\Phi}$ to be the sequence
\[ \Phi= \{ \hat{H}^0_i+ \tilde{H}^0_i,\hat{H}_i+\tilde{H}_i \}_{i \ge 0}.\]
\end{enumerate}
\end{ej}

\subsection{Numeraire}
\label{currencyAndNumeraire}

\vspace{.1in}
To be definite, we will consider that the real numbers $S^k_i$ express the price of asset $S^k$ in a common currency, a unit of which we denote generically by $\$$. That is, in terms of dimensions $[S^k_i]= \$/{\bf 1}_{S^k}$ where ${\bf 1}_{S^k}$ is one unit of asset $S^k$ (it is well known that an algebra of dimensions is available through dimensional analysis as in \cite{whitney}).
Notice that $[H^k_i]={\bf 1}_{S^k}$.
 On the other hand, for financial reasons, it is important to work with an arbitrary reference asset; this is achieved by taking a reference asset as \emph{num\'eraire}. For example, in some cases it is useful to select the value of a bank account as num\'eraire.

Toward this end, we will assume from here onward that $ S ^ 0_i> 0 $ for all $ i \ge 0 $. This hypothesis will allow us to use $ S^0 $ as num\'eraire. For each $ \se \in \Se $, we will build a sequence of {\it relative prices} ${\bf X}(\se) = \{(X(S_i), W_i, m) \} _ {i \ge 0} $ where $X: D \subseteq \mathbb{R}^{d + 1} \rightarrow \mathbb{R}^ d $ is a {\it perspective} function defined by
\begin{equation} \label{eqn:X}
X(s) \equiv \left(\frac{s^{1}}{s^0},\dots,\frac{s^{d}}{s^0}\right), \quad D\equiv\{ s=(s^0,\dots,s^d) \in \mathbb{R}^{d+1}:s^0>0\}.
\end{equation}
The numerical value of $ X^j(S_i)$ (i.e. stripped from its units), is the number of units of the asset $S^0$, now the num\'eraire, which are required to acquire one unit of the $S^j$ asset.
\begin{remark}
Notice the above definition of $X$ singles out $s^0$ but of course any other coordinate could be used (relying on the $0$-component simplifies the notation). In fact, and for more generality, one could replace $s^0$ by a linear map $B(s)>0$ on $s^k >0$. We do not pursue here this possibility but our results will apply to such numeraire by just moving to a new trajectory market with $s^0= B(s)$.
\end{remark}

Given $ \se \in \Se $ and $ k \ge 0 $, we will denote by $ V^{\Phi}_k (\se) $ the relative value of the portfolio $ \Phi \in \He $ given by
\begin{equation} \nonumber
V^{\Phi}_k(\se) \equiv H^0_k(\se)+H_k(\se) \cdot X(S_k).
\end{equation}
Clearly $ V^{\Phi}_k (\se) = \frac{\Phi_k(\se) \cdot S_k} {S^0_k} $,
then $ V^{\Phi}_k (\se) $ can be interpreted as the value of the portfolio at the beginning of the stage $ k $ expressed in units of the num\'eraire. In addition,  $ G^{\Phi}_k (\se)$ will denote the profits generated up to the stage $ k $ associated with $ \Phi \in \He $ for a trajectory $ \se \in \Se $, that is
\begin{equation} \label{eqn:G}
G^{\Phi}_k(\se) \equiv  \sum_{i=0}^{k-1} H_i(\se) \cdot \Delta_i X(S) ~ \mbox{for}~~ k \ge 0 ~~\mbox{where}~~ \Delta_i X(S) = X(S_{i + 1}) - X(S_i).
\end{equation}
 $ G^{\Phi}_k (\se) $ reflects, in terms of the num\'eraire, the net gains accumulated by the portfolio $\Phi$ at the beginning of the $ k $-th stage.

A self-financing portfolio for a path $\se \in \Se $ will also be self-financing for the $ X(\se) $ sequence.

\begin{proposition} \label{prop:auto}
  Let $ \Se $ be a space of trajectories, and let $ \Phi $ be a portfolio on $\Se$. Then the following statements are equivalent:
  \begin{enumerate}
    \item $\Phi$ is self-financing.
    \item $H^0_{i-1}(\se) + H_{i-1}(\se) \cdot X(S_i)=H^0_i(\se) + H_{i}(\se) \cdot X(S_i)$ for all $\se \in \Se$ and $i \ge 0$.
    \item $\displaystyle V^{\Phi}_k(\se)=V^{\Phi}_0+G^{\Phi}_k(\se)=H^0_0 + H_0 \cdot X(S_0) + \sum_{i=0}^{k-1} H_i(\se) \cdot \Delta_i X(S)$ for all $k \ge 0$.
  \end{enumerate}
\end{proposition}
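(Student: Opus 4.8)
The plan is to prove the two equivalences $(1)\Leftrightarrow(2)$ and $(2)\Leftrightarrow(3)$ separately; chained together they give the full statement. The single structural fact driving everything is the standing assumption $S^0_i>0$, which lets us pass freely between the ``absolute'' self-financing identity \eqref{selfFinancing} and its relative-price form by dividing through by the numeraire. No probabilistic or topological input is needed, so the argument is purely algebraic.

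For $(1)\Leftrightarrow(2)$, I would start from the defining equation of self-financing,
\[ H^0_i(\se)\,S^0_{i+1} + H_i(\se)\cdot S_{i+1} = H^0_{i+1}(\se)\,S^0_{i+1} + H_{i+1}(\se)\cdot S_{i+1}, \]
and divide both sides by $S^0_{i+1}>0$. Since $X(S_{i+1}) = S_{i+1}/S^0_{i+1}$ by the definition \eqref{eqn:X} of the perspective map, this produces exactly statement $(2)$ after the index shift $i\mapsto i+1$. Because division by a strictly positive quantity is reversible, the implication runs in both directions and nothing further is required.

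For $(2)\Leftrightarrow(3)$, the idea is a telescoping argument on the relative value $V^\Phi_k$. Assuming $(2)$, I would compute the one-step increment $V^\Phi_{i+1}(\se) - V^\Phi_i(\se)$ and use the relative self-financing identity to replace $H^0_{i+1}(\se) + H_{i+1}(\se)\cdot X(S_{i+1})$ by $H^0_i(\se) + H_i(\se)\cdot X(S_{i+1})$; the increment then collapses to $H_i(\se)\cdot\Delta_i X(S)$. Summing from $i=0$ to $k-1$ telescopes the left side to $V^\Phi_k(\se) - V^\Phi_0$ and the right side to $G^\Phi_k(\se)$, which is precisely $(3)$. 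Conversely, given $(3)$ for every $k$, subtracting the identity at stage $k$ from the one at stage $k+1$ isolates the per-step increment $H_k(\se)\cdot\Delta_k X(S)$, and re-expanding $V^\Phi_{k+1}-V^\Phi_k$ returns statement $(2)$.

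I do not expect a genuine obstacle: the content is linear algebra together with a telescoping sum, and the only point requiring care is the index bookkeeping between the self-financing convention (comparing stages $i$ and $i+1$) and the relative identity in $(2)$ (comparing stages $i-1$ and $i$), along with the harmless boundary case at $i=0$. The one conceptual step worth flagging is that positivity of $S^0$ is exactly what upgrades $(1)\Leftrightarrow(2)$ from a one-way implication to a genuine equivalence.
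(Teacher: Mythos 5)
Your proof is correct and is essentially the paper's argument: the paper simply cites Proposition 5.7 of F\"ollmer--Schied and notes that ``the same idea'' applies, and that idea is exactly what you spell out --- divide the self-financing identity \eqref{selfFinancing} by the strictly positive numeraire $S^0_{i+1}$ to get the relative form, then telescope the one-step increments of $V^\Phi$. Your index bookkeeping (the shift $i\mapsto i+1$ between \eqref{selfFinancing} and statement $(2)$, and the boundary case) is handled correctly, so nothing is missing.
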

\begin{proof}
Note that Proposition 5.7 of \cite{follmer} is valid even in cases where we do not have a market indexed by pre-set time stages. Therefore the same idea used in that result applies to our setting.
\end{proof}

\begin{remark}\label{H^0}
From the previous Proposition, we know that the $H^0$ component of a self-financed portfolio $ \Phi $ satisfies
  \begin{equation} \label{eqn:defH1}
  H^0_k(\se)-H^0_{k-1}(\se)=-(H_{k}(\se)-H_{k-1}(\se)) \cdot X(S_k).
  \end{equation}
  Given that
  \begin{equation} \label{eqn:defH2}
  H^0_0=V^{\Phi}_0-H_0\cdot X(S_0),
  \end{equation}
  the sequence $ H^0 $ is completely determined by the initial investment $ V^{\Phi}_0 $ and $ H $ by means of the previous equations.
\end{remark}
\begin{remark}\label{H simplificado} For a given set of portfolios
$\He$, in virtue of Remark \ref{H^0}, and display (\ref{eqn:G}) (which depends on $\Phi=(H^0,H)$, just through $H$) we will set the definition
\begin{equation}  \label{displaySoDefinitionIsNotLost}
\He_{\Se}\equiv\{H: (H^0,H)\in\He\}
\end{equation}
for later use.
\end{remark}

\begin{definition}[Trajectory  market]  \label{discreteMarkets}
Given $s_0 \in \mathbb{R}^{d+1}$, $w_0 \in \Omega_0$, a trajectory based set $\mathcal{S} \subseteq \mathcal{S}_{\infty}(s_0,w_0)$ and a portfolio set $\mathcal{H}$, we say that $\mathcal{M} = \mathcal{S} \times \mathcal{H}$ is a \emph{trajectory based market} if it satisfies the following properties:
\begin{enumerate}
\item For each $\se \in \Se$, the coordinate $S^0_i >0$ for all $i \ge 0$.
\item  All $\Phi \in \He$ are self-financing and $\Phi=0$ belongs to $\mathcal{H}$.
\item For each $(\se, \Phi) \in \mathcal{M}$ there exists $N_{\Phi}(\se) \in \mathbb{N}$ such that $\Phi_{k}(\se) = \Phi_{N_{\Phi}}(\se) =0$ for all $k \geq N_{\Phi}(\se)$.
\end{enumerate}
We will say that the market is \emph{semi-bounded} if for each $ \Phi \in \He $ there is $ n_ {\Phi} \in \mathbb{N} $ such that $ N_ {\Phi}(\se) \le n_{\Phi} $ for all $ \se \in \Se $ and it is \emph{$ n $-bounded}, for $ n \in \mathbb{N} $, if $ N_{\Phi} (\se) \le M (\se) \le n $ for each pair $ (\se, \Phi) \in \Me $. A portfolio set $\mathcal{H}$ obeying items $(2)$ and $(3)$ above will be called {\it admissible}.
\end{definition}

The third property of the previous definition states that the adjustments of the portfolio $\Phi$ for a trajectory $ \se $ will end at, or before, the stage $ N_{\Phi} (\se)-1 $, which means that the portfolio was liquidated on, or before, the period $ N_{\Phi} (\se) $. In this case, the corresponding portfolio will be called {\it liquidated}.

The above setting incorporates, as a special case, a discrete time stochastic model.  Given a process $Y= \{Y_i= (Y_i^0, \ldots, Y_i^d)\}_{i \geq 0}$ on a probability space $(\Omega, P)$
with filtration $\mathcal{F}= \{\mathcal{F}_i\}_{i \geq 0}$ and $\mathcal{F}_0$ trivial,
$Y_i^k: \Omega \rightarrow \mathbb{R}$, $Y_i^k \in \mathcal{F}_i$.
We can then define $\mathcal{S}= \{{\bf S} = \{(S_i \equiv Y_i(\omega))\}_{i \geq 0}: \mbox{for some }~~\omega  \in \Omega\}$. One can also define
a set of trajectories $\mathcal{S}$ by means of
a sequence of admissible stopping times $\tau = \{\tau_i\}_{i \geq 0}$: ${\bf S} \in \mathcal{S}$ then ${\bf S} = \{(S_i \equiv Y_{\tau_i(\omega)}(\omega))\}_{i \geq 0}$ for some $\omega \in \Omega$. Another way to proceed is to use a given collection of such sequences of stopping times, for the details we refer to Section 6 in \cite{ferrando}.

\section{Arbitrage and 0-Neutrality} \label{sec:arby0neutral}

A model for common market situations should not allow for investors that are able to generate a profit in a transaction without any risk/possibility of losing money.  Such an investment opportunity is called an arbitrage opportunity.

\begin{definition}[Arbitrage opportunity] \label{ArbitrageDefinition}
Given a trajectory based market $\mathcal{M}=\Se \times \He$, $\Phi \in \mathcal{H}$ is an \emph{arbitrage opportunity} if:
\begin{itemize}
\item $\forall \se \in \mathcal{S}$,  $V^{\Phi}_{N_{\Phi}}(\se) \geq V^{\Phi}_0$.
\item $\exists \se^{\ast} \in \Se$ such that $V^{\Phi}_{N_{\Phi}}(\se^{\ast}) > V^{\Phi}_0$.
\end{itemize}
We say that $\mathcal{M}$ is \emph{arbitrage-free} if $\mathcal{H}$ does not contain arbitrage opportunities.
\end{definition}

The particular case $S_i^0=1$, for all $i$, gives $X(S_i)=(S^1_i,\ldots,S^d_i)$; i.e. the original currency is the asset $S^0$ and is being used as  numeraire and so $[S_i^0] = \$/\$$. Currency, if included as a traded asset and in the presence of a riskless bank account with non-zero interest rates, will lead to an arbitrage as per Definition \ref{ArbitrageDefinition}. That is, currency, under the mentioned conditions will be banned as a traded asset whenever we assume a no arbitrage market (as well as a $0$-neutral market). Notice the relevant discussion in \cite{vecer} about arbitrage and non-arbitrage assets, currency being an arbitrage asset (as contrasted to an interest bearing money market account).

Our use of an arbitrary value for $V^{\Phi}_0$ in the definition of an arbitrage opportunity is nonstandard, textbook definitions require $V^{\Phi}_0 \leq 0$ (see \cite{follmer}).
One can see that the existence of an arbitrage as per Definition \ref{ArbitrageDefinition} is equivalent to the existence of an
arbitrage portfolio $\tilde{\Phi}$ with $V^{\tilde{\Phi}}_0= 0$ and so proving the equivalence of our definition and the standard definition. This equivalence allows also to show that Definition \ref{ArbitrageDefinition} is  invariant under a change of numeraire and so the latter transformation will be a no arbitrage symmetry according to our definitions, this we show explicitly in Corollary \ref{cor:numeraire}.

The arbitrage-free condition is sufficient for the model to provide fair option prices (a well known result in the classical financial literature.)
One can relax the arbitrage free criteria to the requirement that the largest of the minimum possible gains that can be obtained by means of the strategies available in the market is $0$. This notion was originally presented in \cite{BJN} (as equivalent with arbitrage-free) and then formally defined and clarified for the case of a single risky asset in \cite{ferrando} and \cite{degano}.

\begin{definition}[$0$-neutral market]  \label{0Neutral}
Let $\Me=\Se \times \He$ be a trajectory based market. We say that $\mathcal{M}$ is \emph{0-neutral} if
\begin{equation} \nonumber
  \sup_{\Phi \in \He} \left\{\inf_{\se \in \Se}~ G^{\Phi}_{N_{\Phi}}(\se) \right\} = \sup_{\Phi \in \He} \left\{\inf_{\se \in \Se} \left[ \sum_{i=0}^{N_{\Phi}(\se)-1} H_i(\se) \cdot \Delta_i X(S) \right]\right\} = 0.
\end{equation}
\end{definition}

In \cite{ferrando} it is shown that this property is also sufficient to obtain a pricing interval for financial derivatives. The next Proposition shows that $0$-neutrality is weaker than arbitrage-free.

\begin{proposition} \label{prop:0neutral}
Let $\Me=\Se \times \He$ be  an arbitrage-free trajectory based market. Then, $\Me$ is $0$-neutral.
\end{proposition}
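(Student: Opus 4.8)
The plan is to establish the single equation in Definition \ref{0Neutral} by proving the two inequalities $\sup_{\Phi \in \He} \inf_{\se \in \Se} G^{\Phi}_{N_{\Phi}}(\se) \ge 0$ and $\sup_{\Phi \in \He} \inf_{\se \in \Se} G^{\Phi}_{N_{\Phi}}(\se) \le 0$ separately. The bridge between the two definitions is Proposition \ref{prop:auto}(3), which for a self-financing $\Phi$ gives $V^{\Phi}_{N_{\Phi}}(\se) = V^{\Phi}_0 + G^{\Phi}_{N_{\Phi}}(\se)$. Hence the two arbitrage conditions of Definition \ref{ArbitrageDefinition} translate verbatim into conditions on the gain functional, namely $G^{\Phi}_{N_{\Phi}}(\se) \ge 0$ for all $\se \in \Se$ together with $G^{\Phi}_{N_{\Phi}}(\se^{\ast}) > 0$ for some $\se^{\ast} \in \Se$.

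For the lower bound I would invoke the null portfolio $\Phi = 0 \in \He$, available from item $(2)$ of Definition \ref{discreteMarkets} (see also Example \ref{ej:estrategias}). Since its $H$ component vanishes identically, formula (\ref{eqn:G}) gives $G^{0}_{N_0}(\se) = 0$ for every $\se$, so $\inf_{\se \in \Se} G^0_{N_0}(\se) = 0$, and therefore the supremum over $\He$ is at least $0$.

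For the upper bound I would argue by contradiction. Suppose $\sup_{\Phi \in \He} \inf_{\se \in \Se} G^{\Phi}_{N_{\Phi}}(\se) > 0$. Then there is a single portfolio $\Phi \in \He$ with $c \equiv \inf_{\se \in \Se} G^{\Phi}_{N_{\Phi}}(\se) > 0$, whence $G^{\Phi}_{N_{\Phi}}(\se) \ge c > 0$ for every $\se \in \Se$. This immediately verifies the first arbitrage condition, and, choosing any trajectory $\se^{\ast} \in \Se$, also the second (strict) one, so $\Phi$ is an arbitrage opportunity, contradicting that $\Me$ is arbitrage-free. Hence the supremum is $\le 0$, and combined with the lower bound it equals $0$.

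The computation is short; the only points requiring care are the passage from ``$\sup > 0$'' to the existence of an individual witnessing portfolio $\Phi$ (valid because a supremum exceeding $0$ forces some member of the indexed family to exceed $0$) and the tacit assumption that $\Se \neq \emptyset$, needed to produce the trajectory $\se^{\ast}$ realizing the strict inequality. Neither is a genuine obstacle, so I expect the crux to be simply recognizing that Proposition \ref{prop:auto} makes the two definitions line up so cleanly that arbitrage-freeness collapses directly into the $0$-neutral identity.
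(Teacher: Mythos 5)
Your proof is correct and follows essentially the same route as the paper's: the paper likewise uses the null portfolio ($0 \in \He$) to get $\sup_{\Phi \in \He} \inf_{\se \in \Se} G^{\Phi}_{N_{\Phi}}(\se) \ge 0$ and then argues by contraposition that failure of $0$-neutrality yields a portfolio with $\inf_{\se \in \Se} G^{\Phi}_{N_{\Phi}}(\se) > 0$, which via Proposition \ref{prop:auto} is an arbitrage opportunity. Your explicit remarks on extracting a witnessing portfolio and on $\Se \neq \emptyset$ are fine points the paper leaves tacit, but the substance is identical.
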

\begin{proof}
We are going to prove it by contraposition. Note that if $\Me= \Se \times \He$ is a trajectory based market, $0 \in \He$, then it is always true that
\[  \sup_{\Phi \in \He}\{ \inf_{\se \in \Se}~ G^{\Phi}_{N_{\Phi}}(\se)\} \ge 0.\]
That is, if $\Me$ is not $0$-neutral, there exist a portfolio $\Phi$ such that
\[  \inf_{\se \in \Se} ~G^{\Phi}_{N_{\Phi}}(\se) > 0 .\]
Thus $G^{\Phi}_{N_{\Phi}}(\se) > 0$  for all $\se \in \Se$. Then
\[ V^{\Phi}_{N_{\Phi}}(\se) = V^{\Phi}_0 + G^{\Phi}_{N_{\Phi}}(\se) > V^{\Phi}_0\]
for all $\se \in \Se$. Then $\Phi$ is an arbitrage porfolio.
\end{proof}

It is clear how to generate simple examples of $0$-neutral markets which
contain arbitrage (see \cite{ferrando} and \cite{degano}).
Following \cite[Section 1.2]{pliska} it is possible to define another properties of the market, closely related to $0$-neutral and arbitarge-free, namely \emph{dominant portfolios} and \emph{law of one price}. Under the appropriate hypotheses, the following chain of implications for a trajectory based market holds:
\[ \boxed{ \mbox{Arbitrage-free} \Rightarrow \mbox{No dominant portfolios} \Rightarrow \mbox{$0$-neutral}\Rightarrow \mbox{Law of one price.}}\]

At this point, we have introduced enough properties of multidimensional trajectory markets in order to address our goal of characterizing no-arbitrage symmetry transformations.

\subsection{Relationships Between Local and Global Properties}\label{global&local}

From the definitions, it is not clear how to construct arbitrage-free or $0$-neutral markets. For the case of semi-bounded markets, one can obtain necessary and sufficient conditions, only involving local properties of the trajectory set, implying
trajectorial markets that are arbitrage-free (or $0$-neutral). Such characterizations play an analogous role to the equivalence of no arbitrage stochastic markets and the possibility to equivalently modify the stochastic process into a martingale process. In fact, in the arbitrage-free case the local trajectorial conditions correspond to a probability free notion of a martingale sequence (see \cite{ferrando2}). We will use these characterizations to pose and answer our opening question on no-arbitrage preserving transformations.

At the $ k$-th stage, the information about the future available to investors is that $ \se $ is an element of the set
\begin{equation} \nonumber
\mathcal{S}_{(\se,k)}\equiv \left\lbrace \se' \in \mathcal{S}: \se'_i= \se_i, 0 \le i \le k \mbox{ and } M(\se') > k \right\rbrace \subseteq \Se.
\end{equation}
We will call the pair $(\se,k)$ a \emph{node} and will refer to the set $\Se_{(\se,k)}$ as \emph{trajectory set conditioned at the node $(\se,k)$}. The future information contained in $ \tilde{\se} \in \Se_{(\se, k)} $ depends on the past only through ${\bf S}_0, \ldots, {\bf S}_k$.
The multiple number of trajectories  emanating from a node
reflects the non-deterministic nature of the assets' time evolution.
As trajectories unfold more coordinates become available and so the investor increases his knowledge about possible future scenarios. This is expressed mathematically as
\[ \Se_{(\se,k')} \subseteq \Se_{(\se,k)}\]
for $k'>k$. The following notation will also be used
\begin{equation}\label{eqn:conjdelta}
  \Delta X(\Se_{(\se,k)}) \equiv \{ \Delta_k X(S'):~ \se' \in \Se_{(\se,k)} \} \subseteq \mathbb{R}^{d},
\end{equation}
where $\Delta_k X(S')=X(S'_{k+1})-X(S'_k)$ has been introduced before.

\vspace{.1in}
We will refer as {\it local} to any property relative to a node $ (\se, k) $ and only involving elements of $\Delta X(\Se_{(\se,k)})$.

The definitions below, are the local counterpart of those of arbitrage-free and $0$-neutral for the whole market. We are then going to derive the global properties from the local ones.

\begin{definition}[Local notions] \label{localWithRespectToH}
  Given a trajectory based market $\Me= \Se \times \He$, let $\se \in \Se$ and $k \ge 0$.
\begin{enumerate}
  \item $(\se, k)$ is called an {\it arbitrage-free node {\it with respect to} $\mathcal{H}$} if
  \[
  [H_k(\se) \cdot \Delta_k X(S')=0 \quad \forall \se' \in \Se_{(\se,k)}]~ \mbox{or}~ [\inf_{\se' \in \Se_{(\se,k)}} H_k(\se) \cdot \Delta_k X(S') < 0],
  \]
  for all $H \in \He_{\Se}$ (the latter as in (\ref{displaySoDefinitionIsNotLost})).
 \item $(\se, k)$ is called a {\it $0$-neutral node  with respect to} $\mathcal{H}$ if, for all $H \in \He_{\Se}:$
  \[
  \inf_{\se' \in \Se_{(\se,k)}} H_k(\se) \cdot \Delta_k X(S') \le 0.
  \]
\end{enumerate}
  $\Me$ is called {\it locally arbitrage-free ($0$-neutral)} if each $(\se, k)$ is an arbitrage-free ($0$-neutral) node w.r.t.~$\He$. A node that is not arbitrage-free w.r.t.~$\He$, will be called an {\it arbitrage node} w.r.t.~$\He$.
\end{definition}

Notice that an arbitrage-free node w.r.t.~$\He$ is always $0$-neutral w.r.t.~$\He$. Clearly, there are natural examples of nodes which are $0$-neutral w.r.t.~$\He$ but no arbitrage-free w.r.t.~$\He$ (hence these are arbitrage nodes). It is then of interest to indicate that there are results (\cite{ferrando}) that justify option prices obtained for general $0$-neutral markets (in particular these markets may contain $0$-neutral nodes which are arbitrage nodes w.r.t.~$\He$).

Admittedly, attaching the qualifier ``w.r.t. $\mathcal{H}$" to some of the above notions does not play a substantial role in the paper. In fact,
Proposition \ref{suffCondForArb-freeAnd0Neutral} below provides sufficient conditions on trajectory nodes that imply that those nodes are arbitrage-free ($0$-neutral) w.r.t.~{\it any} (admissible) $\He$.

The conclusions in Proposition \ref{suffCondForArb-freeAnd0Neutral} below are consequences of characterizations given by Propositions \ref{localDefinitions} and \ref{localDefinitionsII} in Section \ref{local geometric section}.

\begin{proposition}\label{suffCondForArb-freeAnd0Neutral}
Given a trajectory set $\Se$, consider a
node $(\se,k)$.

\begin{enumerate}
\item If:
\begin{equation} \label{arbitrage-freeNode}
0 \in \mathrm{ri}\left(\mathrm{co}\left(\Delta X(\Se_{(\se,k)})\right)\right).
\end{equation}
then $(\se,k)$ is an arbitrage-free node w.r.t. any (admissible) $\He$.

\item If:
\begin{equation} \label{0-neutral node}
0 \in \mathrm{cl}\left(\mathrm{co}\left(\Delta X(\Se_{(\se,k)})\right)\right).
\end{equation}
then $(\se,k)$ is a $0$-neutral node w.r.t. any (admissible) $\He$.
\end{enumerate}
\end{proposition}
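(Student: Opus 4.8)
The plan is to reduce both statements to pure convex geometry in $\mathbb{R}^d$. Write $E \equiv \Delta X(\Se_{(\se,k)})$ and $C \equiv \mathrm{co}(E)$, and observe that for a \emph{fixed} portfolio the only datum of $H \in \He_{\Se}$ entering the two node conditions of Definition \ref{localWithRespectToH} is the vector $h \equiv H_k(\se) \in \mathbb{R}^d$. Since the conclusions are asserted for \emph{any} admissible $\He$, and the node conditions see $H$ only through $h = H_k(\se)$, it suffices to verify, under each hypothesis, that the relevant statement holds for \emph{every} $h \in \mathbb{R}^d$; this automatically covers whatever value $H_k(\se)$ happens to take in an arbitrary admissible $\He$. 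The elementary fact underpinning everything is that a linear functional has the same infimum over a set, over its convex hull, and over the closure of that hull, i.e. $\inf_{Y \in E} h \cdot Y = \inf_{Y \in C} h \cdot Y = \inf_{Y \in \mathrm{cl}(C)} h \cdot Y$.

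For the $0$-neutral statement (item (2)) I would argue directly: assuming $0 \in \mathrm{cl}(C)$, for any $h$ one has $\inf_{Y \in E} h \cdot Y = \inf_{Y \in \mathrm{cl}(C)} h \cdot Y \le h \cdot 0 = 0$, which is exactly the defining inequality of a $0$-neutral node with respect to $\He$. This part needs nothing beyond the invariance of the infimum just noted.

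For the arbitrage-free statement (item (1)) I would fix $h \in \mathbb{R}^d$ and split into the two alternatives of the definition. If $h \cdot Y = 0$ for all $Y \in E$, the first alternative holds. Otherwise there is $Y_0 \in E \subseteq C$ with $h \cdot Y_0 \ne 0$, and I claim $\inf_{Y \in E} h \cdot Y < 0$. Suppose not, so that $h \cdot Y \ge 0$ for all $Y \in C$; then $h \cdot Y_0 > 0$. Here I would invoke the prolongation (line-segment) characterization of the relative interior: since $0 \in \mathrm{ri}(C)$ and $Y_0 \in C$, the segment from $Y_0$ through $0$ extends slightly past $0$ while staying in $C$, so there is $\lambda > 0$ with $-\lambda Y_0 \in C$. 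Then $h \cdot (-\lambda Y_0) = -\lambda\,(h \cdot Y_0) < 0$, contradicting $h \cdot Y \ge 0$ on $C$. Hence $\inf_{Y \in E} h \cdot Y < 0$, the second alternative holds, and $(\se,k)$ is arbitrage-free with respect to $\He$.

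The only non-routine ingredient is the prolongation property of $\mathrm{ri}(C)$ used in item (1); this is the standard characterization of the relative interior (available in Appendix \ref{app:conv}), and it is exactly where the \emph{relative interior} hypothesis, rather than mere membership in $C$ or in $\mathrm{cl}(C)$, becomes essential, mirroring the gap between the arbitrage-free and the $0$-neutral conclusions. Alternatively, one may route the whole argument through the reformulation that a node is arbitrage-free (resp.\ $0$-neutral) with respect to any $\He$ precisely when $E$ is disperse (resp.\ when $0 \in \mathrm{cl}(\mathrm{co}(E))$), which is the content of the cited Propositions \ref{localDefinitions} and \ref{localDefinitionsII}, after which (1) and (2) are exactly the two implications proved above.
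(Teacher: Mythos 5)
Your proof is correct. The paper does not argue Proposition \ref{suffCondForArb-freeAnd0Neutral} directly: it declares both items to be consequences of the full equivalences in Propositions \ref{localDefinitions} and \ref{localDefinitionsII} ($E$ disperse iff $0 \in \mathrm{ri}(\mathrm{co}(E))$; $E$ $0$-neutral iff $0 \in \mathrm{cl}(\mathrm{co}(E))$), whose sufficiency directions are proved via the prolongation property (Proposition \ref{prop:ri}) combined with Carath\'eodory's theorem (Theorem \ref{carath}): for the disperse case the point $-\epsilon y^{\ast} \in \mathrm{co}(E)$ is decomposed as a convex combination of points of $E$ to locate some $y^{(j)} \in E$ with $h \cdot y^{(j)} < 0$, and for the $0$-neutral case a sequence $x_j \to 0$ in $\mathrm{co}(E)$ is decomposed the same way and a limit is taken. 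You prove only the implications actually needed and dispense with Carath\'eodory altogether by exploiting the cheap direction of the identity $\inf_{y \in E} h \cdot y = \inf_{y \in \mathrm{co}(E)} h \cdot y = \inf_{y \in \mathrm{cl}(\mathrm{co}(E))} h \cdot y$: nonnegativity of $h \cdot y$ over $y \in E$ propagates to $C = \mathrm{co}(E)$, so your contradiction with $-\lambda Y_0 \in C$ is reached inside $C$ itself, and item (2) collapses to the one-line estimate $\inf_{Y \in E} h \cdot Y \le h \cdot 0 = 0$. The key geometric ingredient in item (1) --- the prolongation characterization of the relative interior --- is exactly the paper's, but your routing is more elementary (no Carath\'eodory decomposition, no sequence-and-limit argument), at the cost of not yielding the converse directions, which the paper's Propositions \ref{localDefinitions} and \ref{localDefinitionsII} also establish (via the separation Theorem \ref{teo:separacion}) and which are needed later in Propositions \ref{globalImpliesLocalNA} and \ref{0NeutralGlobalImpliesLocal}; note also that your second alternative matches Definition \ref{localWithRespectToH} exactly, since the node-level definition requires only $\inf < 0$ and not the two-sided disperse condition. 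Finally, your explicit observation that the node conditions depend on an admissible $\He$ only through the single vector $h = H_k(\se)$, so that universality over $\He$ reduces to universality over $h \in \mathbb{R}^d$, is left implicit in the paper and is a welcome clarification.
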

According with these results we introduce the following notions which will play a crucial role for the remaining of the paper.

\begin{definition}[$\mathcal{H}$-Independent local properties] \label{trSpaceLocArbFree}
A node $(\se,k)$ is called \emph{arbitrage-free} if (\ref{arbitrage-freeNode}) is satisfied; it is called \emph{$0$-neutral} if (\ref{0-neutral node}) is satisfied. We call $\mathcal{S}$ locally arbitrage-free (locally $0$-neutral), if every node $({\bf S}, k)$ is arbitrage-free ($0$-neutral).
\end{definition}
\begin{remark}
The above definitions rely on a numeraire (through the perspective function $X$). We will show in Section \ref{sec:invarianceOfLocalProperties} that once the properties hold for one numeraire, they hold for any numeraire.
\end{remark}
Therefore, if $\mathcal{S}$ is locally arbitrage-free (locally $0$-neutral), then
$\mathcal{M}= \mathcal{S} \times \mathcal{H}$ is locally arbitrage-free (locally $0$-neutral) for any (admissible) $\mathcal{H}$.

\begin{remark} Condition (\ref{arbitrage-freeNode}) appears in the stochastic literature as equivalent to one step arbitrage-free markets \cite[Lemma 3.42]{cutland}, \cite[Prop 3.3.4]{elliot}, \cite[Cor 1.50]{follmer} , \cite{jacod}.
\end{remark}

\vspace{.1in}
The local notions in Definition \ref{trSpaceLocArbFree}   allow us to ensure global conditions on a trajectory based market. In particular, the results in the rest of this section characterize an arbitrage-free market ($0$-neutral) by means of arbitrage-free ($0$-neutral) nodes w.r.t.~$\He$.

\begin{theorem}[No arbitrage: local implies global] \label{teo:locarbfree}
If $\mathcal{M} = \mathcal{S}\times \mathcal{H}$ is locally arbitrage-free (as per Definition \ref{localWithRespectToH}) and  semi-bounded, then $\Me$ is arbitrage-free (as per Definition \ref{ArbitrageDefinition}).
\end{theorem}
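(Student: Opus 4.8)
The plan is to argue by contradiction, reducing the multi-period statement to the one-step disjunction of Definition \ref{localWithRespectToH} via a downward induction on the re-balancing stage. First I would translate the global arbitrage hypothesis into a statement about gains: if $\Phi \in \He$ were an arbitrage opportunity, then by the self-financing representation in Proposition \ref{prop:auto}(3) we have $V^{\Phi}_{N_{\Phi}}(\se) - V^{\Phi}_0 = G^{\Phi}_{N_{\Phi}}(\se)$, so the inequalities of Definition \ref{ArbitrageDefinition} become $G^{\Phi}_{N_{\Phi}}(\se) \ge 0$ for all $\se \in \Se$ together with $G^{\Phi}_{N_{\Phi}}(\se^{\ast}) > 0$ for some $\se^{\ast}$. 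Semi-boundedness supplies a uniform bound $N_{\Phi}(\se) \le n$, giving the induction a base level. For a node $(\se,k)$ I introduce the conditional future gain $\hat{G}_k(\se') \equiv \sum_{i=k}^{N_{\Phi}(\se')-1} H_i(\se') \cdot \Delta_i X(S')$ for $\se' \in \Se_{(\se,k)}$, and record the one-step decomposition $\hat{G}_k(\se') = H_k(\se) \cdot \Delta_k X(S') + \hat{G}_{k+1}(\se')$, valid because non-anticipativity forces $H_k(\se') = H_k(\se)$ on the node.

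The engine is the claim $Q(k)$: no node $(\se,k)$ satisfies simultaneously $\hat{G}_k(\se') \ge 0$ for every $\se' \in \Se_{(\se,k)}$ and $\hat{G}_k(\se^{\dagger}) > 0$ for some $\se^{\dagger} \in \Se_{(\se,k)}$. The assumed arbitrage is exactly the failure of $Q(0)$ at the root, where $\hat{G}_0 = G^{\Phi}_{N_{\Phi}}$. I would therefore prove $Q(k)$ for every $k$ by downward induction from $k=n$ to $k=0$, the case $k=0$ delivering the contradiction. The base case $Q(n)$ is immediate: since $N_{\Phi}(\se') \le n$, the sum defining $\hat{G}_n(\se')$ is empty, so $\hat{G}_n \equiv 0$ and no node carries a strictly positive future gain.

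For the inductive step I assume $Q(k+1)$ and suppose, toward a contradiction, that a node $(\se,k)$ witnesses $\neg Q(k)$. Writing $a(\se') \equiv H_k(\se) \cdot \Delta_k X(S')$ for the one-step gain, the local arbitrage-free property of $(\se,k)$ (Definition \ref{localWithRespectToH}) supplies a dichotomy. If $a(\se') = 0$ for all $\se' \in \Se_{(\se,k)}$, then $\hat{G}_{k+1} = \hat{G}_k$ on the node, so restricting to the sub-node $(\se^{\dagger},k+1)$, whose continuation set lies in $\Se_{(\se,k)}$, produces a level-$(k+1)$ node with $\hat{G}_{k+1} \ge 0$ throughout and $\hat{G}_{k+1}(\se^{\dagger}) > 0$, contradicting $Q(k+1)$. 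Otherwise $\inf_{\se'} a(\se') < 0$, so there is $\se^{\flat} \in \Se_{(\se,k)}$ with $a(\se^{\flat}) < 0$; since every $\se'' \in \Se_{(\se^{\flat},k+1)}$ agrees with $\se^{\flat}$ through stage $k+1$, the one-step gain $a(\se'') = a(\se^{\flat})$ is the same strictly negative number, and $\hat{G}_k(\se'') \ge 0$ then forces $\hat{G}_{k+1}(\se'') \ge -a(\se^{\flat}) > 0$ for every such $\se''$. Thus $(\se^{\flat},k+1)$ is a level-$(k+1)$ node on which the future gain is uniformly positive, again contradicting $Q(k+1)$. In either branch $Q(k)$ holds, completing the induction.

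The step I expect to demand the most care is not the dichotomy but the bookkeeping hidden in the word ``sub-node'': I must ensure the chosen witnesses $\se^{\dagger},\se^{\flat}$ actually lie in $\Se_{(\cdot,k+1)}$, which by the conditioning in $\Se_{(\se,k)}$ requires $M(\cdot) > k+1$, and that a nonzero value of $\hat{G}_{k+1}$ is genuinely supported on stages before liquidation. Reconciling the variable liquidation time $N_{\Phi}(\se)$ with the horizon coordinate $M(\se)$, so that every relevant sub-node is non-degenerate and inherits its witness, is the delicate point, and is precisely where semi-boundedness together with the convention that positions are liquidated by the horizon enter. I note finally that the localization succeeds because a single strictly negative one-step continuation suffices in the second branch, so no attainment of infima, and hence no compactness of $\Delta X(\Se_{(\se,k)})$, is ever required.
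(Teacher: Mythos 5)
Your proof is correct, but it runs in the opposite direction from the paper's. The paper argues forward: fixing $\Phi$, it locates the first stage $k_1$ at which some one-step gain can be nonzero, uses the dichotomy of Definition \ref{localWithRespectToH} to pick a continuation with strictly negative increment, and iterates, producing stages $k_1 < k_2 < \cdots$ and trajectories $\se^{(1)}, \se^{(2)}, \ldots$ whose cumulative gain stays strictly negative; semi-boundedness forces this construction to terminate, exhibiting an explicit trajectory with $G^{\Phi}_{N_{\Phi}} < 0$, so $\Phi$ violates the first condition of Definition \ref{ArbitrageDefinition}. You instead assume an arbitrage and run a backward induction on the stage with the invariant $Q(k)$ (``no conditional arbitrage from stage $k$ on''), using the same local dichotomy to push a conditional arbitrage from level $k$ down to level $k+1$; semi-boundedness enters only at the base $k = n_{\Phi}$, where the future gain is an empty sum. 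Concerning the bookkeeping point you flagged: under the convention that positions are liquidated by the horizon ($N_{\Phi}(\se) \le M(\se)$), the sub-nodes you need are automatically nonempty --- in your case (ii), $\hat{G}_k(\se^{\flat}) \ge 0$ together with $a(\se^{\flat}) < 0$ forces $\hat{G}_{k+1}(\se^{\flat}) > 0$, hence $N_{\Phi}(\se^{\flat}) \ge k+2$ and so $M(\se^{\flat}) > k+1$, and similarly $\hat{G}_{k+1}(\se^{\dagger}) > 0$ places $\se^{\dagger}$ in its own sub-node in case (i) --- which closes the gap; note that the paper's own proof is silent on exactly the same edge case (its continuations at stage $k_2$ presuppose $\se^{(1)} \in \Se_{(\se^{(1)},k_2)}$), so you are on equal footing. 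As to what each approach buys: the paper's forward construction is constructive, delivering a concrete losing trajectory, whereas yours is the dynamic-programming reduction standard in the stochastic setting (compare the one-period-to-multi-period argument in \cite{follmer}), with a cleaner inductive invariant and a sharper localization of where semi-boundedness is actually used.
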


\noindent
See proof in Appendix \ref{proofsAndResults}.

\vspace{.1in}
In order to establish a converse to Theorem \ref{teo:locarbfree}, consider $\xi \in \mathbb{R}^{d}$, $\se \in \Se$ and $k \ge 0$, let us define the function $\xi_i^{(\se,k)}: \Se \rightarrow \mathbb{R}^{d}$, for any $i\ge 0$, by
\[
\xi_i^{(\se,k)}(\se')= \left\lbrace \begin{array}{ll}
\xi & \mbox{if }\; i=k \;\mbox{and}\; \se' \in \Se_{(\se,k)}.\\
0 & \mbox{otherwise.}
\end{array} \right.
\]
Given $V_0$, we can obtain from the equations \eqref{eqn:defH1} and \eqref{eqn:defH2} a sequence of functions $\{ \xi^0_i \}_{i \ge 0}$ in such a way that the sequence
\begin{equation}\label{restrictedPort}
\Xi^{(\se,k)}=\{(\xi_i^0,\xi_i^{(\se,k)})\}_{i\ge 0}
\end{equation}
be self-financing. Also, defining $N_{\Xi^{(\se,k)}}(\se')=k+1$ for all $\se' \in \Se$, it is easy to see that $\Xi^{(\se,k)}$ is a portfolio. We will call this type of portfolios as \emph{restricted portfolios} at the node $(\se,k)$.

\begin{proposition}[No arbitrage: global implies Local] \label{globalImpliesLocalNA}
If $\mathcal{M} = \mathcal{S} \times \mathcal{H}$ is arbitrage free and  the restricted portfolios belong to $\He$ then, $\mathcal{S}$ is locally arbitrage-free (as per Definition \ref{trSpaceLocArbFree}). In particular $\mathcal{M}$ is locally arbitrage-free.
\end{proposition}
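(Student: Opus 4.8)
The plan is to argue by contraposition, manufacturing a \emph{global} arbitrage out of any node that violates the relative-interior condition \eqref{arbitrage-freeNode}. Concretely, I would assume that $\Se$ is \emph{not} locally arbitrage-free in the sense of Definition \ref{trSpaceLocArbFree}, so that there is some node $(\se,k)$ with $0 \notin \mathrm{ri}\left(\mathrm{co}(E)\right)$, where $E \equiv \Delta X(\Se_{(\se,k)})$, and then exhibit a portfolio belonging to $\He$ that is an arbitrage opportunity as per Definition \ref{ArbitrageDefinition}. This contradicts the hypothesis that $\Me$ is arbitrage-free, and hence forces \eqref{arbitrage-freeNode} at every node.

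The first step is purely geometric: from $0 \notin \mathrm{ri}\left(\mathrm{co}(E)\right)$ I would extract a direction $\xi \in \mathbb{R}^d$ with $\xi \cdot Y \ge 0$ for every $Y \in E$ and $\xi \cdot Y^{\ast} > 0$ for at least one $Y^{\ast} \in E$. There are two cases. If $0 \notin \mathrm{cl}\left(\mathrm{co}(E)\right)$, strict separation of the point $0$ from the closed convex set $\mathrm{cl}\left(\mathrm{co}(E)\right)$ supplies such a $\xi$ (in fact with $\xi \cdot Y > 0$ throughout). If instead $0$ lies on the relative boundary of $\mathrm{co}(E)$, a \emph{proper} supporting hyperplane at that relative boundary point provides a $\xi$ that is nonconstant on $\mathrm{co}(E)$ and minimized there by the value $\xi \cdot 0 = 0$; since $\mathrm{co}(E)$ is the convex hull of $E$, the positive value is attained at some $Y^{\ast} \in E$. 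This is exactly the failure of the dispersion property, and the content that the convex-analytic results of Appendix \ref{app:conv} (equivalently, the dispersion characterizations developed in Section \ref{geometricCharacterizations}) are meant to deliver. I expect this to be the step requiring the most care, because one must work with the relative interior rather than the topological interior and dispatch the boundary case through a proper supporting hyperplane rather than an ordinary one.

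With $\xi$ in hand, I would take the restricted portfolio $\Xi^{(\se,k)}$ of \eqref{restrictedPort} built from this $\xi$; by hypothesis it belongs to $\He$, and by construction $N_{\Xi^{(\se,k)}}(\se') = k+1$ for all $\se'$. Because the holdings $\xi_i^{(\se,k)}(\se')$ vanish for $i \neq k$ and equal $\xi$ at $i=k$ precisely when $\se' \in \Se_{(\se,k)}$, Proposition \ref{prop:auto}(3) gives $V^{\Xi^{(\se,k)}}_{N_{\Xi^{(\se,k)}}}(\se') - V^{\Xi^{(\se,k)}}_0 = G^{\Xi^{(\se,k)}}_{k+1}(\se')$, which equals $\xi \cdot \Delta_k X(S')$ for $\se' \in \Se_{(\se,k)}$ and equals $0$ for $\se' \notin \Se_{(\se,k)}$. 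By the choice of $\xi$ this difference is $\ge 0$ for every $\se' \in \Se$ and is strictly positive for the trajectory $\se^{\ast} \in \Se_{(\se,k)}$ realizing $\xi \cdot Y^{\ast} > 0$. Thus $\Xi^{(\se,k)}$ satisfies both clauses of Definition \ref{ArbitrageDefinition} and is an arbitrage opportunity, the desired contradiction.

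Therefore no node can violate \eqref{arbitrage-freeNode}, so $\Se$ is locally arbitrage-free in the sense of Definition \ref{trSpaceLocArbFree}. Finally, Proposition \ref{suffCondForArb-freeAnd0Neutral}(1) shows that every node is then arbitrage-free with respect to this (admissible) $\He$, which is to say $\Me = \Se \times \He$ is locally arbitrage-free, giving the ``in particular'' assertion.
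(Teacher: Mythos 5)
Your proposal is correct and follows essentially the same route as the paper's proof: argue by contraposition, extract from $0 \notin \mathrm{ri}\left(\mathrm{co}\left(\Delta X(\Se_{(\se,k)})\right)\right)$ a direction $\xi$ with $\xi \cdot \Delta_k X(S') \ge 0$ for all $\se' \in \Se_{(\se,k)}$ and strict inequality for some $\se^{\ast}$, and then show the restricted portfolio $\Xi^{(\se,k)} \in \He$ is an arbitrage opportunity via Proposition \ref{prop:auto}. The only cosmetic difference is that you re-derive the separation step by a two-case argument (strict separation versus a proper supporting hyperplane), whereas the paper obtains it in one stroke by citing Proposition \ref{localDefinitions}, whose proof invokes Theorem \ref{teo:separacion} together with Proposition \ref{prop:ri2}; the financial half of your argument, including the explicit check on trajectories outside $\Se_{(\se,k)}$, matches the paper's.
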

\noindent
See proof in Appendix \ref{proofsAndResults}.

\vspace{.1in}
We now carry out a similar analysis for the notion of $0$-neutral. The following Theorem shows that a trajectory based market will be $0$-neutral if it is locally $0$-neutral.

\begin{theorem}[$0$-neutral: local implies global] \label{teo:0neutral}
  Let $\Me= \Se \times \He$ be a semi-bounded  trajectory market. Then if $\Me$ is locally $0$-neutral (as per Definition \ref{localWithRespectToH}) then, $\Me$ is $0$-neutral (as per Definition \ref{0Neutral}).
\end{theorem}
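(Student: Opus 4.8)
The plan is to reduce the global $0$-neutrality identity of Definition~\ref{0Neutral} to a one-sided estimate and then establish it by a greedy, node-by-node construction of a single trajectory along which the accumulated gain is arbitrarily small. Since $0 \in \He$ and $G^{0}_{N_0}(\se) \equiv 0$, the supremum in Definition~\ref{0Neutral} is automatically $\ge 0$; hence it suffices to prove that for every $\Phi \in \He$ one has $\inf_{\se \in \Se} G^{\Phi}_{N_{\Phi}}(\se) \le 0$. Fix such a $\Phi$. By semi-boundedness there is $n = n_{\Phi}$ with $N_{\Phi}(\se) \le n$ for all $\se$, and by the liquidation property (item (3) of Definition~\ref{discreteMarkets}) we have $H_k(\se) = 0$ for $k \ge N_{\Phi}(\se)$, so that $G^{\Phi}_{N_{\Phi}}(\se) = \sum_{k=0}^{n-1} H_k(\se) \cdot \Delta_k X(S)$ is a fixed finite sum of $n$ terms for every $\se$.

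Fix $\epsilon > 0$. First I would construct, by induction on $k = 0, 1, \ldots, n-1$, a nested sequence of trajectories $\se^{(0)}, \se^{(1)}, \ldots, \se^{(n)}$ with $\se^{(k+1)} \in \Se_{(\se^{(k)},k)}$ along which each one-step gain is controlled. At stage $k$ the holding $H_k(\se^{(k)})$ is already determined by the past coordinates $\se^{(k)}_0, \ldots, \se^{(k)}_k$, via the non-anticipativity clause of Definition~\ref{locallyDefinedPortfolios}. Applying local $0$-neutrality (Definition~\ref{localWithRespectToH}) at the node $(\se^{(k)}, k)$ to the vector $H = H_k(\se^{(k)}) \in \He_{\Se}$ gives
\[
\inf_{\se' \in \Se_{(\se^{(k)},k)}} H_k(\se^{(k)}) \cdot \Delta_k X(S') \le 0 ,
\]
so there exists $\se^{(k+1)} \in \Se_{(\se^{(k)},k)}$ with $H_k(\se^{(k)}) \cdot \Delta_k X(S^{(k+1)}) < \epsilon/n$ (when $H_k(\se^{(k)}) = 0$ the increment vanishes and any extension works). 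Setting $\se^{\ast} = \se^{(n)}$, the relation $\se^{(j+1)} \in \Se_{(\se^{(j)},j)}$ chains to show that $\se^{\ast}$ agrees with $\se^{(k)}$ up to stage $k$ and with $\se^{(k+1)}$ up to stage $k+1$; non-anticipativity then yields $H_k(\se^{\ast}) = H_k(\se^{(k)})$, while the coordinate agreement gives $\Delta_k X(S^{\ast}) = \Delta_k X(S^{(k+1)})$. Consequently each increment satisfies $H_k(\se^{\ast})\cdot \Delta_k X(S^{\ast}) < \epsilon/n$, whence
\[
G^{\Phi}_{N_{\Phi}}(\se^{\ast}) = \sum_{k=0}^{n-1} H_k(\se^{\ast}) \cdot \Delta_k X(S^{\ast}) < \epsilon .
\]
Letting $\epsilon \to 0$ gives $\inf_{\se} G^{\Phi}_{N_{\Phi}}(\se) \le 0$, and combined with the reverse inequality this establishes the theorem.

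The main obstacle I anticipate is the bookkeeping that makes the inductive construction legitimate: one must guarantee that each node $(\se^{(k)}, k)$ used is non-empty, so that local $0$-neutrality genuinely applies, and that the non-anticipativity transfer $H_k(\se^{\ast}) = H_k(\se^{(k)})$ is valid (which requires $k < M$ along the relevant trajectories). The clean way around this is to note that control is only needed at stages $k$ with $H_k(\se^{(k)}) \neq 0$; at such stages $k < N_{\Phi}(\se^{(k)}) \le M(\se^{(k)})$, so $\se^{(k)} \in \Se_{(\se^{(k)},k)}$, the node is non-empty, and the non-anticipativity hypotheses hold, whereas at stages where the holding vanishes the increment is identically zero and the trajectory is simply carried forward unchanged. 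This is precisely where the semi-bounded hypothesis is essential: it converts the data-dependent length $N_{\Phi}(\se)$ into a fixed number of terms $n$, which is what allows the uniform $\epsilon/n$ splitting to close the estimate.
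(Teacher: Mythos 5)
Your proposal is correct and is essentially the paper's own argument: you inline the reduction that the paper isolates as Proposition \ref{prop:epcons}, then run the same greedy construction of a nested chain $\se^{(k+1)} \in \Se_{(\se^{(k)},k)}$ invoking local $0$-neutrality at each node with the holding $H_k(\se^{(k)})$, using a uniform $\epsilon/n$ split where the paper uses $\epsilon/2^{i}$ --- a cosmetic difference --- and semi-boundedness to terminate the chain in both cases. The only blemish is in your final bookkeeping paragraph: the inequality $N_{\Phi}(\se^{(k)}) \le M(\se^{(k)})$ is the $n$-bounded condition of Definition \ref{discreteMarkets}, not a consequence of semi-boundedness, but this is harmless because the hypothesis that $(\se^{(k)},k)$ is a $0$-neutral node w.r.t.~$\He$ already forces $\Se_{(\se^{(k)},k)} \neq \emptyset$ (otherwise the infimum in Definition \ref{localWithRespectToH} would be $+\infty$), which is all your construction actually needs.
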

\noindent
See proof in Appendix \ref{proofsAndResults}.

\noindent

\begin{proposition}[$0$-neutral: global implies local] \label{0NeutralGlobalImpliesLocal}
Let $\Me= \Se \times \He$ be a $0$-neutral trajectory market such that the restricted portfolios belong to $\He$. Then, any node $ (\se, k) $ is a $0$-neutral node (in particular, $(\se,k)$ is $0$-neutral with respect to $\mathcal{H}$).
\end{proposition}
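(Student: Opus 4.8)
The plan is to argue by contraposition, in parallel with the proof of Proposition~\ref{globalImpliesLocalNA}: I assume that some node $(\se,k)$ fails to be $0$-neutral in the sense of Definition~\ref{trSpaceLocArbFree} and try to produce a portfolio of $\He$ that contradicts the global $0$-neutrality of $\Me$ (Definition~\ref{0Neutral}). Setting $C \equiv \mathrm{cl}(\mathrm{co}(\Delta X(\Se_{(\se,k)})))$, the failure of the node means exactly $0 \notin C$. Since $C$ is a nonempty closed convex subset of $\mathbb{R}^d$ that misses the origin, the strict separation results of Appendix~\ref{app:conv} furnish a vector $\xi \in \mathbb{R}^d$ and a constant $\alpha > 0$ with $\xi \cdot y \ge \alpha$ for every $y \in C$; in particular $\xi \cdot \Delta_k X(S') \ge \alpha > 0$ for all $\se' \in \Se_{(\se,k)}$.

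I would then realize $\xi$ as the restricted portfolio $\Xi^{(\se,k)}$ of~\eqref{restrictedPort}, which lies in $\He$ by hypothesis and has $N_{\Xi^{(\se,k)}} \equiv k+1$. By~\eqref{eqn:G} its gain is $G^{\Xi^{(\se,k)}}_{k+1}(\se') = \xi \cdot \Delta_k X(S')$ for $\se' \in \Se_{(\se,k)}$ and $0$ for the remaining trajectories, so that $G^{\Xi^{(\se,k)}}_{k+1} \ge \alpha$ on the whole conditioned set $\Se_{(\se,k)}$. Taking $H_k(\se) = \xi$ this immediately yields $\inf_{\se' \in \Se_{(\se,k)}} H_k(\se) \cdot \Delta_k X(S') \ge \alpha > 0$, which is the negation of the $0$-neutrality of $(\se,k)$ with respect to $\He$ in Definition~\ref{localWithRespectToH}; together with Proposition~\ref{suffCondForArb-freeAnd0Neutral}(2) this pins down the equivalence between the two local notions and would settle the parenthetical statement once the main claim is in hand.

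The hard part — and the step I expect to be the genuine obstacle — is to upgrade this conditional profit into a contradiction with Definition~\ref{0Neutral}. This is where the $0$-neutral case departs sharply from the arbitrage case: in Proposition~\ref{globalImpliesLocalNA} the restricted portfolio is itself an arbitrage, because there a gain that is $\ge 0$ on all of $\Se$ and strictly positive somewhere already suffices, whereas $0$-neutrality is violated only by a portfolio whose gain is bounded below by a strictly positive constant on the whole of $\Se$. But $\Xi^{(\se,k)}$ gives gain $0$ on every trajectory that leaves the node, so $\inf_{\se' \in \Se} G^{\Xi^{(\se,k)}}_{k+1}(\se') = 0$ and no contradiction is automatic. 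My plan is therefore to carry the surplus $\ge \alpha$ secured on the branch through $(\se,k)$ backwards along the finitely many ancestors $(\se,k-1),\dots,(\se,0)$, adjoining at each ancestor a position (with its self-financing component determined by~\eqref{eqn:defH1}–\eqref{eqn:defH2}) that exploits this surplus to keep the running gain strictly positive on every competing continuation, and assembling the result into a single portfolio of $\He$ with strictly positive global infimum. Making this backward propagation work — controlling the infima over the sibling sub-cones at each ancestor and verifying that the secured surplus is large enough to dominate them — is the crux of the argument and the place where the available hypotheses will have to be used most carefully.
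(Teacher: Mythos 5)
Your first two steps coincide exactly with the paper's proof: by Proposition \ref{localDefinitionsII} (whose proof invokes the separation Theorem \ref{teo:separacion}) the failure of $(\se,k)$ to be a $0$-neutral node yields $\xi\in\mathbb{R}^d$ with $\inf_{\se'\in\Se_{(\se,k)}}\xi\cdot\Delta_k X(S')>0$, and the hypothesis places the restricted portfolio $\Xi^{(\se,k)}$ of (\ref{restrictedPort}) in $\He$. At that point the paper simply displays the positive conditional infimum and declares it ``a contradiction,'' ending the proof. In other words, the step you single out as the genuine obstacle is precisely the step the paper does not carry out: a strictly positive infimum over the conditioned set $\Se_{(\se,k)}$ contradicts the local statement that $(\se,k)$ is $0$-neutral w.r.t.\ $\He$ in the sense of Definition \ref{localWithRespectToH} (equivalently, $0$-neutrality of the market \emph{conditioned} at the node), but it does not literally contradict Definition \ref{0Neutral}, whose infimum runs over all of $\Se$ and is driven to $0$ by the trajectories that leave the node and earn gain $0$ --- exactly as you observe. (Your handling of the parenthetical claim via item 2 of Proposition \ref{suffCondForArb-freeAnd0Neutral} is fine.)

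However, your proposed repair --- propagating the surplus backward through the ancestors --- cannot be completed at this level of generality, so the unfinished ``crux'' in your write-up is not merely a hard step left open: with Definition \ref{0Neutral} read literally, the implication is false. Take $d=1$, $S^0\equiv 1$, all trajectories with $m=2$ and relative prices $x_0=x_1=0$, and let an observable coordinate $W_1\in\Omega_1=\{a,b\}$ split the stage-one nodes: after $W_1=a$ the unique continuation has $\Delta_1 X=1$, while after $W_1=b$ both continuations $\Delta_1 X=\pm1$ occur. Since $\Delta_0 X=0$ on all of $\Se$, every portfolio earns $H_1(\cdot)(\pm1)$ on the $b$-branch, whose infimum is $\le 0$; hence $\sup_{\Phi\in\He}\inf_{\se\in\Se}G^{\Phi}_{N_{\Phi}}(\se)=0$ and the market is $0$-neutral per Definition \ref{0Neutral} for any admissible $\He$ containing the restricted portfolios, yet the node after $W_1=a$ has increment set $\{1\}$ and is not $0$-neutral. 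The surplus secured on the $a$-branch can never dominate the sibling branch, because no position makes the $b$-branch gain positive at all --- so the backward induction you envision must break down there. The correct resolution is to read the global hypothesis conditionally, i.e.\ to demand $0$-neutrality of each conditioned market $\Se_{(\se,k)}\times\He$ (the infimum in Definition \ref{0Neutral} taken over $\Se_{(\se,k)}$); under that reading your first two steps --- equivalently, the paper's two lines --- already complete the proof, and no backward propagation is needed or possible.
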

\noindent
See proof in Appendix \ref{proofsAndResults}.

\section{Geometric Characterizations} \label{geometricCharacterizations}

We develop geometric characterizations for the local notions introduced in the previous section. Definition \ref{localGeometry2} below is a stronger version of
Definition \ref{localWithRespectToH} that dispenses of the qualifier ``w.r.t. $\mathcal{H}$" present in the latter definition.

\subsection{Local Geometric Characterizations}\label{local geometric section}
\begin{definition}[Disperse and $0$-neutral sets]  \label{localGeometry2}
Consider a set $E \subseteq \mathbb{R}^d$; $E$ is called {\it disperse} if for each  $h \in \mathbb{R}^d$:
\begin{equation} \label{geometricNoArbitrage}
    [h \cdot y= 0~\forall ~y \in E]~\mbox{or}~~[\inf_{y \in E}  h \cdot y < 0  ~~\&~~~\sup_{y \in E}  h \cdot y > 0].
\end{equation}
$E$ is called $0$-neutral if for each  $h \in \mathbb{R}^d$:
\begin{equation}  \label{geometric0Neutral}
    [\inf_{y \in E}  h \cdot y \leq 0 ~\&~\sup_{y \in E}  h \cdot y \geq 0].
\end{equation}
\end{definition}
Notice that (\ref{geometric0Neutral}) is equivalent to just requiring the validity of one of the two inequalities appearing in the conjunction in (\ref{geometric0Neutral}). Similarly,~(\ref{geometricNoArbitrage}) is equivalent to  $[h \cdot y= 0~\forall ~y \in E]~\mbox{or}~~[\inf_{y \in E}  h \cdot y < 0]$
(this later inequality could be replaced by $[\sup_{y \in E}  h \cdot y > 0]$). We have written  Definition \ref{localGeometry2} in its present form for emphasis.

Results and notions from convex analysis that we will rely upon in this section are detailed in Appendix \ref{app:conv}.

\begin{proposition} \label{localDefinitions}
Let $E \subseteq \mathbb{R}^d$.
\begin{equation} \nonumber
 E ~\mbox{is disperse iff~~}~~~~ 0 \in \mathrm{ri}(\mathrm{co}(E)).
 \end{equation}
\end{proposition}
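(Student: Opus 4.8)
The plan is to prove both implications by relating the definition of disperse to separation/supporting-hyperplane properties of the convex hull $\mathrm{co}(E)$, using the fact that $h \cdot y$ is a linear functional and that $\inf$ and $\sup$ over $E$ agree with those over $\mathrm{co}(E)$ (since a linear functional attains its extremes on the convex hull at the same values as on $E$). The key observation I would establish first is that the disperse condition \eqref{geometricNoArbitrage} can be rephrased entirely in terms of $\mathrm{co}(E)$: for each $h$, either $h$ annihilates all of $E$ (equivalently all of $\mathrm{co}(E)$), or $0$ lies strictly between $\inf_{y} h\cdot y$ and $\sup_{y} h\cdot y$ taken over $\mathrm{co}(E)$.

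For the direction $0 \in \mathrm{ri}(\mathrm{co}(E)) \Rightarrow E$ disperse, I would argue by contraposition of the non-degenerate alternative. Fix $h \in \mathbb{R}^d$ and suppose $h$ does not annihilate all of $E$; I must show $\inf_{y\in E} h\cdot y < 0 < \sup_{y\in E} h\cdot y$. Let $A = \mathrm{aff}(\mathrm{co}(E))$ be the affine hull and recall $0 \in \mathrm{ri}(\mathrm{co}(E))$ means $0$ is interior to $\mathrm{co}(E)$ relative to $A$. Since $h$ is not identically zero on $E$, the linear functional $y \mapsto h\cdot y$ is nonconstant on $A$ (here I would use that $0 \in \mathrm{co}(E)$ so $A$ is a linear subspace, and $h\cdot y = 0$ on all of $E$ iff it vanishes on $A$). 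Because $0$ is relatively interior, a small relative ball around $0$ in $A$ sits inside $\mathrm{co}(E)$; moving from $0$ in the direction of the orthogonal projection of $h$ onto $A$ (which is nonzero) and in its negative produces points of $\mathrm{co}(E)$ on which $h\cdot y$ is strictly positive and strictly negative respectively. This gives the strict sign change, completing this direction.

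For the converse, $E$ disperse $\Rightarrow 0 \in \mathrm{ri}(\mathrm{co}(E))$, I would again contrapose: assume $0 \notin \mathrm{ri}(\mathrm{co}(E))$ and produce an $h$ violating \eqref{geometricNoArbitrage}. Here I expect to invoke the supporting hyperplane theorem for relative interiors from the convex-analysis appendix: if $0 \notin \mathrm{ri}(\mathrm{co}(E))$ then there is a hyperplane through $0$ that supports $\mathrm{co}(E)$ and does not contain it, i.e. an $h \neq 0$ with $h\cdot y \ge 0$ for all $y \in \mathrm{co}(E)$ but $h\cdot y > 0$ for some such $y$ (one must separately handle the case $0 \notin \mathrm{co}(E)$ via a strict separating hyperplane, and the case $0 \in \mathrm{co}(E)\setminus\mathrm{ri}(\mathrm{co}(E))$ via a proper supporting hyperplane). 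In either situation $\inf_{y\in E} h\cdot y \ge 0$, so the second clause of \eqref{geometricNoArbitrage} fails, while $h$ does not annihilate all of $E$, so the first clause fails too; hence $E$ is not disperse.

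The main obstacle I anticipate is the careful case analysis in the converse, specifically distinguishing $0 \notin \mathrm{co}(E)$ (needing strict separation) from $0$ on the relative boundary of $\mathrm{co}(E)$ (needing a proper supporting hyperplane that does not contain the whole set), and verifying in each case that the resulting $h$ fails \emph{both} disjuncts of the disperse condition simultaneously. The correct tool is a relative-interior version of the supporting hyperplane theorem, which I would cite from Appendix \ref{app:conv}; getting the strictness of the inequality at the supporting point (so that the first disjunct $h\cdot y = 0~\forall y \in E$ is genuinely violated) is the delicate point that the proper-support hypothesis is designed to guarantee.
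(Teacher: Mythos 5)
Your proof is correct, but it takes a genuinely different route from the paper's in both directions, and one citation you plan to lean on does not exist in the form you expect. For the implication $0 \in \mathrm{ri}(\mathrm{co}(E)) \Rightarrow E$ disperse, the paper does not use your relative-ball/projection picture: it fixes $y^* \in E$ with $h \cdot y^* > 0$, applies the Rockafellar characterization of the relative interior (Proposition \ref{prop:ri}) to obtain $-\epsilon y^* \in \mathrm{co}(E)$, and then uses Carath\'eodory's theorem (Theorem \ref{carath}) to write $-\epsilon y^*$ as a convex combination of points of $E$, forcing $h \cdot y^{(j)} < 0$ for some $j$. Your alternative --- project $h$ onto the affine hull (a linear subspace since $0 \in \mathrm{co}(E)$), perturb $0$ inside a relative ball in both directions, and transfer the strict sign change from $\mathrm{co}(E)$ back to $E$ via the identity of infima of a linear functional over $E$ and $\mathrm{co}(E)$ --- is valid and arguably more geometric, and it yields both strict inequalities at once, whereas the paper only proves $\inf_{y \in E} h \cdot y < 0$ and relies on the remark after Definition \ref{localGeometry2} that this suffices. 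For the converse, the paper avoids your two-case analysis entirely: since $\mathrm{ri}(\mathrm{co}(E))$ is convex and does not contain $0$, Theorem \ref{teo:separacion} applies \emph{directly to} $\mathrm{ri}(\mathrm{co}(E))$, giving $\xi \cdot x \ge 0$ there with $\xi \cdot x^* > 0$ at some $x^*$, and the inequality is then extended from $\mathrm{ri}(\mathrm{co}(E))$ to all of $E$ by the line-segment principle (Proposition \ref{prop:ri2}), letting $\alpha \to 0$ in $\xi \cdot (\alpha x + (1-\alpha)y) \ge 0$; the point $x^*$ rules out the annihilation disjunct. This matters for your plan because the ``relative-interior supporting hyperplane theorem'' you propose to cite is not in Appendix \ref{app:conv}: Theorem \ref{teo:separacion} only separates a point from a convex set not containing it, so your boundary case $0 \in \mathrm{co}(E) \setminus \mathrm{ri}(\mathrm{co}(E))$ would require either importing Rockafellar's proper separation theorem or reproducing the paper's maneuver of separating $0$ from $\mathrm{ri}(\mathrm{co}(E))$ and extending --- which handles both of your cases uniformly. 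Since you explicitly flagged this as the delicate step, it is a citation gap rather than a conceptual error, but as written your converse is incomplete until that supporting-hyperplane statement is supplied.
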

\begin{proof}
Assume first that $E$ is disperse. In order to proceed to deduce a contradiction, we assume that $0 \notin \mathrm{ri}(\mathrm{co}(E))$; by the separation Theorem \ref{teo:separacion}, there exists $\xi \in \mathbb{R}^d$ such that:
\begin{itemize}
\item $\xi \cdot x \ge 0$ for all $x \in \mathrm{ri}(\mathrm{co}(E))$, and
\item $\xi \cdot x^{\ast} > 0$ for some $x^{\ast} \in \mathrm{ri}(\mathrm{co}(E)).$
\end{itemize}
Then, by means of Proposition
\ref{prop:ri2},  it follows that 
for all $x \in \mathrm{ri}(\mathrm{co}(E))$
\[ (\alpha x + (1-\alpha)~y) \in \mathrm{ri}(\mathrm{co}(E)~\mbox{for all}~y \in E~\mbox{and}~\alpha \in (0,1].\]
Therefore,
\[ \xi \cdot (\alpha x + (1-\alpha)~y) = \alpha \xi \cdot x + (1-\alpha)\xi \cdot y  \ge 0~~\mbox{for all}~y \in E~\mbox{and}~~ \alpha \in (0,1].\]
 It then follows that $ \xi \cdot y  \ge 0 $ for all $y \in E$. This contradicts the fact that $E$ is disperse.

\vspace{.1in}

Conversely, assume that $ 0 \in \mathrm{ri}(\mathrm{co}(E))$. We may assume there exists $\hat{y} \in E$
such that $h \cdot \hat{y} \neq 0$. It is enough to establish that
$\inf_{y \in E} h \cdot y <0$, we may then
assume there exists $y^{\ast} \in E$ such that $h \cdot y^* > 0$. As $y^* \in \mathrm{co}(E)$ and $0 \in \mathrm{ri}(\mathrm{co}(E))$, it follows from Proposition \ref{prop:ri} in Appendix \ref{app:conv} that there exists $\epsilon >0$ such that
\[ -\epsilon y^* \in \mathrm{co}(E).\]
Then, it follows from Theorem \ref{carath} that there exists $y^{(1)}, \dots, y^{(d+1)} \in E$ such that
\[ -\epsilon y^*= \lambda_1 y^{(1)}+ \dots + \lambda_{d+1} y^{(d+1)}  \mbox{ with } \sum_{i=1}^{d+1} \lambda_i = 1, \quad \lambda_i \ge 0.\]
Then
\[ 0 > -\epsilon \left(h \cdot y^* \right)= \sum_{i=1}^{d+1}\lambda_i \left(h \cdot y^{(i)}\right).\]
Therefore, there must be some $1 \le j \le d+1$ such that $h \cdot y^{(j)} < 0$, and then
\[ \inf_{y \in E} h \cdot y < 0.\]
\end{proof}

\noindent Similarly to Proposition \ref{localDefinitions}, the following result characterizes the $0$-neutral property of $E$.
\begin{proposition} \label{localDefinitionsII}
Let $E \subseteq \mathbb{R}^d$.
\begin{equation} \nonumber
E~\mbox{is}~~ 0-\mbox{neutral iff~~}~~ 0 \in \mathrm{cl}(\mathrm{co}(E)).
\end{equation}
\end{proposition}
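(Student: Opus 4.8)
The plan is to mirror the proof of Proposition \ref{localDefinitions}, but working with the closure of the convex hull in place of the relative interior. First I would invoke the observation recorded right after Definition \ref{localGeometry2}: the $0$-neutrality of $E$ is equivalent to demanding only that $\inf_{y \in E} h \cdot y \le 0$ hold for every $h \in \mathbb{R}^d$ (the companion inequality $\sup_{y \in E} h \cdot y \ge 0$ is then automatic upon replacing $h$ by $-h$). So it suffices to prove that this single family of inequalities is equivalent to $0 \in \mathrm{cl}(\mathrm{co}(E))$.

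For the forward implication I would argue by contraposition. Suppose $0 \notin \mathrm{cl}(\mathrm{co}(E))$ (the case $E = \emptyset$ being trivial). Since $\mathrm{cl}(\mathrm{co}(E))$ is a closed convex set not containing the origin, the separation theorem (Theorem \ref{teo:separacion}, in its form separating a point from a closed convex set) yields $h \in \mathbb{R}^d$ and $\alpha > 0$ with $h \cdot x \ge \alpha$ for every $x \in \mathrm{cl}(\mathrm{co}(E))$. In particular $h \cdot y \ge \alpha > 0$ for all $y \in E$, so $\inf_{y \in E} h \cdot y \ge \alpha > 0$, and $E$ fails to be $0$-neutral. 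Contrapositively, $0$-neutrality of $E$ forces $0 \in \mathrm{cl}(\mathrm{co}(E))$.

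For the converse, assume $0 \in \mathrm{cl}(\mathrm{co}(E))$ and fix an arbitrary $h$. I would choose a sequence $x_n \in \mathrm{co}(E)$ with $x_n \to 0$, and write each $x_n = \sum_i \lambda_i^{(n)} y_i^{(n)}$ as a finite convex combination of points $y_i^{(n)} \in E$ (Carath\'eodory, Theorem \ref{carath}, is available should one wish to bound the number of terms by $d+1$, though that is not needed here). Since $h \cdot y_i^{(n)} \ge \inf_{y \in E} h \cdot y$ for each $i$, and $\sum_i \lambda_i^{(n)} = 1$ with $\lambda_i^{(n)} \ge 0$, convexity gives $h \cdot x_n \ge \inf_{y \in E} h \cdot y$. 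Letting $n \to \infty$ and using $h \cdot x_n \to 0$ yields $\inf_{y \in E} h \cdot y \le 0$; as $h$ was arbitrary, $E$ is $0$-neutral.

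I expect the only delicate point to be the forward direction. Unlike Proposition \ref{localDefinitions}, where $0 \notin \mathrm{ri}(\mathrm{co}(E))$ produced only a supporting hyperplane and an extra argument (via Proposition \ref{prop:ri2}) was needed to upgrade the weak inequality, here the assumption $0 \notin \mathrm{cl}(\mathrm{co}(E))$ lets me separate the origin from a closed convex set with a strictly positive gap $\alpha$ immediately. The one thing I would verify is that the cited version of the separation theorem genuinely applies to a point lying outside a closed (rather than relatively open) convex set; if the appendix states only the relative-interior version, a short preliminary remark — that a point at positive distance from a closed convex set admits a strongly separating hyperplane — closes that gap.
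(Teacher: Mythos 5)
Your proposal is correct and follows essentially the same route as the paper: strong separation of the origin from $\mathrm{cl}(\mathrm{co}(E))$ for the forward direction, and a sequence $x_n \in \mathrm{co}(E)$ with $x_n \to 0$ expanded via Carath\'eodory for the converse (your direct limiting inequality replaces the paper's contradiction with an $\epsilon$, a purely cosmetic difference). The delicate point you flag is already covered in the appendix: the final clause of Theorem \ref{teo:separacion} yields $\inf_{x} | a \cdot x | > 0$ whenever $\inf_{x} \| x \|_d > 0$, which applies here since $\mathrm{cl}(\mathrm{co}(E))$ is closed and does not contain the origin.
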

\begin{proof}
Assume $E$ is $0$-neutral and $0 \notin \mathrm{cl}(\mathrm{co}(E))$. Since the closure of a convex set is a convex set (Proposition \ref{prop:clconv}) and closed, by Theorem \ref{teo:separacion} in Appendix \ref{app:conv}, it follows that there exists  $\xi \in \mathbb{R}^d$ such that $\inf \xi \cdot y > 0$ where the infimum is over all $y \in \mathrm{cl}(\mathrm{co}(E))$. Thus
\[ \inf_{y \in E} \xi \cdot y >0,\] which contradicts our hypothesis.

Assume now $0 \in \mathrm{cl}(\mathrm{co}(E))$.
It is enough to show that $\inf_{y \in E} h \cdot y \leq 0$ for any $h \in \mathbb{R}^d$. To proceed by contradiction, assume there exists $h \in \mathbb{R}^d$, and $\epsilon >0$ such that $\epsilon < \inf_{y \in E} h \cdot y \le h \cdot y$ for all $y \in E$ (otherwise we are done).
  From our hypothesis, there exists a sequence $\{ x_j\}_{j = 1}^{\infty} \subseteq \mathrm{co}(D)$ such that $x_j \rightarrow 0$ as $j \rightarrow \infty$. By Theorem \ref{carath} in Appendix \ref{app:conv}, for each $x_j$ there exists $y^{(1,j)}, \dots ,y^{(d+1,j)} \in E$ such that
  \[ x_j= \lambda^j_1 y^{(1,j)}+ \dots + \lambda^j_{d+1} y^{(d+1,j)}\; \mbox{with}\; \sum_{i=1}^{d+1} \lambda^j_i=1, \quad \lambda^j_i \ge 0.\]
  Then
  \begin{eqnarray*}
  0&=& h \cdot 0 = h \cdot \left( \lim_{j \rightarrow \infty} x_j \right) =  \lim_{j \rightarrow \infty} \left(h \cdot  x_j\right)=\\ &=& \lim_{j \rightarrow \infty} \sum_{i=1}^{d+1} \lambda^j_i \left(h \cdot y^{(i,j)} \right)\ge \lim_{j \rightarrow \infty} \sum_{i=1}^{d+1} \lambda^j_i \epsilon = \epsilon.
  \end{eqnarray*}
   which is a contradiction,  thus, we conclude.
\end{proof}

Lemma \ref{translation} below uses the following notation,  for $E\subset \mathbb{R}^d$, and $x_0\in \mathbb{R}^d$, let
\[E-x_0 \equiv \{x-x_0 : x\in E\}\subset \mathbb{R}^d.\]

Since the translation $t_{x_0}:\mathbb{R}^d \rightarrow \mathbb{R}^{d}$, given by $t_{x_0}(x)=x-x_0$, is of the form (\ref{ratioFunction}) in Appendix \ref{app:conv}, and an homeomorphism, the lemma below follows.
\begin{lemma}\label{translation}
\[
0 \in \mathrm{ri}\left(\mathrm{co}(E-x_0)\right)\quad \mbox{iff} \quad x_0\in\mathrm{ri}\left(\mathrm{co}(E)\right).
\]
\[
0 \in \mathrm{cl}\left(\mathrm{co}(E-x_0)\right)\quad \mbox{iff} \quad x_0\in\mathrm{cl}\left(\mathrm{co}(E)\right).
\]
\end{lemma}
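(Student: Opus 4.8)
The plan is to exploit that the translation $t_{x_0}(x)=x-x_0$ is an affine bijection, hence simultaneously a homeomorphism and a map of the form (\ref{ratioFunction}), and to push it through each of the three operators $\mathrm{co}$, $\mathrm{cl}$ and $\mathrm{ri}$ separately. First I would record the elementary identity $\mathrm{co}(E-x_0)=\mathrm{co}(E)-x_0$: every element of either side is a finite convex combination $\sum \lambda_i (y^{(i)}-x_0)=\left(\sum\lambda_i y^{(i)}\right)-x_0$ of translated points of $E$, so the two convex hulls coincide as sets. This reduces both left-hand memberships to statements about $\mathrm{co}(E)-x_0$.

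Next I would transfer the topological operators across the translation. Since $t_{x_0}$ is a homeomorphism it commutes with closure, so $\mathrm{cl}\left(\mathrm{co}(E)-x_0\right)=\mathrm{cl}\left(\mathrm{co}(E)\right)-x_0$; combined with the convex-hull identity this gives $\mathrm{cl}\left(\mathrm{co}(E-x_0)\right)=\mathrm{cl}\left(\mathrm{co}(E)\right)-x_0$. For the relative interior I would invoke the Appendix result on maps of the form (\ref{ratioFunction}), which asserts that such maps preserve the relative interior of a convex set; applied to the convex set $\mathrm{co}(E)$ this yields $\mathrm{ri}\left(\mathrm{co}(E)-x_0\right)=\mathrm{ri}\left(\mathrm{co}(E)\right)-x_0$, and once more with the convex-hull identity $\mathrm{ri}\left(\mathrm{co}(E-x_0)\right)=\mathrm{ri}\left(\mathrm{co}(E)\right)-x_0$.

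Finally I would chase the membership of the origin. For the first equivalence, $0\in\mathrm{ri}\left(\mathrm{co}(E-x_0)\right)$ iff $0\in\mathrm{ri}\left(\mathrm{co}(E)\right)-x_0$ iff $x_0\in\mathrm{ri}\left(\mathrm{co}(E)\right)$, which is exactly the asserted biconditional; the second equivalence follows verbatim with $\mathrm{cl}$ in place of $\mathrm{ri}$ throughout.

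The only nontrivial input is the behaviour of $\mathrm{ri}$ under $t_{x_0}$, and I expect that to be the step to cite cleanly rather than to reprove: closure and convex hull commute with any translation for soft reasons (continuity of a homeomorphism and preservation of convex combinations), but the relative-interior identity genuinely relies on convexity and on the Appendix statement about maps of the form (\ref{ratioFunction}). Once that fact is in hand the whole lemma is a one-line substitution, so the real content is packaged entirely in the cited convex-analysis result, exactly as the sentence preceding the statement anticipates.
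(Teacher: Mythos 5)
Your proposal is correct and takes essentially the same route as the paper, which disposes of the lemma in a single sentence by observing that $t_{x_0}$ is of the form (\ref{ratioFunction}) and a homeomorphism; your elaboration (translation commutes with $\mathrm{co}$, with $\mathrm{cl}$ via the homeomorphism, and with $\mathrm{ri}$ via the ratio-function machinery) is precisely the argument the paper leaves implicit. One small precision: Theorem \ref{prop:afin} literally asserts strict segment preservation, not preservation of relative interiors, so the $\mathrm{ri}$ step formally passes through Corollary \ref{cor:ri} applied to both $t_{x_0}$ and $t_{-x_0}$ (or through Lemma \ref{teo:preservaarb}); this is an immediate inference rather than a gap.
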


\subsection{Convexity Preserving Maps}\label{sec:geometryPreserving}

In order to identify transformations that preserve no-arbitrage ($0$-neutrality) and in view of Proposition \ref{localDefinitions} (Proposition \ref{localDefinitionsII}) and Lemma \ref{translation} we first look for transformations $F:\mathbb{R}^d \rightarrow \mathbb{R}^{d'}$  preserving relative interiors or closures of convex sets in $\mathbb{R}^d$.

The notions introduced below are
expanded in Appendix \ref{app:conv} where we also provide due references and
introduce related definitions and further results.

\begin{definition}[Strict inversely convexity preserving]
Let $\mathbb{V}$ and $\mathbb{V}'$ be real linear spaces, and $C\subset \mathbb{V}$ a nonempty convex subset.
A map $g:C \rightarrow \mathbb{V}'$ is called \emph{strict inversely convexity preserving} if
\begin{equation}\label{invConvexPres}
g((x,y)) \subseteq (g(x),g(y)) \quad \mbox{for all}\; x,y \in C,
\end{equation}
where $(x,y)=\{tx + (1-t)y:0<t<1\}$ (with a similar definition for $[x,y]$, see Appendix \ref{app:conv}). Moreover, $g$ is said to {\it preserve segments} strictly if equality holds in (\ref{invConvexPres}).
\end{definition}

The next lemma provides necessary and sufficient conditions on a transformation $F$  in order to preserve $\mathrm{ri}(\mathrm{co}(E))$, for $E\subset \mathbb{R}^d$.

\begin{lemma} \label{teo:preservaarb}
 Let $C \subseteq \mathbb{R}^d$ be a convex set, $x\in C$ and $E\subset C$.

\noindent $F:C  \rightarrow \mathbb{R}^{d'}$ is a strict inversely convexity preserving map if and only if
\[
x \in \mathrm{ri}\left(\mathrm{co}(E)\right)~~\mbox{implies}~~~F(x) \in \mathrm{ri}\left(\mathrm{co}(F(E))\right). 
\]
\end{lemma}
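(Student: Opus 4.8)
The plan is to prove the two implications separately, the reverse direction being a direct specialization of the hypothesis and the forward direction carrying the real work.

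For the implication ``the relative-interior statement holds $\Rightarrow$ $F$ is strict inversely convexity preserving'', I would fix $x,y\in C$ and apply the hypothesis to the two-point set $E=\{x,y\}$. Then $\mathrm{co}(E)=[x,y]$ and $\mathrm{ri}(\mathrm{co}(E))=(x,y)$, while $\mathrm{ri}(\mathrm{co}(F(E)))=(F(x),F(y))$ (the degenerate cases $x=y$ or $F(x)=F(y)$ reducing to singletons, consistent with the segment convention). Hence any $z\in(x,y)$ satisfies $z\in\mathrm{ri}(\mathrm{co}(E))$, so $F(z)\in(F(x),F(y))$, which is exactly $F((x,y))\subseteq(F(x),F(y))$; this is (\ref{invConvexPres}).

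For the forward direction, assume $F$ is strict inversely convexity preserving and let $x\in\mathrm{ri}(\mathrm{co}(E))$. I would first establish the auxiliary fact that $F(\mathrm{co}(E))\subseteq\mathrm{co}(F(E))$. By Theorem \ref{carath} any $w\in\mathrm{co}(E)$ is a convex combination of finitely many points $y_1,\dots,y_m\in E$; writing $w=\lambda y_1+(1-\lambda)w'$ with $0<\lambda<1$ and $w'\in\mathrm{co}(\{y_2,\dots,y_m\})\subseteq C$, property (\ref{invConvexPres}) gives $F(w)\in(F(y_1),F(w'))$, and an induction on $m$ (the case $y_1=w'$ being trivial) yields $F(w)\in\mathrm{co}(\{F(y_1),\dots,F(y_m)\})\subseteq\mathrm{co}(F(E))$. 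In particular $F(x)\in\mathrm{co}(F(E))$.

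It remains to upgrade this to membership in the relative interior, for which I would invoke the extension characterization of relative interior underlying Proposition \ref{prop:ri}: a point $p$ of a convex set lies in its relative interior iff, for every $a$ in a generating set, the segment from $a$ through $p$ can be extended slightly beyond $p$ while staying in the set (the reduction from arbitrary target points to generators being handled, for each fixed point, by Theorem \ref{carath} and a finite minimum of extension factors, so that infinite $E$ causes no difficulty). Concretely, for each generator $a=F(y)$, $y\in E$, Proposition \ref{prop:ri} applied to $x\in\mathrm{ri}(\mathrm{co}(E))$ and $y\in\mathrm{co}(E)$ gives $\mu>1$ with $w:=\mu x-(\mu-1)y\in\mathrm{co}(E)\subseteq C$. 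When $a\neq F(x)$ we have $y\neq x$, so $x$ lies strictly between $w$ and $y$, and (\ref{invConvexPres}) gives $F(x)=\sigma F(w)+(1-\sigma)a$ with $0<\sigma<1$; then $F(w)=\tfrac1\sigma F(x)-(\tfrac1\sigma-1)a$ is the required extension of $F(x)$ away from $a$ by the factor $1/\sigma>1$, and $F(w)\in\mathrm{co}(F(E))$ by the auxiliary fact. The case $a=F(x)$ is immediate. The characterization then gives $F(x)\in\mathrm{ri}(\mathrm{co}(F(E)))$. I expect the main obstacle to be exactly this forward direction: since $F$ need not be affine, barycentric coordinates are not preserved and one cannot transport the representation of $x$ directly, so every step must be routed through the one-dimensional open-segment property; combining this with the extension characterization while correctly handling the reduction to finitely many generators and the coincidences $F(y)=F(x)$ is the delicate part.
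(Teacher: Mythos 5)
Your proof is correct, but your forward direction takes a genuinely different route from the paper's. The paper splits into two cases according to whether $F(C)$ is contained in a straight line: in the non-collinear case it invokes P\'ales' representation theorem (Theorem \ref{teo:pales}) to upgrade strict inverse convexity preservation to strict segment preservation, hence to the equality $\mathrm{co}(F(E)) = F(\mathrm{co}(E))$, and then places $F(x)$ using the two-sided characterization of the relative interior (Corollary \ref{cor:ri}); the collinear case is handled by a separate ad hoc one-dimensional interval argument. You avoid both the dichotomy and P\'ales' theorem entirely: you use only the inclusion $F(\mathrm{co}(E)) \subseteq \mathrm{co}(F(E))$ (your Carath\'eodory induction straight from (\ref{invConvexPres}) is fine; it also follows from Proposition \ref{prop:preservaconvexos}), and then verify Rockafellar's prolongation criterion (Proposition \ref{prop:ri}) at the generators $F(y)$, $y \in E$, only. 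The generator reduction you sketch is the one genuinely new ingredient and it is sound: if $y' = \sum_i \lambda_i F(y_i)$ and $\epsilon_i > 0$ are the extension factors at the points $F(y_i)$, then with $\epsilon = \min_i \epsilon_i$ one has $F(x) + \epsilon\left(F(x) - y'\right) = \sum_i \lambda_i \left(F(x) + \epsilon\left(F(x) - F(y_i)\right)\right) \in \mathrm{co}(F(E))$, each summand lying on the segment between $F(x)$ and $F(x) + \epsilon_i\left(F(x) - F(y_i)\right)$; your handling of the coincidences $F(y)=F(x)$ and $F(w)=F(y)$ is also right. Comparing costs: the paper's argument is shorter given that P\'ales' theorem is needed elsewhere anyway (Theorem \ref{functionInducingISCP}) and, in the nondegenerate case, delivers the stronger equality $\mathrm{co}(F(E)) = F(\mathrm{co}(E))$; yours is more elementary and uniform---no structure theorem and no separate treatment of collinear images---at the price of the extra reduction lemma. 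Your converse direction is the same two-point specialization as in the paper.
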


\begin{proof} Let $x \in \mathrm{ri}\left(\mathrm{co}(E)\right)$ and $b'\in\mathrm{co}(F(E))$. Assume first that $F(C)$ is contained in a straight line. Fix $b\in\mathrm{co}(E)$, from Corollary \ref{cor:ri}, there exists $a\in\mathrm{co}(E)$ such that $x\in (a,b)$. Then by hypothesis on $F$ and Proposition \ref{prop:preservaconvexos}, $F(a), F(b)\in F(\mathrm{co}(E))\subset \mathrm{co}(F(E))$, and
\[
F(x)\in F((a,b))\subset(F(a),F(b)).
\]
Now, since $\mathrm{co}(F(E))$ is a segment, because it is contained in a straight line, it follows that:

If $b'\in (F(x),F(b))$, or $F(b)\in (F(x),b')$ then
\[
F(x)\in (F(a),b')\subset \mathrm{co}(F(E)).
\]
On the other hand $F(x)\in (F(b),b')\subset \mathrm{co}(F(E))$. Thus in any case by Corollary \ref{cor:ri}
\[
F(x)\in \mathrm{ri}\left(\mathrm{co}(F(E))\right).
\]

If $F(C)$ is not contained in a straight line, by Theorem \ref{teo:pales}, $F$ preserves segments strictly, then $\mathrm{co}(F(E))= F(\mathrm{co}(E))$, so $b'=F(b)$ with $b\in\mathrm{co}(E)$.

As before, there exists $a\in\mathrm{co}(E)$ such that $x\in (a,b)$. Then \\ $F(a)\in \mathrm{co}(F(E))$ and
\[
F(x)\in F((a,b))=(F(a),F(b)),
\]
which also leads to $F(x)\in \mathrm{ri}\left(\mathrm{co}(F(E))\right)$.

To establish the converse, consider the case $E= \{a,b\}$ then \[(a,b)=\mathrm{ri}\left(\mathrm{co}(E)\right)~~\mbox{and}~~(F(a), F(b))= \mathrm{ri}\left(\mathrm{co}(F(E))\right).\]
Now from our hypothesis,
\begin{equation} \label{holdsForAllEAndAllx0}
F(x) \in  (F(a), F(b)),~~\mbox{for any}~~ x \in (a,b),
\end{equation}
therefore, $F$ is strict inversely convexity preserving.
\end{proof}

Observe that, from Lemma \ref{teo:preservaarb}, Proposition \ref{localDefinitions} and Lemma \ref{translation}, $F$ is strict inversely convexity preserving if it preserves disperse sets.

In a similar way than before, $F$ preserves $\mathrm{cl}(\mathrm{co}(E))$, if for $x\in [a,b]=\mathrm{cl}(\mathrm{co}(\{a,b\}))$ it holds that
\[
F(x)\in \mathrm{cl}\left(\mathrm{co}(\{F(a),F(b)\})\right)= [F(a),F(b)].
\]
That is, $F$ need to be inversely convexity preserving (see Proposition \ref{prop:preservaconvexos} in Appendix \ref{app:conv}). However this condition on its own is not sufficient (see next Lemma and Example \ref{counterexample2}.)
\begin{lemma} \label{lem:preserva0neutral}
Let $C \subseteq \mathbb{R}^d$ a convex set, $E\subset C$ and $F:C  \rightarrow \mathbb{R}^{d'}$ a continuous inversely convexity preserving map. If $x_0\in\mathrm{cl}(\mathrm{co}(E))$, then
\[F(x_0)\in \mathrm{cl}\left(\mathrm{co}\left(F(E)\right)\right).\]
\end{lemma}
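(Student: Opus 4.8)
The plan is to approximate $x_0$ from within $\mathrm{co}(E)$, push the approximating points forward through $F$, and then pass to the limit using continuity; the inversely convexity preserving hypothesis is what keeps the pushed-forward points inside $\mathrm{co}(F(E))$, while continuity is what lets us conclude about the limit. First I would fix a sequence $\{x_j\}_{j\ge 1}\subseteq\mathrm{co}(E)$ with $x_j\to x_0$, which exists precisely because $\mathrm{cl}(\mathrm{co}(E))$ is the closure of $\mathrm{co}(E)$ in $\mathbb{R}^d$. Here I would remark that, since $C$ is convex and $E\subset C$, we have $\mathrm{co}(E)\subseteq C$, so each $x_j$ lies in the domain of $F$ and $F(x_j)$ is well defined; the evaluation $F(x_0)$ appearing in the conclusion presupposes $x_0\in C$, which is the point at which continuity will be applied.

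Next I would invoke the inversely convexity preserving hypothesis. By Proposition \ref{prop:preservaconvexos} this property is equivalent to $F(\mathrm{co}(E))\subseteq\mathrm{co}(F(E))$ (it suffices to build up an arbitrary point of $\mathrm{co}(E)$ as an iterated convex combination along segments and use $F([x,y])\subseteq[F(x),F(y)]$ together with the convexity of $\mathrm{co}(F(E))$). Consequently $F(x_j)\in\mathrm{co}(F(E))\subseteq\mathrm{cl}(\mathrm{co}(F(E)))$ for every $j$.

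Finally I would use continuity to close the argument: since $x_j\to x_0$ and $F$ is continuous on $C$, we get $F(x_j)\to F(x_0)$. The set $\mathrm{cl}(\mathrm{co}(F(E)))$ is closed and contains every $F(x_j)$, hence it contains the limit, giving $F(x_0)\in\mathrm{cl}(\mathrm{co}(F(E)))$, as required. There is no deep obstacle here; the only delicate points are making sure the approximating sequence stays in the domain (handled by convexity of $C$) and recognizing that continuity is genuinely indispensable — it is exactly what legitimizes moving $F$ across the limit, and as the remark preceding the lemma and Example \ref{counterexample2} show, the bare inversely convexity preserving condition without continuity would allow the images $F(x_j)$ to accumulate away from $F(x_0)$, so the conclusion could fail.
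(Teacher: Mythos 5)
Your proof is correct and follows essentially the same route as the paper: the paper's proof combines the inclusion $F(\mathrm{co}(E))\subseteq \mathrm{co}(F(E))$ (from the inversely convexity preserving hypothesis) with the fact that continuity gives $F(\mathrm{cl}(A))\subseteq \mathrm{cl}(F(A))$, which is exactly your sequential argument stated in set form. Your added remarks — that $\mathrm{co}(E)\subseteq C$ keeps the approximants in the domain, and the inductive segment argument behind Proposition \ref{prop:preservaconvexos} — are correct elaborations of steps the paper leaves implicit.
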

\begin{proof}
 By continuity of $F$, for all $E \subseteq C$ holds $F(\mathrm{cl}(E)) \subseteq \mathrm{cl}(F(E))$. Furthermore, since $F$ is a inversely convexity preserving map,
\[
F(\mathrm{co}(E))\subset \mathrm{co}(F(E)).
\]
Thus, since $x_0 \in \mathrm{cl}(\mathrm{co}(E))$,
\[
F(x_0) \in F(\mathrm{cl}(\mathrm{co}(F(E))) \subseteq \mathrm{cl}(F(\mathrm{co}(E)) \subseteq \mathrm{cl}\left(\mathrm{co}\left(F(E)\right)\right).
\]
\end{proof}
\begin{ej}\label{counterexample2}
The hypothesis of continuity in Lemma \ref{lem:preserva0neutral} can not be removed, to see this consider $F:\mathbb{R}\rightarrow \mathbb{R}$ given by
\[
F(x)=\left\{\begin{array}{ccc}
            x & \mbox{if} & x\le 0 \\
            x+1 & \mbox{if} & x>0,
\end{array}\right .
\]
is inversely convexity preserving, but not continuous and $0\in \mathrm{cl}((0,1])$, but $$F(0)=0\notin [1,2]=\mathrm{cl}(\mathrm{co}(F((0,1]))).$$
\end{ej}

\subsection{Induced Transformations}

 As indicated in Section \ref{currencyAndNumeraire}, we have taken a standard view in which the original
 sequency $S_i$ is given in a currency numeraire and then the sequence $X(S_i)$ is given in another (arbitrary) numeraire. Since we look for transformations between trajectories of financial markets that preserve their local properties
we will be dealing with two associated functions, $f$ and $F$ the former acting on $S_i$ and the latter on $X(S_i)$. So we will have $\mathbb{R}^{d+1} \xrightarrow{f}\mathbb{R}^{d'+1}$ and $\mathbb{R}^{d} \xrightarrow{F}\mathbb{R}^{d'}$. One could proceed differently and develop an approach which abstracts away this multiplicity; nonetheless, we have decided to proceed the way we do as in practice that is how data is usually presented. This decision  makes our results more readily applicable albeit at the price of some complications.

 Let $X$ and $X'$ be the perspective functions over $\mathbb{R}^{d+1}$ and $\mathbb{R}^{d'+1}$ respectively, as defined in \eqref{eqn:X}. Since local properties are based on properties of discounted values, the function $f$ should induce a function $F:\mathbb{R}^{d}\rightarrow \mathbb{R}^{d'}$, in such a way that the following diagram commutes,
\begin{equation}\label{diagram}
\xymatrix{
\mathrm{dom}~X \ar[r]^f \ar[d]_X & \mathrm{dom}~X' \ar[d]^{X'} \\
\mathrm{Im}~X\subset \mathbb{R}^d \ar@{.>}[r]_{F} & \mathrm{Im}~X'\subset\mathbb{R}^{d'} }
\end{equation}
that is, for $s\in\mathrm{dom}~X$, $F(X(s))\equiv X'(f(s))$. Therefore, if $X(s)=X(\tilde{s})$ for $s,\tilde{s}\in\mathrm{dom}~X$ we will require that $X'(f(s))=X'(f(\tilde{s}))$, which gives a condition on $f$ as we describe next.

Assume $s =(s^0,\ldots,s^d),\; \tilde{s} =(\tilde{s}^0,\ldots,\tilde{s}^d)$, then
\begin{eqnarray*}
\left(\frac{s^1}{s^0},\dots,\frac{s^d}{s^0}\right)=X(s)=X(\tilde{s})=\left(\frac{\tilde{s}^1}{\tilde{s}^0},\dots,\frac{\tilde{s}^d}{\tilde{s}^0}\right) &\Leftrightarrow & s=\frac{s^0}{\tilde{s}^0}\tilde{s},
\end{eqnarray*}
from where $f$ needs to satisfy
\begin{equation}\label{scalar}
f(\lambda s) = \mu_{\lambda,s} ~f(s),\;\lambda,\,\mu_{\lambda,s} >0,\quad \mbox{for any}\;s\in\mathbb{R}^{d+1}.
\end{equation}
We will require (\ref{scalar}) in our main result Theorem \ref{teo:preservaarb2}
(as well as in Theorem \ref{teo:preserva0neutral}).

Lemma \ref{teo:preservaarb} shows that a map $F$ preserving no-arbitrage  is necessarily strict inversely convexity preserving. Lemma \ref{lem:diagconm}, item 1, below provides sufficient conditions on $f$ to establish the said property of $F$; on the other hand, Example \ref{counterexample1} shows that the assumptions on $f$, while being sufficient, are not necessary.
\begin{lemma} \label{lem:diagconm}
Let $f: \mathrm{dom}~X \rightarrow \mathrm{dom}~X'$ be a function satisfying (\ref{scalar}). Then, there exists a unique map $F:\mathrm{Im}~X \to \mathrm{Im}~X'$ which makes commutative the diagram (\ref{diagram}). Moreover
\begin{enumerate}
\item If $f$ is (strict) inversely convexity preserving then $F$ is (strict) inversely convexity preserving.
\item $F$ is continuous iff $f$ is continuous.
\end{enumerate}
\end{lemma}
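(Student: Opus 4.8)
The plan is to first pin down $F$ from the commuting square (\ref{diagram}), and then to transfer each structural property of $f$ to $F$ by exploiting that both perspective maps $X$ and $X'$ send segments onto segments. Existence and uniqueness come first: since $X$ surjects onto $\mathrm{Im}\,X$, the defining relation $F(X(s))=X'(f(s))$ forces $F(y)=X'(f(s))$ for any preimage $s$ of $y$, which gives uniqueness, so existence reduces to checking independence of the preimage. If $X(s)=X(\tilde s)$ then, as computed just above the statement, $s=\lambda\tilde s$ with $\lambda=s^0/\tilde s^0>0$; hypothesis (\ref{scalar}) gives $f(s)=\mu_{\lambda,\tilde s}\,f(\tilde s)$ with $\mu_{\lambda,\tilde s}>0$; and since $X'(\mu t)=X'(t)$ for every $\mu>0$ (the common factor cancels in each ratio $t^j/t^0$), we obtain $X'(f(s))=X'(f(\tilde s))$, so $F$ is well defined.

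For item 1 the geometric engine is that $X$ and $X'$, being of the fractional-linear form (\ref{ratioFunction}), map segments onto segments: for $s_1,s_2\in\mathrm{dom}\,X$ every point of $(X(s_1),X(s_2))$ equals $X(s)$ for some $s\in(s_1,s_2)$, and likewise for $X'$ — the relevant convex combinations never leave the domains, since all zeroth coordinates stay positive. Then, given $y_1,y_2\in\mathrm{Im}\,X$ with $X(s_i)=y_i$ and $y\in(y_1,y_2)$, I write $y=X(s)$ for some $s\in(s_1,s_2)$; the hypothesis on $f$ yields $f(s)\in(f(s_1),f(s_2))$; and applying the segment-preserving $X'$ gives $F(y)=X'(f(s))\in\big(X'(f(s_1)),X'(f(s_2))\big)=(F(y_1),F(y_2))$. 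Hence $F$ is strict inversely convexity preserving, and repeating the argument with closed segments $[\cdot,\cdot]$ in place of open ones settles the non-strict case.

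For item 2 I would dispatch the forward implication by exhibiting the continuous section $\sigma(y)=(1,y^1,\dots,y^d)$ of $X$, so that $F=X'\circ f\circ\sigma$ is a composition of continuous maps (using continuity of $X'$ on $\{t^0>0\}$, where $f$ takes values), whence $F$ is continuous. The reverse implication is where I expect the genuine difficulty: composing with $X$ only gives that $X'\circ f=F\circ X$ is continuous, and because $X'$ retains $f(s)$ merely up to a positive scalar, this controls the direction of $f$ but not its radial magnitude $s\mapsto f(s)^0$. The main obstacle is thus to recover continuity of that radial factor from continuity of $F$, a step that must lean on the homogeneity (\ref{scalar}); I would expect it to require pairing $F$ with a continuous choice of scale along the fibres of $X$, and this is the delicate point of the whole lemma — indeed, read projectively (on the fibres of $X$, where $X$ becomes a homeomorphism onto $\mathrm{Im}\,X$) the equivalence with continuity of $F$ is immediate via $F=X'\circ f\circ\sigma$, so the subtlety lies entirely in whether the chosen lift $f$ carries its radial factor continuously.
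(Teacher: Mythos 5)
Your construction of $F$, the well-definedness check via (\ref{scalar}) together with the scale invariance of $X'$, and your proof of item 1 all coincide with the paper's argument: the paper likewise lifts $x\in(\hat x,\tilde x)$ to a point $\beta\hat s+(1-\beta)\tilde s$ using that $X$ preserves segments strictly (Theorem \ref{prop:afin}), applies the hypothesis on $f$, and pushes down with the strictly segment preserving $X'$. For the forward half of item 2, your continuous section $\sigma(y)=(1,y^1,\dots,y^d)$, giving $F=X'\circ f\circ\sigma$, is a clean alternative to the paper's route, which instead observes that $X$ is continuous and open (computing $X((a^0,b^0)\times Q)=\bigcup_{r\in(a^0,b^0)}\frac1r\,Q$) and obtains continuity of $F$ from $F^{-1}(U)=X(f^{-1}((X')^{-1}(U)))$; both arguments are correct.

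The gap is the reverse implication of item 2: the lemma asserts an ``iff,'' and you leave the direction ($F$ continuous $\Rightarrow$ $f$ continuous) unproven, so as a proof of the stated lemma your proposal is incomplete. That said, your diagnosis of where the difficulty sits is exactly right, and sharper than the paper's own treatment: the paper disposes of this direction with the single clause ``by composition $F$ is continuous iff $f$ is continuous,'' but composition only yields continuity of $X'\circ f=F\circ X$, i.e.\ continuity of $f$ up to its positive radial factor --- precisely the obstruction you isolate. In fact the reverse direction is false as stated: take $d=d'$ and $f(s)=c(s)\,s$ with $c>0$ discontinuous (say $c(s)=1$ if $s^1\le 0$ and $c(s)=2$ if $s^1>0$); then $f(\lambda s)=\frac{\lambda\,c(\lambda s)}{c(s)}f(s)$, so (\ref{scalar}) holds, the induced map is $F=\mathrm{id}$, which is continuous, yet $f$ is discontinuous at points with $s^1=0$. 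So no argument could have closed your gap without an extra hypothesis pinning down the radial factor (e.g.\ continuity of a chosen normalization such as $f^0$). Note that the remainder of the paper only ever invokes the direction you did prove ($f$ continuous $\Rightarrow$ $F$ continuous, used in Theorem \ref{teo:preserva0neutral} via Lemma \ref{lem:preserva0neutral}), so the defective half of the equivalence is harmless downstream.
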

\begin{proof}
For all $x \in \mathrm{Im}~X$ there exist $s \in \mathrm{dom}~X$ such that $X(s)=x$. The only way to define $F$ is then $F(x)=X'(f(s))$ for all $x \in \mathbb{R}^d$ and it is well defined by condition (\ref{scalar}).

Let's see that $F$ is a strict inversely convexity preserving map if $f$ is
assumed to satisfy that property. Fix $\hat{x},\tilde{x} \in \mathrm{Im}~X$ and let $x \in \mathrm{Im}~X$ such that
 \[
 x= \alpha \hat{x} + (1-\alpha) \tilde{x} \mbox{ with } 0<\alpha<1.
 \]
Then, there exists $\hat{s},\tilde{s} \in \mathrm{dom}~X$ such that $X(\hat{s})=\hat{x}$ and $X(\tilde{s})=\tilde{x}$. Moreover, since $X$ is a strict segment preserving map (Theorem \ref{prop:afin}), there exists $\beta\in(0,1)$ such that
\[x=X(\beta \hat{s}+(1-\beta)\tilde{s}).\]
Then since $X'$ is strict segment preserving,
\[ F(x) = X'\left(f\left(\beta \hat{s}+(1-\beta)\tilde{s}\right)\right)\in \left(X'(f(\hat{s})),X'(f(\tilde{s}))\right)= \left(F(\hat{x}),F(\tilde{x})\right).
\]
The proof for the case inversely convexity preserving, is similar. This gives item {\it 1}.

For item {\it 2.}, observe that the perspective functions $X, X'$ are continuous and open. The last assertion follows because if $Q$ is an open cube in $\mathbb{R}^d$, and $a^0<b^0$ positive real numbers, then
\[
X((a^0,b^0)\times Q)=\bigcup_{r\in (a^0,b^0)}\frac 1r \, Q,
\]
is open in $\mathbb{R}^d$. Thus, by composition $F$ is continuous iff $f$ is continuous.
\end{proof}
\begin{ej}\label{counterexample1} The converse in item {\it 1} of Lemma \ref{lem:diagconm}  is not valid. Consider  $f: \{(x,y,z)\in \mathbb{R}^3: x>0\}\rightarrow \{(x,y,z)\in \mathbb{R}^3: x>0\}$ given by
\[
f(x,y,z)= (\frac{x}{x^2+y^2},\frac{y}{x^2+y^2},\frac{z}{x^2+y^2}).
\]
The induced function $F$ is the identity function on $\mathbb{R}^2$ but $f$ is not inversely convexity preserving because $(1,0,0)\in\left((1,-1,0),(1,1,0)\right)$, but
\[
f(1,0,0) = (1,0,0) \notin \left(f(1,-1,0),f(1,1,0)\right)=\left((1/2,-1/2,0),(1/2,1/2,0)\right).
\]
\end{ej}

Under the additional hypothesis that $\mbox{Im}~ F$ contains a nondegenerate triangle, the next Theorem characterizes those $f: \mathrm{dom}~X \rightarrow \mathrm{dom}~X'$ inducing a strict inversely convexity preserving map $F$. That hypothesis  on $\mbox{Im}~ F$ is equivalent to $\mbox{Im}~ F$ not being contained in a straight line. As a complement, Lemma \ref{triangle} shows that this last condition on $\mbox{Im}~ F$ holds iff $\mathrm{Im}~ f$ is not contained in a $2$ dimensional subspace.

\noindent
The appearance of $``0"$ in $\frac{f^0(s)}{L^0(s)}$ below merely reflects our arbitrary choice of $S^0$ as numeraire,
choosing $S^k$ as numeraire will result in the appearance of $\frac{f^k(s)}{L^k(s)}$ in the next result (see Section \ref{example} for an example).
\begin{theorem}\label{functionInducingISCP}
Assume $f: \mathrm{dom}~X \rightarrow \mathrm{dom}~X'$ satisfying (\ref{scalar}), induces a strict inversely convexity preserving map $F$,  such that $\mathrm{Im} F$ is not contained in a straight line. Then
\begin{equation}\label{fractionalf}
f(s)= \frac{f^0(s)}{L^0(s)}L(s),
\end{equation}
with $f^0$ (its first coordinate function) satisfying (\ref{scalar}), $L:\mathbb{R}^{d+1}\rightarrow \mathbb{R}^{d'+1}$, a linear map, with and $L^0, f^0> 0$ on $s^0 > 0$, ($L^0$ the first coordinate of $L$).

\noindent Conversely, if $f$ has the form (\ref{fractionalf}) and satisfies the properties listed after that formula display, then it induces a strict inversely convexity preserving map $F$.
\end{theorem}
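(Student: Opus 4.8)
The plan is to prove the two implications separately: the forward one rests on the geometric classification of segment-preserving maps in Appendix~\ref{app:conv}, while the converse is a direct lifting argument through the perspective functions. For the forward implication I would first apply Lemma~\ref{lem:diagconm}: since $f$ satisfies (\ref{scalar}), it induces a unique $F:\mathrm{Im}\,X\to\mathrm{Im}\,X'$ with $F(X(s))=X'(f(s))$, and by hypothesis this $F$ is strict inversely convexity preserving with $\mathrm{Im}\,F$ not contained in a line. By Theorem~\ref{teo:pales}, $F$ then preserves segments strictly, and the structure theorem for such maps (Appendix~\ref{app:conv}) forces $F$ to be the restriction of a fractional linear map $F(x)=(Ax+b)/(\langle c,x\rangle+e)$, whose affine denominator has constant sign on $\mathrm{Im}\,X$. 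The key computational step is to clear denominators: substituting $x=X(s)$ and multiplying numerator and denominator by $s^0$ turns each coordinate into a ratio of two linear functionals of $s\in\mathbb{R}^{d+1}$, thereby exhibiting a linear map $L:\mathbb{R}^{d+1}\to\mathbb{R}^{d'+1}$ with $L^0(s)=e\,s^0+\langle c,(s^1,\ldots,s^d)\rangle$ and $L^k(s)=b^k s^0+(A(s^1,\ldots,s^d))^k$ satisfying $X'(L(s))=F(X(s))=X'(f(s))$.

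To conclude the forward direction I would use the computation preceding (\ref{scalar}): $X'(f(s))=X'(L(s))$ means $f(s)$ and $L(s)$ are positive scalar multiples of one another, and comparing zeroth coordinates pins the scalar down as $f^0(s)/L^0(s)$, giving $f(s)=\frac{f^0(s)}{L^0(s)}L(s)$. Positivity of $L^0$ and $f^0$ on $\{s^0>0\}$ is exactly what makes $L$ and $f$ land in $\mathrm{dom}\,X'$, and $f^0$ inherits (\ref{scalar}) from $f$ because $L$ is linear.

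For the converse I would first verify that $f=\frac{f^0}{L^0}L$ satisfies (\ref{scalar}): linearity gives $L(\lambda s)=\lambda L(s)$ and $L^0(\lambda s)=\lambda L^0(s)$, so the factors of $\lambda$ cancel and the remaining positive factor comes from $f^0$. Hence $f$ induces some $F$ by Lemma~\ref{lem:diagconm}, and since $f(s)$ is a positive multiple of $L(s)$ we get $X'(f(s))=X'(L(s))$, i.e. $F(X(s))=X'(L(s))$. The final step is the lifting argument: given $x,y\in\mathrm{Im}\,X$ and $z\in(x,y)$, pick preimages $\hat{x},\hat{y}\in\mathrm{dom}\,X$; since $X$ is strict segment preserving (Theorem~\ref{prop:afin}) there is $\hat{z}\in(\hat{x},\hat{y})$ with $X(\hat{z})=z$; linearity sends $\hat{z}$ into the (possibly degenerate) open segment $(L(\hat{x}),L(\hat{y}))$; and the strict segment preserving map $X'$ then yields $F(z)=X'(L(\hat{z}))\in(F(x),F(y))$, which is precisely the strict inverse convexity preservation of $F$.

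The hard part will be the forward direction's reliance on the classification of strict segment preserving maps as fractional linear transformations (the P\'ales-type theorem of Appendix~\ref{app:conv}); this is the one genuinely geometric input, and everything surrounding it is bookkeeping to move between $f$ on $\mathbb{R}^{d+1}$ and $F$ on $\mathbb{R}^d$ through $X$ and $X'$ and to keep track of the homogeneity and positivity conditions.
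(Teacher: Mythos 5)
Your proposal is correct, and its forward direction coincides with the paper's proof: both invoke Theorem~\ref{teo:pales} to write $F(x)=\bigl(A(x)+b\bigr)/\bigl(B(x)+c\bigr)$ with positive denominator on the (convex) domain, clear denominators by substituting $x=X(s)$ and multiplying through by $s^0$ to produce the linear lift $L$, and recover $f=\frac{f^0}{L^0}L$ from $X'(f(s))=X'(L(s))$ by comparing zeroth coordinates. The one point you smooth over is the paper's case analysis on $\mathrm{dom}\,X$: Theorem~\ref{teo:pales} requires a convex domain for $F$, and in the case $\mathrm{dom}\,X=\{s:s^0>0\}$ (so $\mathrm{dom}\,F=\mathbb{R}^d$) the paper additionally cites \cite[Cor 1]{pales} to force the affine denominator to be constant, which yields the sharper normalization $L^0(s)=s^0$; your ``constant sign of the denominator on $\mathrm{Im}\,X$'' is adequate provided $\mathrm{Im}\,X$ is convex, which holds in both cases the paper treats, so this is presentational rather than a mathematical gap. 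Where you genuinely diverge is the converse: the paper computes the induced $F$ in coordinates, arriving at the explicit fractional-linear expression (\ref{palesFormDerived}), and then cites Theorem~\ref{prop:afin}; you instead observe $F(X(s))=X'(L(s))$ and verify strict inverse convexity preservation directly by lifting a point of $(x,y)$ through the strict segment preserving map $X$, pushing with the linear map $L$ (correctly allowing the degenerate case $L(\hat{x})=L(\hat{y})$, where the open segment reduces to a singleton), and projecting with the strict segment preserving $X'$. This is the mechanism of the paper's Lemma~\ref{lem:diagconm}, item 1, transplanted from $f$ to $L$ --- a legitimate move precisely because $f(s)$ is a positive multiple of $L(s)$, even though $f$ itself need not be inversely convexity preserving (Example~\ref{counterexample1}). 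Your route avoids the coordinate computation at the cost of re-running the lifting argument; the paper's computation buys the explicit formula (\ref{palesFormDerived}), which it reuses afterwards (in the remark following Theorem~\ref{teo:preserva0neutral} and in Section~\ref{example}). Both arguments are complete.
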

\begin{proof} Let us consider first that $\mathrm{dom}~X =\{s\in \mathbb{R}^{d+1}:s^i>0\;\forall i\}$, so $\mathrm{dom}~F =\{x\in \mathbb{R}^{d}:x^i>0\;\forall i\}$ is convex, then by Theorem \ref{teo:pales}
\begin{equation}\label{fractionalF}
X'(f(s)) = F(X(s)) =\frac{(A^1(X(s))+b^1,...,A^{d'}(X(s))+b^{d'})}{B(X(s))+c}.
\end{equation}
It then follows that for $1\le i \le d'$,
\[
\frac{f^i(s)}{f^0(s)}=F^i(X(s))= \frac {A^i(X(s))+b^i}{B(X(s))+c}=
\frac{a_{i,1}\frac{s^1}{s^0}+ \cdots + a_{i,d}\frac{s^d}{s^0}+b^i}{B^1\frac{s^1}{s^0}+\cdots + B^d\frac{s^d}{s^0}+c}.
\]
Which can be written as $f^i(s) = f^0(s)\frac{L^i(s)}{L^0(s)}$, with
\[
L^i(s)= b^i s^0 + a_{i,1} s^1+ \cdots + a_{i,d}s^d,\;\mbox{and}\;L^0(s)= c s^0 + B^1 s^1 + \cdots + B^d s^d.
\]
From where, defining $L(s)=(L^0(s),L^1(s),\cdots,L^{d'}(s))$, (\ref{fractionalf}) holds with the expected conditions, since $f^0$ satisfies (\ref{scalar}) because $f$ do, and both $f^0, L^0 > 0$ on $s^0>0$.

Assume now that $\mathrm{dom}~X =\{s\in \mathbb{R}^{d+1}:s^0>0\}$, which implies that $\mathrm{dom}~F = \mathbb{R}^d$, then by \cite[Cor 1]{pales}, (\ref{fractionalF}) can be written with $B(x)+c\equiv 1$. Consequently (\ref{fractionalf}) holds with $L^0(s)=s^0$.

Conversely, if $f$ has the form (\ref{fractionalf}) with the required conditions, then satisfies (\ref{scalar}), because $f_0, L, L^0$ satisfy (\ref{scalar}), by hypothesis and linearity respectively, consequently by Lemma \ref{lem:diagconm} there exists $F$ such that $F(X(s))=X'(f(s))$. Let's show that $F$ is strict inversely convexity preserving.
\[
X'(f(s))= \left(\frac{L^1(s)}{L^0(s)},\cdots,\frac{L^{d'}(s)}{L^0(s)}\right),
\]
where for $1\le i\le d'$
\[
\frac{L^i(s)}{L^0(s)}= \frac{a_{i,0}s^0 +\cdots +a_{i,d}s^d}
     {a_{0,0}s^0+\cdots +a_{0,d}s^d}= \frac{a_{i,0}+a_{i,1}\frac{s^1}{s^0} +\cdots a_{i,d}\frac{s^d}{s^0}}
     {a_{0,0} +a_{0,1}\frac{s^1}{s^0}\cdots a_{0,d}\frac{s^d}{s^0}}=
     \frac{a_{i,0}+A^i(X(s))}{a_{0,0}+B(X(s))}.
\]
With $A^i(x)= a_{i,1}x^1 +\cdots +a_{i,d}x^d$, and $B(x)=a_{0,1}x^1+\cdots +a_{0,d}x^d$ in the last expression. Defining $A=(A^1,\cdots,A^{d'})$ and $b=(a_{1,0},\cdots,a_{d',0})$, it follows that
\begin{equation} \label{palesFormDerived}
F(x)= \frac{b+A(x)}{a_{0,0}+B(x)},
\end{equation}
Which is strict inversely convexity preserving by Theorem \ref{prop:afin}.
\end{proof}

\vspace{.1in}
\begin{lemma}
\label{triangle} Assume $f: \mathrm{dom}~X \rightarrow \mathrm{dom}~X'$ is a function satisfying (\ref{scalar}) and $F$ the induced function as in Lemma \ref{lem:diagconm}. Then, $Im\,F$ is contained in a straight line iff $Im\,f$ is contained in a 2-dimensional subspace.
\end{lemma}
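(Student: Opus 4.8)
The plan is to reduce the statement to a single purely geometric fact about the perspective map $X'$: for any nonempty set $A\subseteq \mathrm{dom}~X'$, the image $X'(A)$ lies on an affine line of $\mathbb{R}^{d'}$ if and only if $A$ is contained in a $2$-dimensional linear subspace of $\mathbb{R}^{d'+1}$. First I would record the identity $\mathrm{Im}\,F = X'(\mathrm{Im}\,f)$. This is immediate from the commuting diagram (\ref{diagram}): since condition (\ref{scalar}) guarantees that $F$ is the well-defined induced map of Lemma \ref{lem:diagconm} and $F(X(s)) = X'(f(s))$, letting $s$ range over $\mathrm{dom}~X$ gives $\mathrm{Im}\,F = \{X'(f(s)):s\in \mathrm{dom}~X\} = X'(\mathrm{Im}\,f)$, where $\mathrm{Im}\,f\subseteq \mathrm{dom}~X'$ because $f$ takes values in $\mathrm{dom}~X'$. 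Taking $A = \mathrm{Im}\,f$, the lemma becomes exactly the geometric fact just stated.

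Then I would prove the two directions of the geometric fact by explicit homogenization. For the ``if'' direction, suppose $A\subseteq V$ for a $2$-dimensional subspace $V$. Since $A$ is nonempty and every point of $A$ has positive $0$-th coordinate, the functional $s'\mapsto s'^0$ is nonzero on $V$, so its kernel in $V$ is one-dimensional and one may choose a basis $u,w$ of $V$ with $u^0 = 1$ and $w^0 = 0$. Writing $s' = \alpha u + \beta w\in A$ gives $s'^0 = \alpha > 0$, whence $X'(s') = \bar u + (\beta/\alpha)\,\bar w$, with $\bar u = (u^1,\dots,u^{d'})$ and $\bar w = (w^1,\dots,w^{d'})$; thus $X'(A)$ lies on the affine line $\{\bar u + t\,\bar w:t\in\mathbb{R}\}$. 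For the ``only if'' direction, suppose $X'(A)\subseteq \{a + t v:t\in\mathbb{R}\}$, and set $\tilde a = (1,a^1,\dots,a^{d'})$, $\tilde v = (0,v^1,\dots,v^{d'})$ in $\mathbb{R}^{d'+1}$ and $V = \mathrm{span}\{\tilde a,\tilde v\}$. For $s'\in A$ with $X'(s') = a + tv$, the relations $s'^i = s'^0(a^i + t v^i)$ for $1\le i\le d'$ together with $s'^0 = s'^0\cdot 1 + (s'^0 t)\cdot 0$ show coordinatewise that $s' = s'^0\,\tilde a + (s'^0 t)\,\tilde v\in V$, so $A\subseteq V$ with $\dim V\le 2$.

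The argument is essentially routine once the homogenization is in place; the only care needed is the bookkeeping for degenerate configurations, which I would absorb into the phrase ``contained in''. Concretely, when $X'(A)$ reduces to a single point (i.e. $\bar w = 0$, or $v=0$ so $\tilde v = 0$ and $\dim V=1$) a point still sits inside a line and a $1$-dimensional subspace still sits inside a $2$-dimensional one, the latter enlargement being possible since $d'\ge 1$. I do not anticipate a genuine obstacle here: the conceptual content is simply that the perspective map identifies affine lines of $\mathbb{R}^{d'}$ with $2$-dimensional subspaces of $\mathbb{R}^{d'+1}$, and everything else is the direct computation above.
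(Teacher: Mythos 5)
Your proof is correct, and the core mechanism is the same homogenization that drives the paper's proof, but the two converses are implemented along genuinely different routes. Your forward (``if'') direction essentially matches the paper's: the paper observes $\mathrm{Im}\,F=\{z\in\mathbb{R}^{d'}:(1,z)\in\lambda\,(\mathrm{Im}\,f)\ \mbox{for some}\ \lambda>0\}$ and identifies $\mathrm{Im}\,F$ with the slice $\pi\cap\{y^0=1\}$, which is exactly your basis computation with $u^0=1$, $w^0=0$ made slightly more explicit. For the converse, the paper argues by contrapositive: if $\mathrm{Im}\,f$ is not contained in a $2$-dimensional subspace, pick $s^1,s^2,s^3$ with $f(s^1),f(s^2),f(s^3)$ linearly independent, write $F(X(s^3))=\alpha F(X(s^1))+(1-\alpha)F(X(s^2))$ from collinearity, and lift this to the linear dependence $f(s^3)=\frac{f^0(s^3)}{f^0(s^1)}\alpha f(s^1)+\frac{f^0(s^3)}{f^0(s^2)}(1-\alpha)f(s^2)$, a contradiction. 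You instead proceed directly, constructing the candidate plane $V=\mathrm{span}\{(1,a),(0,v)\}$ from the line $\{a+tv\}$ and verifying $s'=s'^0\,\tilde a+(s'^0t)\,\tilde v$ coordinatewise. Your version buys modularity --- the isolated fact that $X'$ matches affine lines in $\mathbb{R}^{d'}$ with $2$-dimensional subspaces of $\mathbb{R}^{d'+1}$, applied to $A=\mathrm{Im}\,f$ via $\mathrm{Im}\,F=X'(\mathrm{Im}\,f)$ --- and it handles degeneracies transparently (e.g.\ $v=0$, where $\dim V=1\le 2$), whereas the paper's contrapositive tacitly uses that $F(X(s^1))\ne F(X(s^2))$ (true, since $f(s^1),f(s^2)$ are not proportional, but unstated) to justify the affine-combination step. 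One vacuous remark in your write-up: the case $\bar w=0$ cannot occur, since $w\ne 0$ and $w^0=0$ force $(w^1,\dots,w^{d'})\ne 0$; this is harmless, as a singleton lies on a line anyway.
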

\begin{proof}Assume that $f(s)=(y^0,\ldots,y^{d'})$ then
\[
F(X(s))=X'(f(s)) = \frac{1}{y^0}(y^1,\ldots,y^{d'}).
\]
It follows that
\[
Im\,F = \{z\in \mathbb{R}^{d'}:(1,z)\in \lambda\,(Im\,f),\;\mbox{for some}\;\lambda>0\}.
\]
If $Im\,f \subset \pi$, a 2-dimensional subspace, then
\[
Im\,F \subset \{z\in \mathbb{R}^{d'}:(1,z)\in \pi\},
\]
and this set is contained in the straight line $\pi\cap\{y^0=1\}\subset \mathbb{R}^{d'+1}$.

Conversely, assume there exist $s^1,s^2,s^3\in \mathrm{dom}~X$ such that $f(s^1), f(s^2), f(s^3)$ are l.i. Since $Im\,F$ is contained in a straight line, it follows that there exists $\alpha\in \mathbb{R}$ such that
\[
F(X(s^3)) = \alpha F(X(s^1)) + (1-\alpha)F(X(s^2)) =  \alpha X'(f(s^1)) + (1-\alpha)X'(f(s^2)),
\]
which leads to the contradiction
\[
f(s^3) = \frac{f^0(s^3)}{f^0(s^1)}\alpha f(s^1) + \frac{f^0(s^3)}{f^0(s^2)}(1-\alpha)f(s^2).
\]
\end{proof}

\section{No Arbitrage Invariance}
\label{sec:invarianceOfLocalProperties}

This section studies a class of transformations that do not change a given node's local properties of being arbitrage-free (this latter notion as per Definition \ref{trSpaceLocArbFree}). We also provide an explicit characterization for such symmetry transformations, this is achieved under a general, and weak, condition restricting their ranges.

As a special case, we will prove that the no-arbitrage property is unchanged under a  change of num\'eraire.  We also describe the similar results that apply for the property of $0$-neutral and, therefore, need also pursue some developments that apply to this concept as well. In general, the class of transformations studied should represent symmetries obeyed by any type of functional relationship among asset's prices resulting from no arbitrage considerations. In particular, if prices
$S$ satisfy a $h(S)=0$ relation, one then expects $h(S')=0$
where $S$ and $S'$ are related by a no-arbitrage symmetry as per Definition \ref{marketTransformation} below.
This fact is illustrated with an example  in Section \ref{example}.

Let $\Me= \Se \times \He$ and $\Me'= \Se' \times \He'$ be trajectory based markets, with $d+1$ assets, and $d'+1$ assets respectively. A transformation of $\Me$ onto $\Me'$, will be given by a function $f:\mathbb{R}^{d+1} \rightarrow \mathbb{R}^{d'+1}$ which will be called a {\it market transformation}. That is, to a trajectory $\se=(S,W,m)\in\Me$ corresponds a trajectory $\se'=(S',W',m')\in\Me'$, where $S'_k=f(S_k),\;k\ge 0$, and $W',m'$ are transformed in consequence. For instance, if $W$ represents the quadratic variation of the logarithm of the assets prices, then
\[W'_k = \sum_{i=0}^{k-1}(\log f(S_{i+1}) - \log f(S_i))^2.\] This example illustrates a case  when $W'$ can  be obtained from ${\bf S}'$. In other cases, when this is not possible,  $W'$ and $m'$ should be prescribed but, how this is actually done does not affect the developments in the present section.

\begin{definition} \label{marketTransformation}
A market transformation $f$, as above, which leaves invariant the arbitrage-free property ($0$-neutral property), as per Definition \ref{trSpaceLocArbFree},  of a given market's node will be called a
{\it no-arbitrage symmetry} ({\it $0$-neutral symmetry}).
\end{definition}
Therefore, if the node $({\bf S}, k)$ is arbitrage-free so will be
$({\bf S}', k)$ if $f$ is a no-arbitrage symmetry (similarly for a $0$-neutral symmetry). This remark also shows that the composition of no-arbitrage symmetries ($0$-neutral symmetries) is a no-arbitrage symmetry ($0$-neutral symmetry). We may refer to either type of symmetry as NAS (No-Arbitrage Symmetries) when there is no need to be specific.

\begin{remark}
The above notions depend on a choice of numeraire through Definition \ref{trSpaceLocArbFree} but we will prove in Corollary \ref{cor:numeraire} that a symmetry transformation
remains as such under a numeraire change. Of course the interest is in
general symmetry transformations $f:\mathbb{R}^{d+1} \rightarrow \mathbb{R}^{d'+1}$ that behave so for any possible node in any possible trajectory market (with corresponding dimension $d$) and that is the type of characterization we pursue.
\end{remark}

Recall from Definition \ref{localWithRespectToH} that local conditions are based on properties of the increment set $\Delta X(\Se_{(\se, k)})$, where $(\se, k)$ is a node of the market model. This set is totally determined by the values taken by the trajectories in the stage $k + 1$ and the value of $S_k$. To make this fact explicit, for each node $(\se, k)$ we introduce a notation for the set of reachable prices:
\begin{eqnarray} \label{eqn:conjsigma}
\Sigma_k(\se) &\equiv& \{ \hat{S}_{k+1}: \hat{\se}=(\hat{S},\hat{W},\hat{m}) \in \Se_{(\se,k)}\} \subseteq \mathbb{R}^{d+1}.
\end{eqnarray}

The next proposition (which follows from Lemma \ref{translation} in Section \ref{sec:geometryPreserving}) shows that local conditions can be rewritten in terms of the set $\Sigma_k(\se)$.

\begin{proposition} \label{prop:equivlocal}
Given a trajectory based set $\Se$, $\se=\{(S_i,W_i,m)\}_{i \geq 0} \in \Se$ and an integer $k \ge 0$.
\begin{enumerate}
\item The node $(\se,k)$ is arbitrage-free if, and only if,
\[ X(S_k) \in \mathrm{ri}\left(\mathrm{co}\left(X (\Sigma_k(\se))\right)\right).\]
\item The node $(\se,k)$ is $0$-neutral if, and only if,
\[ X(S_k) \in \mathrm{cl}\left(\mathrm{co}\left(X (\Sigma_k(\se))\right)\right).\]
\end{enumerate}
\end{proposition}

\begin{theorem}[Arbitrage-free invariance] \label{teo:preservaarb2}
Assume $f: \mathrm{dom}~X \rightarrow \mathrm{dom}~X'$ to be a map satisfying (\ref{scalar}) and that the function $F$, induced by Lemma \ref{lem:diagconm}, is strict inversely convexity preserving. Given a trajectory based market $\Me= \Se \times \He$, let $\se \in \Se$ and $k \ge 0$. If $(\se,k)$ is an arbitrage-free node, then $(\se',k)$, where $S'_i=f(S_i)~~i\ge 0$, is an arbitrage-free node in a transformed market $\Me'= \Se' \times \He'$, i.e.
\[ 0 \in \mathrm{ri}(\mathrm{co}(\{ X'(f(\hat{S}_{k+1}))-X'(f(S_k)) : \hat{\se} \in \Se_{(\se,k)} \})) \subseteq \mathbb{R}^{d'}\]
and so $f$ is a no-arbitrage symmetry.
\end{theorem}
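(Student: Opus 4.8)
The plan is to reduce the claim to a short diagram chase between the two equivalent formulations of local no-arbitrage, with the relative-interior preservation property of $F$ as the engine. First I would invoke Proposition \ref{prop:equivlocal}, item 1, to translate the hypothesis that $(\se,k)$ is an arbitrage-free node into the purely geometric statement
\[
X(S_k) \in \mathrm{ri}\left(\mathrm{co}\left(X(\Sigma_k(\se))\right)\right),
\]
where $\Sigma_k(\se)$ is the reachable-price set from (\ref{eqn:conjsigma}). Setting $E \equiv X(\Sigma_k(\se)) \subseteq \mathbb{R}^d$ and $x \equiv X(S_k)$, this reads precisely $x \in \mathrm{ri}(\mathrm{co}(E))$.

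Next I would bring in the hypothesis that the induced map $F$ is strict inversely convexity preserving. Taking $C = \mathrm{Im}~X$ (which is convex, being either $\mathbb{R}^d$ or a positive orthant, so the hypotheses of Lemma \ref{teo:preservaarb} are met) and applying the forward implication of Lemma \ref{teo:preservaarb}, membership in the relative interior is transported:
\[
F(x) \in \mathrm{ri}\left(\mathrm{co}(F(E))\right).
\]
The decisive step is then to unwind $F(x)$ and $F(E)$ through the commuting diagram (\ref{diagram}) furnished by Lemma \ref{lem:diagconm}, namely the identity $F(X(s)) = X'(f(s))$. This yields $F(x) = X'(f(S_k))$ together with
\[
F(E) = X'(f(\Sigma_k(\se))) = \left\{ X'(f(\hat{S}_{k+1})) : \hat{\se} \in \Se_{(\se,k)} \right\},
\]
so that $X'(f(S_k)) \in \mathrm{ri}\left(\mathrm{co}\left(\{ X'(f(\hat{S}_{k+1})) : \hat{\se} \in \Se_{(\se,k)}\}\right)\right)$.

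Finally I would pass to the origin-centered form asserted in the theorem by invoking the translation invariance of relative interiors from Lemma \ref{translation}. With $x_0 = X'(f(S_k))$ and $E' = \{ X'(f(\hat{S}_{k+1})) : \hat{\se} \in \Se_{(\se,k)}\}$, the equivalence $x_0 \in \mathrm{ri}(\mathrm{co}(E'))$ iff $0 \in \mathrm{ri}(\mathrm{co}(E' - x_0))$ gives exactly
\[
0 \in \mathrm{ri}\left(\mathrm{co}\left(\{ X'(f(\hat{S}_{k+1})) - X'(f(S_k)) : \hat{\se} \in \Se_{(\se,k)}\}\right)\right),
\]
which, read back through Proposition \ref{prop:equivlocal} for the transformed market (identifying the reachable-price set of the transformed node with $f(\Sigma_k(\se))$), is precisely the statement that $(\se',k)$ is an arbitrage-free node; since $\se$ and $k$ were arbitrary, $f$ is a no-arbitrage symmetry. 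I expect the only genuine care needed to be the bookkeeping verifications, namely that $\mathrm{Im}~X$ is convex so that Lemma \ref{teo:preservaarb} applies verbatim, and that the transformed node's reachable prices are the image $f(\Sigma_k(\se))$; the core chain of implications is otherwise a direct composition of the cited lemmas.
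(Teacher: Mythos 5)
Your proposal is correct and follows essentially the same route as the paper's proof: both translate the arbitrage-free node hypothesis into the membership $X(S_k) \in \mathrm{ri}(\mathrm{co}(X(\Sigma_k(\se))))$, push it through the induced map $F$ via Lemma \ref{teo:preservaarb}, and unwind using the commuting diagram of Lemma \ref{lem:diagconm} together with the translation equivalence of Lemma \ref{translation}. Your extra bookkeeping (convexity of $\mathrm{Im}~X$ so that Lemma \ref{teo:preservaarb} applies, and identifying the transformed node's reachable prices with $f(\Sigma_k(\se))$) only makes explicit what the paper leaves implicit.
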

\begin{proof}
We know from Lemma \ref{lem:diagconm} that there exists $F:\mathrm{dom}~X \to \mathrm{dom}~X'$ given by $F(x)=X'(f(s))$, where $s \in \mathrm{dom}~X$ such that $X(s)=x$. Thus, since by hypothesis it is a strict inversely convexity preserving map, from Lemma \ref{teo:preservaarb} and Lemma \ref{translation}, it follows (by taking $x_0=X(S_k)$) that
   \[0 \in \mathrm{ri}(\mathrm{co}(\{F(x)-F(x_0) : x \in X(\Sigma_k(\se)) \})),\]
or, equivalently,
\[0 \in \mathrm{ri}(\mathrm{co}(\{ X'(f(\hat{S}_{k+1}))-X'(f(S_k)) : \hat{\se} \in \Se_{(\se,k)} \})).\]
\end{proof}

\begin{remark} By Lemma \ref{lem:diagconm} item  1, if $f$ is strict inversely convexity preserving, then the induced $F$ satisfies the hypothesis of Theorem \ref{teo:preservaarb2}. Also notice that if $Im\;F$ contains a nondegenerate triangle, by Theorem \ref{functionInducingISCP}, $f$ is of the form given by (\ref{fractionalf}). 
\end{remark}

\begin{corollary}[Explicit Characterization] \label{characterization}
Assume $f: \mathrm{dom}~X \rightarrow \mathrm{dom}~X'$ satisfies (\ref{scalar}).
\begin{enumerate}
\item If $f$ is a no-arbitrage symmetry (as per Definition \ref{marketTransformation}) for any market and $\mbox{Im}~f$ is not contained in a $2$-dimensional subspace then $f$ is characterized by expression (\ref{fractionalf}).
\item Conversely if $f$ has the form (\ref{fractionalf}) then it is a no-arbitrage symmetry for any possible market.
\end{enumerate}
\end{corollary}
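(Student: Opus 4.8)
The plan is to handle the two implications separately, obtaining the reverse implication (2) directly from the invariance theorem and the direct implication (1) by forcing the induced map to be convexity preserving.

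For item (2), suppose $f$ has the form (\ref{fractionalf}) together with the stated linearity and positivity conditions. The converse half of Theorem \ref{functionInducingISCP} then guarantees that the map $F$ induced by $f$ through Lemma \ref{lem:diagconm} is strict inversely convexity preserving. With this in hand, Theorem \ref{teo:preservaarb2} applies with no further hypotheses: for any trajectory market and any arbitrage-free node $(\se,k)$, the transformed node $(\se',k)$ is again arbitrage-free. By Definition \ref{marketTransformation} this says exactly that $f$ is a no-arbitrage symmetry, and since the argument is insensitive to the particular market it holds for every market.

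For item (1), the crucial reduction is to show that the hypothesis ``$f$ is a no-arbitrage symmetry for every market'' forces the induced $F$ to be strict inversely convexity preserving. Granting this, Lemma \ref{triangle} converts the assumption that $\mathrm{Im}\,f$ is not contained in a $2$-dimensional subspace into the statement that $\mathrm{Im}\,F$ is not contained in a straight line, and Theorem \ref{functionInducingISCP} then delivers the representation (\ref{fractionalf}), finishing the proof. By Lemma \ref{teo:preservaarb} it suffices, in order to establish the reduction, to verify the implication $x \in \mathrm{ri}(\mathrm{co}(E)) \Rightarrow F(x) \in \mathrm{ri}(\mathrm{co}(F(E)))$; as in the converse part of that lemma, it is in fact enough to treat two-point sets $E=\{a,b\}$ and check that $F(x)\in (F(a),F(b))$ whenever $x \in (a,b)$.

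The step I expect to demand the most care is the realization argument that produces a market witnessing each such configuration. Given $a,b \in \mathrm{Im}\,X$ and $x \in (a,b)$, I would build a one-step trajectory set whose node $(\se,k)$ satisfies $X(S_k)=x$ and $X(\Sigma_k(\se))=\{a,b\}$: since the perspective function $X$ maps $\mathrm{dom}\,X$ onto $\mathrm{Im}\,X$, one picks preimages $s_0,s^{(1)},s^{(2)} \in \mathrm{dom}\,X$ with $X(s_0)=x$, $X(s^{(1)})=a$, $X(s^{(2)})=b$ and forms the set of two trajectories branching from $s_0$. Because the node notions of Definition \ref{trSpaceLocArbFree} are independent of the portfolio set, no constraints on $\He$ enter here. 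By Proposition \ref{prop:equivlocal} the node $(\se,k)$ is arbitrage-free precisely because $x \in (a,b)=\mathrm{ri}(\mathrm{co}(\{a,b\}))$; applying the no-arbitrage symmetry hypothesis to this market makes the transformed node arbitrage-free, and a second use of Proposition \ref{prop:equivlocal} yields $X'(f(S_k)) \in \mathrm{ri}(\mathrm{co}(X'(f(\Sigma_k(\se)))))$, that is $F(x) \in \mathrm{ri}(\mathrm{co}(F(\{a,b\}))) = (F(a),F(b))$. This is exactly the two-point condition required, so $F$ is strict inversely convexity preserving and the reduction, and with it item (1), is complete.
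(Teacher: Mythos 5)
Your proposal is correct and follows essentially the same route as the paper: item (2) via the converse of Theorem \ref{functionInducingISCP} combined with Theorem \ref{teo:preservaarb2}, and item (1) by reducing the symmetry hypothesis to strict inverse convexity preservation of the induced $F$ and then invoking Lemma \ref{triangle} and Theorem \ref{functionInducingISCP}. The only difference is that you spell out the two-trajectory witness market (via Proposition \ref{prop:equivlocal} and the two-point case of Lemma \ref{teo:preservaarb}) that the paper's proof leaves implicit in its terse appeal to ``Theorem \ref{teo:preservaarb2} and Lemma \ref{teo:preservaarb}''.
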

\begin{proof}
We recall that (\ref{scalar}) assures the existence of the induced function $F$ as in Lemma \ref{lem:diagconm}. Assume $f$ is a no-arbitrage symmetry from a market $\Me$ onto a market $\Me'$. Then, by Theorem \ref{teo:preservaarb2} and Lemma \ref{teo:preservaarb}, the induced $F$ must be strict inversely convexity preserving. Moreover, if $\mbox{Im}~f$ is not contained in a $2$-dimensional subspace Lemma \ref{triangle} implies that $\mbox{Im}~F$ is not contained in a straight line. Finally by Theorem \ref{functionInducingISCP} $f$ takes the form (\ref{fractionalf}). This proves 1.

For the converse, if $f$ has the form (\ref{fractionalf}), the converse of Theorem \ref{functionInducingISCP} implies that the induced function $F$ is strict inversely convexity preserving. Thus by Lemma \ref{teo:preservaarb} and
Theorem \ref{teo:preservaarb2} $f$ is a no-arbitrage symmetry for any possible market.\end{proof}

\vspace{.05in}
Observe that the composition of no-arbitrage symmetries of the form (\ref{fractionalf}) is again of this form.

A transformation of interest in financial terms is the one that changes the market model's
num\'eraire. Let's assume that the first asset $S^1$ is strictly positive for every trajectory in $\Se$, so the first coordinate can take the place of an alternative
num\'eraire for the model. For each $\se \in \Se$, we will denote the sequence of prices relative to $S^1$ by $Y(\se)=\{(Y(S_i),W_i,m)\}_{i \ge 0}$  where $Y: D'\subset\mathbb{R}^{d+1} \rightarrow \mathbb{R}^d$ is the perspective function over the second coordinate:
\begin{equation} \label{eqn:cambionumeraire}
 Y(s) \equiv \left(\frac{s^{0}}{s^1},\frac{s^{2}}{s^1},\dots,\frac{s^{d}}{s^1}\right) \quad D'\equiv\{ s=(s^0,\dots,s^d) \in \mathbb{R}^{d+1}:s^1>0\}.
 \end{equation}
$Y^j(S_i)$ represents the value of the $ j $-th asset in units of the new  num\'eraire. We will prove next the following proposition that will be useful for the coming results.

\begin{proposition} \label{lem:comp}
  Let $\sigma$ be the permutation on $\mathbb{R}^{d+1}$ that interchanges the first coordinate with the second and $X$ the perspective function on $\mathbb{R}^{d+1}$ (defined in \eqref{eqn:X}). Then, $Y = X \circ \sigma$ over $D" \equiv \{s \in \mathrm{dom}~X : s^1>0 \}$. Furthermore, $\sigma$ is a strict segment preserving map.
\end{proposition}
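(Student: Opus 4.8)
The plan is to verify the two assertions separately, both by direct arguments. For the identity $Y = X \circ \sigma$, I would simply unwind the definitions. Writing $s = (s^0, s^1, s^2, \dots, s^d)$, the transposition gives $\sigma(s) = (s^1, s^0, s^2, \dots, s^d)$, so its zeroth (``first'') coordinate is $s^1$. I would first note that $X \circ \sigma$ is defined exactly where $(\sigma(s))^0 = s^1 > 0$; hence on $D'' = \{s \in \mathrm{dom}~X : s^1 > 0\}$, where both $s^0 > 0$ and $s^1 > 0$, both $Y$ and $X \circ \sigma$ are simultaneously defined. Applying \eqref{eqn:X} to $\sigma(s)$ then yields
\[
X(\sigma(s)) = \left(\frac{(\sigma(s))^1}{(\sigma(s))^0}, \frac{(\sigma(s))^2}{(\sigma(s))^0}, \dots, \frac{(\sigma(s))^d}{(\sigma(s))^0}\right) = \left(\frac{s^0}{s^1}, \frac{s^2}{s^1}, \dots, \frac{s^d}{s^1}\right),
\]
which is precisely $Y(s)$ as given in \eqref{eqn:cambionumeraire}. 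The only care needed here is with the index bookkeeping, namely that the ``first'' coordinate is $s^0$ and the ``second'' is $s^1$.

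For the second assertion, I would exploit that $\sigma$ is \emph{linear}: it is a coordinate transposition, hence a linear isomorphism of $\mathbb{R}^{d+1}$ (in fact its own inverse). For any $x, y \in \mathbb{R}^{d+1}$ and $t \in (0,1)$, linearity gives $\sigma(tx + (1-t)y) = t\,\sigma(x) + (1-t)\,\sigma(y)$. Letting $t$ range over $(0,1)$, the left-hand side sweeps out $\sigma((x,y))$ while the right-hand side sweeps out $(\sigma(x), \sigma(y))$, so
\[
\sigma((x,y)) = (\sigma(x), \sigma(y)).
\]
This is exactly the equality version of \eqref{invConvexPres}, so $\sigma$ preserves segments strictly. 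Injectivity of $\sigma$ also guarantees $\sigma(x) \neq \sigma(y)$ whenever $x \neq y$, so no segment degenerates unexpectedly.

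There is no serious obstacle in this proposition: both parts reduce to elementary computations. The only points demanding attention are the domain matching in the first part, i.e.\ confirming that $D''$ is the correct common domain on which both $Y$ and $X \circ \sigma$ are defined, and the consistent handling of the coordinate indices under $\sigma$. The strict segment preserving claim is immediate from linearity and needs no appeal to the more delicate fractional/projective machinery of Appendix \ref{app:conv} required for the perspective map $X$ itself.
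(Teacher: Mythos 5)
Your proof is correct, and the first claim is handled exactly as in the paper: fix $s \in D''$, note $\sigma(s) \in \mathrm{dom}~X$, and compute $X(\sigma(s)) = \left(\frac{s^0}{s^1}, \frac{s^2}{s^1}, \dots, \frac{s^d}{s^1}\right) = Y(s)$. For the second claim you diverge slightly from the paper: the paper disposes of it in one line by observing that $\sigma$ is linear and invoking Theorem \ref{prop:afin} (the P\'ales result that maps of the form $g(x) = \frac{A(x)+b}{B(x)+c}$ preserve segments strictly, with $\sigma$ being the special case $A=\sigma$, $b=0$, $B=0$, $c=1$), whereas you inline a direct computation: $\sigma(tx+(1-t)y) = t\,\sigma(x)+(1-t)\,\sigma(y)$ for $t \in (0,1)$, so that $\sigma((x,y)) = (\sigma(x),\sigma(y))$ as $t$ sweeps $(0,1)$, which is precisely the strict segment-preserving property of Definition \ref{def:preservasegmentos}. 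Your version is self-contained and avoids the appendix machinery entirely, which is a fair trade for such an elementary fact; the paper's citation has the mild advantage of uniformity, since Theorem \ref{prop:afin} is the same tool used for the perspective map $X$ itself and for the converse direction of Theorem \ref{functionInducingISCP}. Your closing remark about injectivity is harmless but unnecessary: the parametric identity already gives equality of the two open segments even in the degenerate case $x=y$.
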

\begin{proof}
  Fix $s \in D"$, then $\sigma(s) \in \mathrm{dom}~X$ and
\begin{eqnarray*}
  (X\circ \sigma) (s) = X\left(s^1,s^0,\dots,s^d\right)=\left(\frac{s^0}{s^1},\dots,\frac{s^d}{s^1}\right) = Y(s).
\end{eqnarray*}
Since $\sigma$ is a linear map, it follows from Theorem \ref{prop:afin}
in Appendix \ref{app:conv}, that it is a strict segment preserving map.
\end{proof}

We are now in a position to show that the arbitrage-free condition on a trajectory based market $\Me$ is independent of the choice of num\'eraire. For this, we will state the following Corollary.

\begin{corollary} \label{cor:numeraire}
Let $\Me=\Se \times \He$ a semi-bounded trajectory based market such that $S^1 >0$ for all $\se \in \Se$ and $\He$ contains the class of restricted portfolios (\ref{restrictedPort}). If $\Me$ is arbitrage-free with $S^0$ as num\'eraire, then $\Me$ is arbitrage-free with $S^1$ as num\'eraire.
\end{corollary}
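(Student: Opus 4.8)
The plan is to route the argument through the local characterization of no-arbitrage and to realize the change of num\'eraire from $S^0$ to $S^1$ as the coordinate swap $\sigma$ of Proposition \ref{lem:comp}, which is a no-arbitrage symmetry. Throughout I would restrict the perspective functions to $D''=\{s : s^0>0,\ s^1>0\}$; by hypothesis $S^0>0$ and $S^1>0$ along every trajectory, so each price $S_k$ and each reachable price in $\Sigma_k(\se)$ lies in $D''$, both $X$ and $Y$ are defined on all the relevant data, and $\sigma$ maps $D''$ into itself. This restriction is essential: as a global map on $\{s^0>0\}$ the swap $\sigma$ does not satisfy the positivity hypotheses of Corollary \ref{characterization} (its first coordinate is $s^1$, not positive on the whole half-space), so I would apply the more flexible Theorem \ref{teo:preservaarb2} with $\mathrm{dom}~X$ taken to be $D''$.

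First, since $\Me$ is arbitrage-free with $S^0$ as num\'eraire and $\He$ contains the restricted portfolios, Proposition \ref{globalImpliesLocalNA} gives that $\Se$ is locally arbitrage-free, i.e.\ (via Proposition \ref{prop:equivlocal}) every node $(\se,k)$ satisfies $0\in\mathrm{ri}\bigl(\mathrm{co}\bigl(\Delta X(\Se_{(\se,k)})\bigr)\bigr)$. Next I would check that $\sigma$ meets the hypotheses of Theorem \ref{teo:preservaarb2}: being linear it satisfies \eqref{scalar} and, by Theorem \ref{prop:afin}, preserves segments strictly, hence is strict inversely convexity preserving; by Lemma \ref{lem:diagconm}(1) the induced map $F$ is then strict inversely convexity preserving as well. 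Applying Theorem \ref{teo:preservaarb2} node by node yields $0\in\mathrm{ri}\bigl(\mathrm{co}\bigl(\{X(\sigma(\hat S_{k+1}))-X(\sigma(S_k)):\hat\se\in\Se_{(\se,k)}\}\bigr)\bigr)$, and since $X\circ\sigma=Y$ on $D''$ by Proposition \ref{lem:comp}, this is exactly $0\in\mathrm{ri}\bigl(\mathrm{co}\bigl(\{Y(\hat S_{k+1})-Y(S_k):\hat\se\in\Se_{(\se,k)}\}\bigr)\bigr)$. Thus every node is arbitrage-free with respect to the num\'eraire $S^1$.

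Finally, because the local-to-global construction singles out no particular coordinate, Theorem \ref{teo:locarbfree} applies verbatim with $Y$ in place of $X$: as $\Me$ is semi-bounded and locally arbitrage-free with $S^1$ as num\'eraire, it is arbitrage-free with $S^1$ as num\'eraire, which is the claim. I expect the main obstacle to be precisely this last bookkeeping point: confirming that Proposition \ref{globalImpliesLocalNA} and Theorem \ref{teo:locarbfree}, although stated and proved for the num\'eraire $S^0$, transfer without change to $S^1$, together with keeping the domain $D''$ consistent so that $\sigma$, $X$ and $Y$ are simultaneously well defined on all prices involved.
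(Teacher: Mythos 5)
Your proposal is correct and follows essentially the same route as the paper's own proof: Proposition \ref{globalImpliesLocalNA} to pass to local no-arbitrage, Proposition \ref{lem:comp} to realize the num\'eraire change as the coordinate swap $\sigma$ with $Y = X\circ\sigma$, Theorem \ref{teo:preservaarb2} applied node by node, and Theorem \ref{teo:locarbfree} to return to the global statement. Your added care about restricting to $D''$ and about why Corollary \ref{characterization} is not directly applicable to $\sigma$ is a sound refinement of details the paper leaves implicit, not a different argument.
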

\begin{proof}
From Proposition \ref{lem:comp} above, it follows that $Y = X\circ \sigma$ on the set $D" \equiv \{s \in \mathrm{dom}~X : s^1>0 \}$,
 where $\sigma$ is the permutation of the first coordinate by the second. Also, since $\Me$ is arbitrage-free, it follows from Proposition \ref{globalImpliesLocalNA} that $\Me$ is locally arbitrage-free. Then, any node $(\se,k)$ in the market is arbitrage-free (all notions with $S^0$ as num\'eraire). As $\sigma$ verifies the hypothesis of Theorem \ref{teo:preservaarb2}, we can ensure that
\[0 \in  \mathrm{ri}\left(\mathrm{co}\left(\Delta Y(\Se_{(\se,k)})\right)\right) \equiv \mathrm{ri}\left(\mathrm{co}\left(\{ Y(\hat{S}_{k+1})-Y(S_k): \hat{\se} \in \Se_{(\se,k)} \} \right) \right),\]
for all $(\se,k)$, or, in other words, every node is arbitrage-free with respect to the num\'eraire $S^1$. Then, it follows from Theorem \ref{teo:locarbfree}, that $\Me$ is arbitrage-free with respect to the num\'eraire $S^1$.
\end{proof}

\vspace{.2in}
Our goal is now to find market transformations $f$ that preserve $0$-neutral nodes (i.e. $0$-neutral symmetries). From Lemma \ref{lem:preserva0neutral} we know that the induced transformation $F$ needs to be continuous and inversely convexity preserving in order to preserve the closure of convex sets. The following Theorem shows that these conditions are, somehow, sufficient to obtain a $0$-neutral symmetry as per Definition \ref{marketTransformation}.

\begin{theorem}[$0$-neutral invariance] \label{teo:preserva0neutral}
Let $f: \mathrm{dom}~X \rightarrow \mathrm{dom}~X'$ be a continuous map satisfying (\ref{scalar}) and the function $F$, induced by Lemma \ref{lem:diagconm}, is inversely convexity preserving. Given a trajectory based market $\Me= \Se \times \He$, let $\se \in \Se$ and $k \ge 0$. If $(\se,k)$ is a $0$-neutral node, then $(\se',k)$ is a $0$-neutral node in the transformed market $\Me'= \Se' \times \He'$, i.e.
\[
X'(f(S_k))\in\mathrm{cl}\left(\mathrm{co}\left(X'(f(\Se_{(\se,k)})) \right)\right)
\]
and so $f$ is a $0$-neutral symmetry.
\end{theorem}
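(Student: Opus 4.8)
The plan is to mirror the proof of Theorem \ref{teo:preservaarb2}, replacing the relative-interior machinery (Lemma \ref{teo:preservaarb}) with its closure counterpart (Lemma \ref{lem:preserva0neutral}). First I would put the hypothesis in purely geometric form: by item 2 of Proposition \ref{prop:equivlocal}, the node $(\se,k)$ being $0$-neutral is equivalent to
\[ X(S_k) \in \mathrm{cl}\left(\mathrm{co}\left(X(\Sigma_k(\se))\right)\right), \]
where $\Sigma_k(\se)$ is the reachable-price set (\ref{eqn:conjsigma}). This reduces the theorem to a statement about the induced map $F$ acting on the set $E \equiv X(\Sigma_k(\se))$ and the point $x_0 \equiv X(S_k)$, both lying in $\mathrm{Im}~X$.

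Next I would assemble the required properties of $F$. Since $f$ satisfies (\ref{scalar}), Lemma \ref{lem:diagconm} yields a unique induced map with $F(X(s)) = X'(f(s))$; by item 2 of that lemma $F$ is continuous because $f$ is, and by hypothesis $F$ is inversely convexity preserving. Taking $C = \mathrm{Im}~X$ (a convex set, being either $\mathbb{R}^d$ or a positive orthant) with $E \subseteq C$ and $x_0 \in C$, all the hypotheses of Lemma \ref{lem:preserva0neutral} are met, so
\[ F(x_0) \in \mathrm{cl}\left(\mathrm{co}\left(F(E)\right)\right). \]

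Finally I would translate back through the commuting diagram (\ref{diagram}). Using $F(x_0) = F(X(S_k)) = X'(f(S_k))$ and $F(E) = F(X(\Sigma_k(\se))) = X'(f(\Sigma_k(\se)))$, the displayed inclusion becomes
\[ X'(f(S_k)) \in \mathrm{cl}\left(\mathrm{co}\left(X'(f(\Se_{(\se,k)}))\right)\right), \]
which, reading Proposition \ref{prop:equivlocal} item 2 now in the transformed market $\Me'$, is exactly the assertion that the image node $(\se',k)$ is $0$-neutral. Hence $f$ is a $0$-neutral symmetry per Definition \ref{marketTransformation}.

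I do not expect a genuinely hard step; the content is essentially a bookkeeping application of the earlier lemmas. The one point that must not be glossed over is the continuity of $F$: Example \ref{counterexample2} shows that dropping continuity breaks the inclusion $F(x_0) \in \mathrm{cl}(\mathrm{co}(F(E)))$ already in dimension one, so the assumption that $f$ (hence $F$) be continuous is precisely the extra hypothesis absent from the arbitrage-free case and is doing real work. A secondary technical check is that $x_0 = X(S_k)$ lies in the domain $\mathrm{Im}~X$ of $F$ so that $F(x_0)$ is defined; this holds automatically since $S_k \in \mathrm{dom}~X$, and is immediate when $\mathrm{Im}~X = \mathbb{R}^d$.
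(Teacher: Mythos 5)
Your proposal is correct and follows essentially the same route as the paper's own proof: invoke Lemma \ref{lem:diagconm} for the existence and continuity of the induced map $F$, observe it is inversely convexity preserving by hypothesis, and apply Lemma \ref{lem:preserva0neutral} to the point $X(S_k)$ and the set of transformed relative prices, with Proposition \ref{prop:equivlocal} translating between nodes and the convex-geometric statements. Your additional checks (that continuity is the genuinely load-bearing extra hypothesis, per Example \ref{counterexample2}, and that $X(S_k)$ lies in the domain of $F$) are sound refinements of the same argument rather than a different approach.
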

\begin{proof}
We know from Lemma \ref{lem:diagconm}, in Section \ref{sec:geometryPreserving}, that there exists a continuous map $F:\mathbb{R}^d \to \mathbb{R}^{d'}$ given by $F(x)=X'(f(s))$, where $s \in \mathrm{dom}~X$ such that $X(s)=x$. Moreover, by hypothesis it is inversely convexity preserving.

Thus, since by hypothesis $X(S_k)\in \mathrm{cl}\left(\mathrm{co}\left(X(\Se_{(\se,k)}) \right)\right)$, from Lemma \ref{lem:preserva0neutral} in Section \ref{sec:geometryPreserving}, it follows that,
 \[X'(f(S_k)) \in \mathrm{cl}\left( \mathrm{co}\left(X'(f(\Se_{(\se,k)}))\right)\right).\]
\end{proof}

By the converse of Theorem \ref{functionInducingISCP}, if $f$ is given as in the expression (\ref{fractionalf}), with the
prescribed conditions,
then the induced funtion $F$  has the expression (\ref{palesFormDerived}). Therefore, it is also
inversely convexity preserving and continuous by
Theorem \ref{prop:afin}; then $f$ preserves $0$-neutral nodes and so it is $0$-neutral symmetry.

\vspace{.1in}
In the $0$-neutral market definition, the selection of an explicit num\'eraire is required. Consider, as in Corollary \ref{cor:numeraire}, a trajectory based markets such that $S^1> 0$ for all trajectories, this will allow to take that coordinate as an alternative num\'eraire.

\begin{corollary}
Given a semi bounded trajectory based market $\Me= \Se \times \He$ such that $S^1 >0$ for all $\se \in \Se$. If $(\se,k)$ is a $0$-neutral node with respect to the num\'eraire $S^0$, then it is also with respect to the num\'eraire $S^1$. In particular, if $\Me$ is locally $0$-neutral, it will also be $0$-neutral for any choice of num\'eraire.
\end{corollary}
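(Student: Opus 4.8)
The plan is to mirror the argument used for Corollary \ref{cor:numeraire}, replacing the arbitrage-free machinery of Theorem \ref{teo:preservaarb2} with its $0$-neutral counterpart, Theorem \ref{teo:preserva0neutral}. The numeraire change from $S^0$ to $S^1$ is implemented by the coordinate permutation $\sigma$ of Proposition \ref{lem:comp}, which satisfies $Y = X \circ \sigma$ on the set $\{s \in \mathrm{dom}~X : s^1 > 0\}$ and is a strict segment preserving (in particular linear and continuous) map. The whole proof reduces to checking that $\sigma$ satisfies the hypotheses of Theorem \ref{teo:preserva0neutral} and then reading the conclusion of that theorem through the identity $Y = X \circ \sigma$.

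First I would verify the hypotheses of Theorem \ref{teo:preserva0neutral} for $f = \sigma$. Condition (\ref{scalar}) holds by linearity, taking $\mu_{\lambda,s} = \lambda > 0$. Continuity of $\sigma$ is immediate, so by Lemma \ref{lem:diagconm} item {\it 2} the induced $F$ is continuous. Since $\sigma$ is strict segment preserving it is a fortiori strict inversely convexity preserving, whence Lemma \ref{lem:diagconm} item {\it 1} shows that the induced $F$ is strict inversely convexity preserving; because equality in (\ref{invConvexPres}) forces the open-segment inclusion, and the endpoints map to endpoints, $F$ is in particular inversely convexity preserving, exactly as required.

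With the hypotheses in place I would invoke Theorem \ref{teo:preserva0neutral} with $f = \sigma$ and $X' = X$ (both numeraires live in dimension $d$). By Proposition \ref{prop:equivlocal} the assumption that $(\se,k)$ is $0$-neutral with respect to $S^0$ reads $X(S_k) \in \mathrm{cl}(\mathrm{co}(X(\Sigma_k(\se))))$, and the theorem then yields $X(\sigma(S_k)) \in \mathrm{cl}(\mathrm{co}(X(\sigma(\Sigma_k(\se)))))$. Substituting $Y = X \circ \sigma$, this is precisely $Y(S_k) \in \mathrm{cl}(\mathrm{co}(Y(\Sigma_k(\se))))$, which by Proposition \ref{prop:equivlocal} says that $(\se,k)$ is $0$-neutral with respect to the numeraire $S^1$.

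For the final assertion, if $\Me$ is locally $0$-neutral then every node is $0$-neutral with respect to $S^0$, so by the argument just given every node is $0$-neutral with respect to $S^1$; thus $\Me$ is locally $0$-neutral in the $S^1$ numeraire, and since $\Me$ is semi-bounded, Theorem \ref{teo:0neutral} upgrades this to global $0$-neutrality with respect to $S^1$. Nothing distinguishes the second coordinate beyond its positivity, so the same reasoning applies verbatim to any coordinate $S^j$ that is strictly positive along every trajectory, giving numeraire independence. The main obstacle is not analytic but organizational: one must keep careful track that the ``transformed market'' produced by $\sigma$ is read through the \emph{same} perspective function $X$, and confirm that strict segment preservation of $\sigma$ descends to inverse convexity preservation of the \emph{induced} $F$ (not merely of $f$ itself), so that Theorem \ref{teo:preserva0neutral} is genuinely applicable.
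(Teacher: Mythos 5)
Your proposal is correct and follows essentially the same route as the paper's own proof: invoke Proposition \ref{lem:comp} to write $Y = X\circ\sigma$, check that the permutation $\sigma$ satisfies the hypotheses of Theorem \ref{teo:preserva0neutral} (the paper asserts this without the detail you supply), and then pass from local to global $0$-neutrality via Theorem \ref{teo:0neutral}. Your extra verifications --- condition (\ref{scalar}) via linearity, continuity, and the descent from strict segment preservation of $\sigma$ to inverse convexity preservation of the induced $F$ through Lemma \ref{lem:diagconm} --- merely make explicit what the paper leaves implicit.
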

\begin{proof}
From Proposition \ref{lem:comp} it follows $Y = X \circ \sigma$ over the set $D" = \{s \in \mathrm{dom}~X : s^1>0 \}$, where $\sigma$ is the permutation of the first coordinate by the second. Since $\sigma$ verifies the hypothesis of Theorem \ref{teo:preserva0neutral}, then
\[0 \in  \mathrm{cl} \left( \mathrm{co}\left(\left\lbrace Y(\hat{S}_{k+1})-Y(S_k) : \hat{\se} \in \Se_{(\se,k)} \right\rbrace \right)\right)\]
 and $(\se,k)$ is a $0$-neutral node with respect to the num\'eraire $S^1$. We can conclude that if $\Me$ is locally $0$-neutral with respect to the num\'eraire $S^0$, then it will also be $0$-neutral with respect to the num\'eraire $S^1$. Therefore, it follows from Theorem \ref{teo:0neutral}, that if $\Me$ is locally $0$-neutral for $S^0$, $\Me$ will be $0$-neutral for any other choice of num\'eraire.
\end{proof}

\section{Example}
\label{example}

We will provide a slightly non-traditional development on the call-put parity relationship. This is a simple relation among prices of certain assets; it is derived in many  textbooks and can be obtained through a no-arbitrage based proof. We will derive it under the weaker hypothesis of $0$-neutrality  and relate the relationship to NAS (No-Arbitrage Symmetries). Our main point of revisiting the call-put parity is that it will allow us to provide an explicit example of NAS (besides a change of numeraire) as well as to illustrate their meaning in this context.

\subsection{Call-Put Parity Under $0$-Neutrality}
Consider an arbitrary time evolution of four assets
$S_t \equiv (C_t, P_t, Y_t, B_t),$ $0\leq t \leq T$. We require,
\begin{equation} \nonumber
C_T= (Y_T- B_T)_+, P_T= (B_T- Y_T)_+,~\mbox{and}~B_T= K~\mbox{where ~$K$~is a constant}.
\end{equation}
That is: $C$ is a European call written on asset $Y$, with strike $K$ and expiration $T$. Similarly for the European put $P$.  $B$ is a bond. Clearly  $ (C_T-P_T-Y_T+B_T)=0$,
which can be thought as a boundary condition. Under an appropriate no-arbitrage assumption the call-put parity is the following result \cite[Cor 1.4.2]{musiela}:
\begin{equation} \label{noArbitrageConstraint}
(C_t-P_t-Y_t+B_t)=0,~\forall~~0 \leq t \leq T.
\end{equation}
That is, no-arbitrage, under the said conditions, constraints the evolution of the four assets accordingly to (\ref{noArbitrageConstraint}).

 We will add details on dimensions that are neglected in the above formulation, dispensing with units/dimensions is standard in the literature but making them explicit is
relevant to our philosophy as a change of units should be a NAS (but we do not explore this view in the paper). We will insert appropriate dimensions/units whenever relevant but switch (or alternate) to suppressing units (as usual) whenever the relevant dimensions have been made clear.
We write $Z= (Z) [Z]$ where $(Z)$ is the (dimensionless)
numerical value and $[Z]$ the dimensional units of the variable $Z$ respectively.

We will have $[C_t] = \frac{{\bf 1}_{\$}}{{\bf 1}_{C}}$ which would require (see below)  the insertion of  a dimensional constant
$a$ with units $[a]= \frac{{\bf 1}_{S}}{{\bf 1}_{C}}$ with $(a)$ representing
the number of shares associated to a call option. We will take  $(a)=1$ but an arbitrary value of $(a)$ will have the effect of multiplying the call-put parity by $(a)$ (usually, in practice $(a)=100$). So $a$
represents the number of shares per call contract, this is not an artificial insertion as it is a feature of traded options. Similarly $P_t$ will contain a dimensional constant $b$ with $[b]= \frac{{\bf 1}_{S}}{{\bf 1}_{P}}$ 
with $(b)$ representing
the number of shares associated to a put option, we will take $(b)= (a)=1$ in order to derive the put-call parity relationship (as indicated, one can multiply the resulting expression by an arbitrary dimensionless number $(a)$).

In order to provide a derivation of (\ref{noArbitrageConstraint}) under $0$-neutrality, we first express the above setting in our trajectorial framework. The above formulation is in continuous time but we consider this to be a nonessential point (as we argue below). We will work with trajectories of the form
${\bf S}_i= (S_i, t_i, m) = (S_i^0, S_i^1, S_i^2, S_i^3,  t_i, m)=  (C_i, P_i, Y_i, B_i, t_i, m)$
where $0= t_0 < t_1< \ldots < t_m= T$.
Clearly, the times $t_i$ are trajectory dependent; as a particular case we could take $t_i= \frac{i ~T}{M}$, $0 \leq i \leq M$ for a given constant $M$.  Given that the argument will apply to any trajectory set with these coordinates we can approximate any arbitrary time $t$ by taking $M$ larger.

Let $\mathcal{S}$ denote any $0$-neutral trajectory set with the above introduced coordinates and that obeys
\begin{equation}  \nonumber
S^0_{M({\bf S})}=C_{M({\bf S})}= a~(S^2_{M({\bf S})}- [K]~S^3_{M({\bf S}))})_+=
a~(Y_{M({\bf S})}- K~ \frac{{\bf 1}_{\$}}{{\bf 1}_{B}})_+,
\end{equation}
\begin{equation} \nonumber
S^1_{M({\bf S})}= P_{M({\bf S})}= b~([K]~S^3_{M({\bf S})}- S^2_{M({\bf S}))})_+=
b~(K~\frac{{\bf 1}_{\$}}{{\bf 1}_{B}}- Y_{M({\bf S})})_+,
\end{equation}
\begin{equation} \nonumber
~\mbox{and}~S^3_{M({\bf S})}= B_{M({\bf S})}= (K)~\frac{{\bf 1}_{\$}}{{\bf 1}_{B}}~\mbox{for all}~{\bf S}. ~K~\mbox{is a dimensional constant with}~ [K]= \frac{{\bf 1}_{B}}{{\bf 1}_{S}}, 
\end{equation}
$K$ represents the number of bond units per share and so $K~ \frac{{\bf 1}_{\$}}{{\bf 1}_{B}}$ is the strike price.
So we have
$[C_i] = [a]~\frac{{\bf 1}_{\$}}{{\bf 1}_{Y}}$, $[P_i] = [b]~ \frac{{\bf 1}_{\$}}{{\bf 1}_{Y}}$, $[Y_i] = \frac{{\bf 1}_{\$}}{{\bf 1}_{Y}}$ and
$[B_i] = \frac{{\bf 1}_{\$}}{{\bf 1}_{B}}$.  Moreover, assume $M({\bf S})= m$ to be a stopping time in the sense
that if ${\bf S}'_k= {\bf S}_k$ for all $0 \leq k \leq M({\bf S})$ then
$M({\bf S}')= M({\bf S})$. Finally, we also assume $t_{M({\bf S})}=T$. Such $\mathcal{S}$  will be called {\it admissible}.

\noindent
The previous call-put parity is now written with units and taking $(a)= (b)$:
\begin{equation} \label{noArbitrageConstraint2}
\beta(S_i)\equiv ( {\bf 1}_C~C_i-{\bf 1}_P~ P_i -{\bf 1}_Y~Y_i+ {\bf 1}_B~B_i )=0,~\forall~~0 \leq i \leq M.
\end{equation}
That is, no-arbitrage, under the said conditions, constraints the evolution of the four assets accordingly to (\ref{noArbitrageConstraint2}).
(\ref{noArbitrageConstraint2}) holds if and only if 
${\bf 1}_B~ \pi(X(S_i)) \equiv  {\bf 1}_B [(\frac{C_i}{B_i})- (\frac{P_i}{B_i})- (\frac{Y_i}{B_i}) +1)]=0$ where, as defined before  $X(S_{M({\bf S})})= (\frac{S^0_{M({\bf S})}}{S^3_{M({\bf S})}},\frac{S^1_{M({\bf S})}}{S^3_{M({\bf S})}}, \frac{S^2_{M({\bf S})}}{S^3_{M({\bf S})}})= (\frac{C_{M({\bf S})}}{B_{M({\bf S})}},\frac{P_{M({\bf S})}}{B_{M({\bf S})}}, \frac{Y_{M({\bf S})}}{B_{M({\bf S})}})$ (notice that we are abusing the notation by using
$S^3$ as numeraire instead of the usual $S^0$).

\noindent
To establish (\ref{noArbitrageConstraint2}), we will return now to the usual practice of suppressing the units, in particular, in the proof below when we write $X(S_i)$ it will be interpreted
as the coordinates without the dimensions i.e. $((\frac{C_i}{B_i}),(\frac{P_i}{B_i}), (\frac{Y_i}{B_i}))$.

\subsection{Proof of Call-Put Parity}

Let $\Pi \equiv \{x \in \mathbb{R}^3: \pi(x^1, x^2, x^3)= x^1-x^2-x^3+1=0\}$. Consider an admissible trajectory set as described above; according to Proposition 5.2 item 2: $X(S_{M({\bf S})-1}) \in \mbox{cl}(\mbox{co}(X(\Sigma_{M({\bf S})-1}({\bf S}))))$. Clearly  $\mbox{cl}(\mbox{co}(X(\Sigma_{M({\bf S})-1}({\bf S})))) \subseteq \Pi$ and therefore
 $\pi(X(S_{M({\bf S})-1})) =0$.  Continuing the argument by induction we obtain
  $\pi(X(S_i))= [(\frac{C_i}{B_i})- (\frac{P_i}{B_i}) - (\frac{Y_i}{B_i})+1] =0$ for all $0 \leq i \leq M({\bf S})$ which is our version of the call-put parity. The result is here established solely under the hypothesis of $0$-neutrality that is weaker than the no-arbitrage assumption.

\subsection{An Example of a NAS}

Let us introduce the following transformation:
\begin{equation} \nonumber
C_i \rightarrow C_i'= \frac{P_i}{Y_i~B_i}, ~P_i \rightarrow P_i'= \frac{C_i}{Y_i~B_i}, Y_i \rightarrow Y_i'= \frac{1}{Y_i}, ~B_i \rightarrow B_i'= \frac{1}{B_i}.
\end{equation}
So
\begin{equation} \nonumber
(C_i, P_i, Y_i, B_i) \rightarrow (C'_i, P'_i, Y'_i, B'_i)= \frac{1}{Y_i~B_i}(P_i, C_i, B_i, Y_i, ).
\end{equation}
We then have (we are disregarding dimensional constants with numerical value $1$):
\begin{equation} \nonumber
C'_{M({\bf S})}= (Y'_{M({\bf S})}- B'_{M({\bf S})})_+= (\frac{1}{Y_{M({\bf S})}}- \frac{1}{K})_+
\end{equation}
\begin{equation} \nonumber
P'_{M({\bf S})}= (B'_{M({\bf S})}- Y'_{M({\bf S})})_+= (\frac{1}{K}- \frac{1}{Y_{M({\bf S})}})_+.
\end{equation}

In financial terms, the transformed variables $C'_i, P'_i$ are prices of a call and a put options, respectively, but now depending on the price of the same asset $Y_i$  but expressed in terms of shares per currency unit. This is not equivalent to using $Y$ as the numeraire. 

Notice that 
\begin{equation} \nonumber
C'_i- P'_i- Y'_i+ B'_i=
 \frac{1}{Y_{i}~B_{i}}(P_{i}-C_{i}-B_{i}+Y_{i})=0.
\end{equation}

In fact, we  will argue that $\rightarrow$ is indeed a NAS. We change notation to 
touch basis with the formal notation in the  paper, let: 
$s \rightarrow s'$ being given by $s'= f(s)$. 
where, with the notation $s \equiv (s^0, s^1, s^2, s^3)$,
 $~~f(s^0, s^1, s^2, s^3) = \frac{(s^1, s^0, s^3, s^2)}{s^2 s^3}$ and notice that if $L(s^0, s^1, s^2, s^3) \equiv (s^1, s^0, s^3, s^2)$, a linear function,  
we obtain:
$f(s)= \frac{f^3(s)}{L^3(s)} L(s)$. So $f$ has the form (\ref{fractionalf})
and by Corollary \ref{characterization} item $2$,  $f$ is a no-arbitrage symmetry. In fact, $f$ preserves $0$-neutrality as well and this follows from the converse of Theorem \ref{functionInducingISCP} and Theorem  \ref{teo:preserva0neutral}.

\subsection{Call-Put Parity Under a No-Arbitrage Symmetry}
Let us now see the effect on the call-put parity relation after applying a no-arbitrage symmetry. Towards this goal, consider $f$ to be a no-arbitrage symmetry satisfying (\ref{scalar}) and such that $Im f$ is not contained in a $2$-dimensional subspace.
From Corollary \ref{characterization}, we have $f(S_i^0, \ldots, S_i^4)= f(S_i)
= \frac{f^3(S_i)}{L^3(S_i)} L(S_i)$ where $f^3, L^3, L$ are as in Theorem \ref{functionInducingISCP}. Then $$F(X(S_i))= F(\frac{S^0_i}{S^3_i},\frac{S^1_i}{S^3_i},\frac{S^2_i}{S^3_i})=
(\frac{C_i'}{B_i'},\frac{P_i'}{B_i'},\frac{Y_i'}{B_i'})=
(\frac{L^0(S_i)}{L^3(S_i)},\frac{L^1(S_i)}{L^3(S_i)},\frac{L^2(S_i)}{L^3(S_i)}).$$ All in all, we will then take (with some abuse of notation):
\begin{equation} \nonumber
F(x)= \frac{A(x)+b}{B(x) +c}~~\mbox{where}~(B(x) +c) >0,
\end{equation}
with $A:\mathbb{R}^3 \rightarrow \mathbb{R}^3$ and  $B:\mathbb{R}^3 \rightarrow \mathbb{R}$ both linear transformations (notice that we have reproduced computations from Theorem \ref{functionInducingISCP}).

Before proceeding to a computation we need to impose that the boundary condition behaves as follows:
\begin{equation} \label{transformedPricesAreOptionPrices}
C'_T= (Y'_T- B'_T)_+, P'_T= (B'_T- Y'_T)_+,
\end{equation}
that is, the corresponding transformed price coordinates are prices
of a call and a put on the transformed asset. Such an imposition is necessary for the derivation to follow and prescribes that the boundary condition is invariant under $F$.

We briefly sketch an argument establishing
\begin{equation} \label{putCallParityIsInvariant}
\pi(F(X(S_i))) = \frac{a_F}{B(x) +c} \pi(X(S_i)),
\end{equation}
where $a_F \equiv  (a_{1,1}- a_{2,1} - a_{3,1}- a_{4,1})$ and $a_{j,k}$ are
the matrix coordinates of a matrix representation of $A$. The relationship
(\ref{putCallParityIsInvariant}) makes it immediate that $\pi(X(S_i))=0$
implies $\pi(F(X(S_i)))=0$ and hence reflecting the notion of symmetry embodied
by $F$. The implication  $\pi(X(S_i))=0 \implies \pi(F(X(S_i)))=0$  is known to us
without recourse to (\ref{putCallParityIsInvariant}), this is so because
$f$ is a no-arbitrage symmetry and so a $0$-neutral symmetry and
given that $\mathcal{S}$ is assumed to be $0$-neutral so will then be
$\mathcal{S}'$ (this trajectory set obtained from $\mathcal{S}$ by acting
with $f$ on the trajectories ${\bf S} \in \mathcal{S}$).

Given that $\pi$ is linear it is enough to consider the case $F(x) = A(x)$ and establish the existence of $a_F$ such that $\pi(F(X(S_i))) = a_F ~\pi(X(S_i))$.

To start,  substracting the two equations in (\ref{transformedPricesAreOptionPrices}) we obtain:
\begin{equation} \label{usefulRelationship}
(a_{1,1}- a_{2,1}) C_T+ (a_{1,2}- a_{2,2}) P_T+ (a_{1,3}- a_{2,3}) Y_T
+(a_{1,4}- a_{2,4}) B_T=
\end{equation}
\begin{equation} \nonumber
(a_{3,1}- a_{4,1}) C_T+ (a_{3,2}- a_{4,2}) P_T+ (a_{3,3}- a_{4,3}) Y_T
+(a_{3,4}- a_{4,4}) B_T.
\end{equation}
It turns out, that in order to establish $\pi(F(X(S_i))) = a_F ~\pi(X(S_i))$, we will only need to obtain some relationships among the matrix entries $a_{i,j}$. For reasons of space we only sketch the derivations which follow from (\ref{usefulRelationship}). First, let $Y_T > B_T$
and  equating
coefficients of $Y_T$ (equating coefficients of variables does require some minimal assumptions on $Y_T$ and $B_T$ which we do not make explicit) we obtain
\begin{equation}  \label{sTTermCaseI}
(a_{1,1}- a_{2,1}) + (a_{1,3}- a_{2,3})= (a_{3,1}- a_{4,1})+ (a_{3,3}- a_{4,3}),
\end{equation}
a similar relation is obtained for the coefficients of $B_T$.
Two more analogous relationships among coefficients are obtained from the case $Y_T < B_T$. The said relationships allow to evaluate as follows
\begin{eqnarray*}
\pi(F(X(S_i))) &=&
\pi(A(S_i))= a_{1,1} C_i+ a_{1,2} P_i+ a_{1,3} S_i+ a_{1,4} B_i\\
&-& a_{2,1} C_i- a_{2,2} P_i- a_{2,3} S_i- a_{2,4} B_i
- a_{3,1} C_i- a_{3,2} P_i\\
&-& a_{3,3} S_i- a_{3,4} B_i
+ a_{4,1} C_i+ a_{4,2} P_i+ a_{4,3} S_i+ a_{4,4} B_i\\
&=& (a_{1,1}-a_{2,1}- a_{3,1}+a_{4,1})C_i+  (a_{1,2}-a_{2,2}- a_{3,2}+a_{4,2})P_i\\
&+& (a_{1,3}-a_{2,3}- a_{3,3}+a_{4,3}) S_i+  (a_{1,4}-a_{2,4}- a_{3,4}+a_{4,4})B_i\\
&=& a_F ~\pi(X(S_i)).
\end{eqnarray*}

\section{Conclusion}

The paper poses and solves the following basic question: what transformations, acting on financial events,
leave the no-arbitrage property invariant? Such transformations are called no-arbitrage symmetries (NAS) and are interpreted as  mapping financial events to financial events. We make use of results from convex analysis and a general non-probabilistic framework to characterize and provide explicit expressions for the NAS. We take advantage of a formulation of arbitrage free markets (as per Section \ref{geometricCharacterizations}) in terms of geometric assumptions of the trajectories in discrete time. The problem formulation naturally provides  the characterization, in a local sense, of no-arbitrage preserving transformations.

The transformed variables, i.e. the output values of NAS, do require an interpretation as the original setting is abstract and general. For example, in the example of Section \ref{example} we have to impose that boundary conditions should also be invariant under NAS and in so doing we required that two of the transformed variables acted as call and put options on the two remaining transformed variables. From such a general point of view we think that the result of applying a NAS to financial events are admissible prices for financial events but the latter will require an interpretation that will depend on the context and the specific NAS under consideration.

\appendix

\section{Results and Proofs from Section \ref{sec:arby0neutral}}
\label{proofsAndResults}

The following simple characterization of $0$-neutral markets is used in one of ours results.
\begin{proposition} \label{prop:epcons}
A trajectory based market $\Me=\Se \times \He$ is $0$-neutral if and only if, for each $\Phi \in \He$ and $\epsilon >0$ there exist $\se^{\epsilon} \in \Se$ such that
\begin{equation} \label{eqn:epcons}
\sum_{i=0}^{N_{\Phi}(\se^{\epsilon})-1} H_i(\se^{\epsilon}) \cdot \Delta_i X(S^{\epsilon})<\epsilon.
\end{equation}
\end{proposition}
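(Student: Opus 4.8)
The plan is to reduce the statement to a direct unfolding of Definition \ref{0Neutral} together with the elementary characterization of an infimum, so that essentially no analysis beyond bookkeeping with suprema and infima is needed. The key observation, already recorded in the proof of Proposition \ref{prop:0neutral}, is that since the null portfolio $0$ belongs to $\He$ and $G^{0}_{N_{0}}(\se)=0$ for every $\se\in\Se$, the quantity $\sup_{\Phi \in \He}\inf_{\se \in \Se} G^{\Phi}_{N_{\Phi}}(\se)$ is always $\geq 0$. Consequently the equality required for $0$-neutrality collapses to the single inequality $\sup_{\Phi \in \He}\inf_{\se \in \Se} G^{\Phi}_{N_{\Phi}}(\se)\leq 0$, and a supremum over $\Phi$ is $\leq 0$ precisely when each term is $\leq 0$. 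Thus $\Me$ is $0$-neutral if and only if $\inf_{\se \in \Se} G^{\Phi}_{N_{\Phi}}(\se)\leq 0$ for every $\Phi \in \He$.

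First I would make this reduction explicit. It then remains to show that the pointwise-in-$\Phi$ condition $\inf_{\se \in \Se} G^{\Phi}_{N_{\Phi}}(\se)\leq 0$ is equivalent to the $\epsilon$-statement in the proposition, namely that for each $\epsilon>0$ there is $\se^{\epsilon}\in\Se$ with $G^{\Phi}_{N_{\Phi}}(\se^{\epsilon})<\epsilon$. Recalling that $G^{\Phi}_{N_{\Phi}}(\se)=\sum_{i=0}^{N_{\Phi}(\se)-1} H_i(\se)\cdot \Delta_i X(S)$, this is exactly the statement that the infimum is $\leq 0$: if the infimum is $\leq 0<\epsilon$ then the defining property of the infimum furnishes some trajectory whose value falls below $\epsilon$, and conversely if for every $\epsilon>0$ some trajectory gives a value below $\epsilon$ then $\inf_{\se}G^{\Phi}_{N_{\Phi}}(\se)\leq\epsilon$ for all $\epsilon>0$, hence $\leq 0$.

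I would carry out the two implications separately. For the forward direction (assuming $\Me$ is $0$-neutral), fix $\Phi$ and $\epsilon>0$; by the reduction $\inf_{\se}G^{\Phi}_{N_{\Phi}}(\se)\leq 0<\epsilon$, so the defining property of the infimum yields the desired $\se^{\epsilon}$. For the converse (assuming the $\epsilon$-condition), fix $\Phi$ and let $\epsilon\downarrow 0$ in the hypothesis to force $\inf_{\se}G^{\Phi}_{N_{\Phi}}(\se)\leq 0$; since this holds for every $\Phi$ while the supremum is always $\geq 0$, the defining equality of Definition \ref{0Neutral} follows.

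There is no genuine analytic obstacle here; the only point I would state explicitly rather than gloss over is the use of $0\in\He$ to pin the supremum from below. This is what allows the two-sided equality in Definition \ref{0Neutral} to be replaced by the single inequality that makes the $\epsilon$-reformulation transparent, and it is the step most easily taken for granted.
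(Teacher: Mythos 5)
Your proposal is correct and follows essentially the same route as the paper's proof: both directions amount to the elementary characterization of the infimum, with the two-sided equality of Definition \ref{0Neutral} pinned down from below via $0\in\He$. The only cosmetic difference is that you make the reduction to the one-sided inequality $\inf_{\se}G^{\Phi}_{N_{\Phi}}(\se)\le 0$ explicit at the outset, whereas the paper invokes $0\in\He$ only at the end of the converse direction; this is a presentational improvement, not a different argument.
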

\begin{proof}
  Suppose first $\Me$ is $0$-neutral. From the definition follows that for any $\epsilon >0$
  \[ \inf_{\se' \in \Se} \left[ \sum_{i=0}^{N_{\Phi}(\se')-1} H_i(\se') \cdot \Delta_i X(S') \right] \le 0 < \epsilon \]
  for all $\Phi \in \He$. Then, for each $\Phi$ there exist $\se^{\Phi} \in \Se$ such that
  \[ \sum_{i=k}^{N_{\Phi}(\se^{\Phi})-1} H_i(\se^{\Phi}) \cdot \Delta_i X(S^{\Phi}) < \epsilon \]
  for any $\epsilon>0$. Thus we proved the necessary condition.

  For the sufficient condition, fix $\epsilon >0$, then, by hypothesis, for each $\Phi \in \He$ there is $\se^{\epsilon} \in \Se$ such that
   \[ \sum_{i=0}^{N_{\Phi}(\se^{\epsilon})-1} H_i(\se^{\epsilon}) \cdot \Delta_i X(S^{\epsilon})<\epsilon.\]
   Then, for each $\Phi \in \He$
   \[  \inf_{\se' \in \Se} \left[ \sum_{i=0}^{N_{\Phi}(\se')-1} H_i(\se') \cdot \Delta_i X(S') \right] < \epsilon.\]
   Since $\epsilon>0$ was chosen arbitrarily, it follows that
   \[  \inf_{\se' \in \Se} \left[ \sum_{i=0}^{N_{\Phi}(\se')-1} H_i(\se') \cdot \Delta_i X(S') \right] \le 0 \]
   for all $\Phi \in \He$. Therefore, since $0 \in \He$, we conclude that $\Me$ is $0$-neutral.
\end{proof}

\noindent
{\bf Proof of Theorem \ref{teo:locarbfree}}

\begin{proof}
Assume $\mathcal{M}$ is locally arbitrage-free and  semi-bounded;  fix $\Phi \in \He$ once and for all.  If for all nodes $(\se,k)$, $H_k(\se) \cdot \Delta_k X(S') = 0$  holds  for all $\se' \in \Se_{(\se,k)}$ then
\[G^{\Phi}_{N_{\Phi}}(\se)=\sum_{i=0}^{N_{\Phi}(\se)-1} H_i(\se) \cdot \Delta_i X(S) =0  \]
for all $\se \in \Se$ and so
\[V^{\Phi}_{N_{\Phi}}(\se)=V^{\Phi}_0+G^{\Phi}_{N_{\Phi}}(\se)=V^{\Phi}_0, ~\forall \se \in \Se; \]
therefore,  $\Phi$  is not an arbitrage opportunity.

We may then assume that there exists a trajectory $ \se^{(0)} \in \Se$ and an integer $k \geq 0$ such that at the node $(\se^{(0)},k)$, $H_k(\se^{(0)}) \cdot \Delta_k X(S) \neq 0$ for some $\se \in \Se_{(\se^{(0)},k)}$.

Then, by Definition \ref{localWithRespectToH}, 1.,
it is possible to choose $k_1$, $0 \leq k_1 \leq k$, the smallest integer such that, for $0 \leq j < k_1$, $H_j(\se) \cdot \Delta_j X(\se) =0$ for all $\se \in \Se_{(\se^{(0)},j)}$,
and there exists $\se^{(1)} \in (\se^{(0)},k_1)$ such that
\[ \sum_{i=0}^{k_1} H_i(\se^{(1)}) \cdot \Delta_i X(S^{(1)})<0.\]
Consider the case when for all $k_1<k\le N_{\Phi}(\se^{(1)})$,
$H_k(\se^{(1)}) \cdot \Delta_k X(S^{(1)}) = 0$ holds
(such case we label $(\ast$)); then
\[G^{\Phi}_{N_{\Phi}}(\se^{(1)})=\sum_{i=0}^{N_{\Phi}(\se^{(1)})-1} H_i(\se^{(1)}) \cdot \Delta_i X(S^{(1)}) <0, \]
under condition ($\ast$) we have then established that $\Phi$ is not an arbitrage opportunity.

Otherwise, i.e. when the case $(\ast)$ does not hold, we proceed by induction.
Assume that for $i\ge 1$ it was obtained the strictly increasing sequence of non negative integers $(k_j)_{j=1}^i$ and $\se^{(j)} \in \mathcal{S}_{(\se^{(j-1)}, k_j)},\; 1\le j\le i$, such that for $k_{j-1} < k < k_j,\;(k_0=0)$, $H_k(\se^{(j)}) \cdot \Delta_k X(S^{(j)})=0$, and
$H_{k_j}({\bf S}^{(j)}) \cdot \Delta_{k_j} X(S^{(j)}) <0$. In particular
\begin{equation}\nonumber
\sum_{j=0}^{k_i} H_j(\se^i) \cdot \Delta_j X(S^i) <0.
\end{equation}
The same argument that we used for the node  $(\se^{(1)},k_1)$ above, but now applied to $(\se^{(i)},k_i)$, and the inductive hypothesis gives the logical alternatives:

$a)$ $\Phi$ is not an arbitrage opportunity by condition (*),

$b)$ the inductive hypothesis holds for $i+1$.

\noindent Due to our hypothesis that $\mathcal{M}$ is semi-bounded and that $\Phi$ is fixed, we remark that the alternative $b)$ becomes, eventually, empty and so the alternative $a)$ holds for $i$ large enough. Since $\Phi$ is arbitrary, $\mathcal{M}$ is arbitrage free.
\end{proof}

\noindent
{\bf Proof of Proposition \ref{globalImpliesLocalNA}}

\begin{proof}
We proceed by contrapositive. Assume $\mathcal{S}$ is not locally arbitrage-free. Therefore, there is a node $({\bf S}, k)$ which is not arbitrage-free, i.e. by Proposition \ref{localDefinitions} (in subsection \ref{local geometric section}), $\Delta X(\Se_{(\se,k)})$ is disperse, so there exists $\xi \in \mathbb{R}^d$
such that
\begin{itemize}
\item $\xi \cdot \Delta_k X(S') \ge 0$ for all $\se' \in \Se_{(\se,k)}$, and
\item there exists $\se^* \in \Se_{(\se,k)}$ such that $\xi \cdot \Delta_k X(S^*) > 0$.
\end{itemize}
Since by hypothesis $\Xi^{(\se,k)}$ belongs to $\He$, it follows from Proposition \ref{prop:auto} that
\[V^{\Xi^{(\se,k)}}_{N_{\Xi^{(\se,k)}}}(\se')= \xi \cdot \Delta_k X(S') \ge 0,\]
for all $\se' \in \Se$ and there exists $\se^* \in \Se$ such that
\[V^{\Xi^{(\se,k)}}_{N_{\Xi^{(\se,k)}}}(\se^*)= \xi \cdot \Delta_k X(S^*) > 0.\]
Therefore, $\Xi^{(\se,k)}$ is an arbitrage opportunity.
\end{proof}

\noindent
{\bf Proof of Theorem \ref{teo:0neutral}}

\begin{proof} Fix $\Phi \in \He$ and $\epsilon >0$. We are going to show that there exists $\se^{\epsilon}\in\Se$ such that (\ref{eqn:epcons}) holds.

Fix $\se\in\Se$, given that $(\se,0)$ is a $0$-neutral node w.r.t.~$\He$, it follows that there exists $\se^{(1)} \in \Se=\Se_{(\se,0)}$ such that $H_0(\se) \cdot \Delta_0 X(S^{(1)}) <\frac{\epsilon}{2}$. \\ Then, if $N_{\Phi}(\se^{(1)})=1$,
    \[ \sum_{i=0}^{N_{\Phi}(\se^{(1)})-1}H_i(\se^{(1)}) \cdot \Delta_i X(S^{(1)}) < \frac{\epsilon}{2} < \epsilon. \]
If $N_{\Phi}(\se^{(1)})>1$, in the same way than before, we can choose a finite sequence $(\se^{(j)})_{j=1}^n$ with $n\le n_{\Phi}$ such that for $2\le j \le n$,
 \[
 \se^{(j)} \in \Se_{(\se^{(j-1)},j-1)}\;\;\mbox{and}\;\;\sum_{i=0}^{j-1}H_i(\se^{(j)}) \cdot \Delta_i X(S^{(j)}) < \sum_{i=1}^{j} \frac{\epsilon}{2^i} < \epsilon.
 \]
Since $\Me$ is semi-bounded, there exists $0\le n \le n_{\Phi}$ such that
    \[ \sum_{i=0}^{N_{\Phi}(\se^{(n)})-1}H_i(\se^{(n)}) \cdot \Delta_i X(S^{(n)}) < \sum_{i=0}^{n} \frac{\epsilon}{2^i} < \epsilon. \]
So $(\ref{eqn:epcons})$ holds with  $\se^{\epsilon}= \se^{(n)}$. Thus, since $\Phi \in \He$ was chosen arbitrarily, it follows from Proposition \ref{prop:epcons} that $\Me$ is $0$-neutral.
\end{proof}

\noindent
{\bf Proof of Proposition \ref{0NeutralGlobalImpliesLocal}}

\begin{proof}
Suppose $\Me$ is $0$-neutral but some node $(\se, k)$ is not $0$-neutral, it then follows from Proposition \ref{localDefinitionsII} (in subsection \ref{local geometric section}) that there exists $\xi \in \mathbb{R}^d$ satisfying
\[ \inf_{\se' \in \Se_{(\se,k)}} \xi \cdot \Delta_k X(S')>0 ~\mbox{for all}~ \se' \in \Se_{(\se,k)}. \]

By hypothesis, $\Xi^{(\se,k)} \in \He$ (see the definition preceding Proposition \ref{globalImpliesLocalNA}). Then
\[ \inf_{\se' \in \Se_{(\se,k)}} \left[ \sum_{i=k}^{N_{\Xi^{(\se,k)}}(\se')-1} \Xi_i^{(\se,k)}(\se') \cdot \Delta_k X(S') \right] > 0,\]
which is a contradiction. Therefore $(\se,k)$ is a $0$-neutral node.
\end{proof}

\noindent

\section{Convex Analysis} \label{app:conv}

For $x,y \in \mathbb{R}^d$ we define the closed segment $[x,y]$ and the open segment $(x,y)$ by
\[ [x,y] \equiv \{ tx+(1-t)y : 0 \le t \le 1\} \mbox{ and } (x,y) \equiv \{ tx+(1-t)y : 0 < t < 1\}.\]
To begin, let's remember the notion of relative interior which will be very important in the characterizations of local properties.

\begin{definition}[Relative interior]
Let $E \subset \mathbb{R}^d$ a convex set. The \emph{relative interior} of $E$, that we will denote by $\mathrm{ri}(E)$, is the interior of the set relative to its affine hull, that is,
\[ \mathrm{ri}(E)=\{ x \in E : B(x,r) \cap \mathrm{aff}~E \subseteq E \mbox{ for some }r>0\}.\]
\end{definition}

The following property relates the notions of closure and relative interior.

\begin{proposition}[{\cite[Teorema 6.1]{rock}}] \label{prop:ri2}
Let $E \subset \mathbb{R}^d$ a non empty convex set. Then, for each $x \in \mathrm{ri}(E)$,
\[ \alpha x+(1-\alpha)y \in \mathrm{ri}(E)\]
for all $y \in \mathrm{cl}(E)$ and for all $\alpha \in (0,1]$.
\end{proposition}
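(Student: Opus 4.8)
The plan is to establish this "line segment principle" by first reducing to the full-dimensional case and then running an elementary ball-inflation argument based on convexity. Since the relative interior is by definition the interior taken within the affine hull $\mathrm{aff}(E)$, and since $\mathrm{aff}(E)$ is a closed affine subspace containing $E$ (hence also containing $\mathrm{cl}(E)$), every point in play — namely $x$, $y$, and the combination $z := \alpha x + (1-\alpha)y$ — lies in $\mathrm{aff}(E)$. Identifying $\mathrm{aff}(E)$ with $\mathbb{R}^k$, where $k = \dim \mathrm{aff}(E)$, via an affine isometry, the relative interior $\mathrm{ri}(E)$ becomes the ordinary interior $\mathrm{int}(E)$ and the statement is unaffected. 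Thus it suffices to prove: if $E$ is convex with nonempty interior, $x \in \mathrm{int}(E)$, $y \in \mathrm{cl}(E)$ and $\alpha \in (0,1]$, then $z \in \mathrm{int}(E)$.

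The case $\alpha = 1$ is immediate, as then $z = x$. For $\alpha \in (0,1)$ I would pick $r>0$ with the open ball $B(x,r) \subseteq E$, and, using $y \in \mathrm{cl}(E)$, approximate $y$ by a point $y' \in E$ with $(1-\alpha)\,|y-y'| < \alpha r$. Setting $c := \alpha x + (1-\alpha)y'$, the crucial convexity step is that the entire ball $B(c, \alpha r)$ is contained in $E$: a generic point of this ball has the form $\alpha b + (1-\alpha)y'$ with $b \in B(x,r) \subseteq E$, and this is a convex combination of the two points $b, y' \in E$, hence lies in $E$.

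Finally I would compare $z$ with the center $c$. Since $|z - c| = (1-\alpha)\,|y-y'| < \alpha r$, the triangle inequality yields $B\bigl(z,\ \alpha r - (1-\alpha)|y-y'|\bigr) \subseteq B(c,\alpha r) \subseteq E$, and the inner radius is strictly positive, so $z \in \mathrm{int}(E)$, as desired.

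I expect the only delicate point to be the reduction to the full-dimensional case: one must verify that $\mathrm{cl}(E)$ genuinely stays inside $\mathrm{aff}(E)$ (so that $y$ does not escape the ambient affine space) and that the relative interior coincides with the interior computed within $\mathrm{aff}(E)$. Once this is in place, the remaining estimate is short and purely metric. This is precisely the content of \cite[Teorema 6.1]{rock}, so an alternative is simply to invoke that reference.
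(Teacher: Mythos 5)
Your proof is correct, but note that the paper does not actually prove this proposition: it is quoted verbatim as Theorem 6.1 of Rockafellar's \emph{Convex Analysis} (the citation \cite{rock} in the statement), so the paper's ``proof'' is the reference you mention in your last sentence. Your self-contained argument is sound in every step: the reduction is legitimate because $\mathrm{aff}(E)$ is a closed affine subspace (hence contains $\mathrm{cl}(E)$, and $z=\alpha x+(1-\alpha)y$ stays in it as an affine combination), and an affine isometry onto $\mathbb{R}^k$ turns the relative interior --- defined in the paper exactly as the interior relative to $\mathrm{aff}(E)$ --- into the ordinary interior; the ball-inflation step is the identity $B(c,\alpha r)=\alpha B(x,r)+(1-\alpha)y'\subseteq E$ by convexity, and the final triangle-inequality estimate with radius $\alpha r-(1-\alpha)\lvert y-y'\rvert>0$ is right. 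Your argument is essentially the classical line-segment-principle proof; Rockafellar's own version phrases the approximation as $y\in E+\epsilon B$ for every $\epsilon>0$ rather than picking a single $y'\in E$ with $(1-\alpha)\lvert y-y'\rvert<\alpha r$, but these are equivalent bookkeeping choices. What your write-up buys over the paper's citation is self-containedness, at the cost of a page of standard material; for the paper's purposes the citation suffices, and either is acceptable.
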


The Proposition that follows describes one of the most important properties of the closure and the relative interior of convex sets.

\begin{proposition} \label{prop:clconv}
Let $E \subset \mathbb{R}^d$ a convex set. Then $\mathrm{cl}(E)$ and $\mathrm{ri}(E)$ are convex sets.
\end{proposition}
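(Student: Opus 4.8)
The plan is to handle the two assertions separately, dispensing first with the trivial case $E = \emptyset$ (in which $\mathrm{cl}(E)$ and $\mathrm{ri}(E)$ are both empty, hence vacuously convex) and assuming $E \neq \emptyset$ thereafter.

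For the convexity of $\mathrm{cl}(E)$, I would argue by approximating with sequences. Fix $x, y \in \mathrm{cl}(E)$ and $\alpha \in [0,1]$, and choose sequences $\{x_n\}, \{y_n\} \subseteq E$ with $x_n \to x$ and $y_n \to y$. Since $E$ is convex, $\alpha x_n + (1-\alpha) y_n \in E$ for every $n$, and by continuity of the map $(u,v) \mapsto \alpha u + (1-\alpha) v$ this sequence converges to $\alpha x + (1-\alpha) y$, which therefore lies in $\mathrm{cl}(E)$. This is the routine part.

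The convexity of $\mathrm{ri}(E)$ follows almost immediately from Proposition \ref{prop:ri2}, which is the key tool. Given $x, y \in \mathrm{ri}(E)$ and $\alpha \in (0,1)$, I note that $y \in \mathrm{ri}(E) \subseteq \mathrm{cl}(E)$, so applying Proposition \ref{prop:ri2} with this $x \in \mathrm{ri}(E)$ and this $y \in \mathrm{cl}(E)$ yields $\alpha x + (1-\alpha) y \in \mathrm{ri}(E)$ for all $\alpha \in (0,1]$; the endpoint $\alpha = 0$ simply gives $y \in \mathrm{ri}(E)$, which holds by hypothesis. Hence every convex combination of two points of $\mathrm{ri}(E)$ remains in $\mathrm{ri}(E)$.

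I do not anticipate a genuine obstacle here: the statement is classical and, given Proposition \ref{prop:ri2}, the relative-interior part is essentially a one-line deduction. The only point requiring a little care is the inclusion $\mathrm{ri}(E) \subseteq \mathrm{cl}(E)$ used to invoke Proposition \ref{prop:ri2}, but this is immediate from $\mathrm{ri}(E) \subseteq E \subseteq \mathrm{cl}(E)$.
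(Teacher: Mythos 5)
Your proof is correct. The paper states Proposition \ref{prop:clconv} without any proof (treating it as classical, from Rockafellar's book), so there is no in-paper argument to diverge from; your two-part argument --- the sequence-and-continuity verification for $\mathrm{cl}(E)$ and the one-line deduction of the convexity of $\mathrm{ri}(E)$ from Proposition \ref{prop:ri2} via the inclusion $\mathrm{ri}(E) \subseteq E \subseteq \mathrm{cl}(E)$ --- is precisely the standard route suggested by the paper's own citation structure, and your handling of the degenerate cases ($E = \emptyset$, the endpoints $\alpha \in \{0,1\}$) is sound.
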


The following characterizations of the relative interior for convex sets are useful.

\begin{proposition}[{\cite[Corollary 6.4.1]{rock}}] \label{prop:ri}
Let $E \subset \mathbb{R}^d$ a convex set. Then the relative interior of $E$ is the set of all points $x \in E$ such that for all $y \in E$ there exist some $\epsilon >0 $ with
\[ x-\epsilon (y-x) \in E.\]
\end{proposition}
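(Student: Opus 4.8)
The plan is to prove the two inclusions separately, after a reduction that simplifies the geometry. Write $L \equiv \mathrm{aff}(E)$. Since translation by a fixed vector is an affine homeomorphism, it carries $E$ to a translate, $\mathrm{aff}(E)$ to the corresponding translate, and hence preserves both $\mathrm{ri}(E)$ and the condition appearing in the statement; so I would first translate by $z \mapsto z - x$ to assume that the candidate point is the origin, i.e.\ $x = 0 \in E$. Then $L$ is a linear subspace of $\mathbb{R}^d$, say of dimension $m$, and the statement reduces to: $0 \in \mathrm{ri}(E)$ if and only if for every $y \in E$ there is $\epsilon > 0$ with $-\epsilon y \in E$.

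For the forward implication I would argue directly from the definition of relative interior. Suppose $0 \in \mathrm{ri}(E)$, so there is $r > 0$ with $B(0,r) \cap L \subseteq E$. Given $y \in E$ (the case $y = 0$ being trivial), the point $-\epsilon y = (1+\epsilon)\cdot 0 - \epsilon y$ is an affine combination of $0$ and $y$, hence lies in $L$, and $\|-\epsilon y\| = \epsilon \|y\|$. Choosing $\epsilon < r/\|y\|$ places $-\epsilon y$ in $B(0,r) \cap L \subseteq E$, which is exactly the required conclusion.

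The substantive direction is the converse, and this is where the main difficulty lies: the hypothesis is supplied one point $y$ at a time, whereas membership in $\mathrm{ri}(E)$ demands a full relative neighborhood of $0$, so the finitely many one-dimensional ``extensions'' must be assembled into an $m$-dimensional neighborhood. Since $0 \in E$ and $\mathrm{aff}(E) = L$, the set $E$ linearly spans $L$, so I can select $y_1, \dots, y_m \in E$ forming a basis of $L$; applying the hypothesis to each $y_i$ produces $\epsilon_i > 0$ with $-\epsilon_i y_i \in E$. I then claim $0$ lies in the relative interior of $\mathrm{co}(\{0, y_1, \dots, y_m, -\epsilon_1 y_1, \dots, -\epsilon_m y_m\}) \subseteq E$. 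To verify this I would express an arbitrary point of $L$ in the basis as $p = \sum_{i=1}^m \lambda_i y_i$ and build a convex representation of it: assign weight $\lambda_i$ to $y_i$ when $\lambda_i \geq 0$, weight $|\lambda_i|/\epsilon_i$ to $-\epsilon_i y_i$ when $\lambda_i < 0$, and the leftover weight to $0$. All weights are nonnegative, and their total is at most $\sum_i \max(1, 1/\epsilon_i)\,|\lambda_i|$, which is $\leq 1$ once each $|\lambda_i|$ is below some $\delta > 0$ depending only on $m$ and the $\epsilon_i$. Hence every such $p$ is a convex combination of points of $E$ and so lies in $E$; since the coordinate map is a linear isomorphism of $L$, the region $\{p : |\lambda_i| < \delta\}$ contains a set $B(0,r)\cap L$, giving $0 \in \mathrm{ri}(E)$. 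Translating back recovers $x \in \mathrm{ri}(E)$ and completes the argument.
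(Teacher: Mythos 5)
Your argument is correct and complete. Note, however, that the paper does not prove this proposition at all: it is imported verbatim as \cite[Corollary 6.4.1]{rock}, so there is no internal proof to match against. In Rockafellar the result is a corollary of his Theorem 6.4 (the ``line-segment extension'' characterization of $\mathrm{ri}$), whose proof in turn rests on the accessibility lemma (the paper's Proposition \ref{prop:ri2}) and on the fact that a nonempty convex set has nonempty relative interior, itself proved by placing a simplex inside the set that spans $\mathrm{aff}(E)$. Your self-contained proof reconstructs exactly that hidden machinery in one pass: after translating so that $x=0$ makes $\mathrm{aff}(E)$ a linear subspace $L$ (and you correctly check the hypothesis is translation-invariant), the easy direction is immediate from the definition, and your converse assembles the one-dimensional extensions $-\epsilon_i y_i$ along a basis $y_1,\dots,y_m \in E$ of $L$ into a relatively open neighborhood of $0$ --- effectively showing $0$ is interior (in $L$) to the cross-polytope $\mathrm{co}(\{y_1,\dots,y_m,-\epsilon_1 y_1,\dots,-\epsilon_m y_m\}) \subseteq E$, with the explicit weight budget $\delta = \bigl(\sum_i \max(1,1/\epsilon_i)\bigr)^{-1}$ making the convex representation legitimate. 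Two cosmetic remarks: once you have translated, $L$ is a linear subspace, so $-\epsilon y \in L$ is automatic and the affine-combination justification is superfluous; and you never actually need the intermediate claim about $\mathrm{ri}$ of the convex hull, since your coordinate-box argument lands the relative neighborhood $B(0,r)\cap L$ directly inside $E$, which is the paper's definition of $0 \in \mathrm{ri}(E)$. What your route buys is independence from the external reference at the cost of redoing the standard simplex argument; what the citation buys the paper is brevity.
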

\begin{corollary}\label{cor:ri}
$x \in \mathrm{ri}(E)$ iff for any $b\in E$ there exists $a\in E$ such that $x\in(a,b)$.
\end{corollary}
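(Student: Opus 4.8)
The plan is to deduce the corollary directly from Proposition \ref{prop:ri}, since the two characterizations differ only in how the witnessing point is parametrized: Proposition \ref{prop:ri} extends past $x$ away from $y$ by a factor $\epsilon > 0$, whereas the corollary requires $x$ to sit strictly between two points $a,b \in E$. The key observation is that these parametrizations are interchangeable through the substitution $t = 1/(1+\epsilon)$, equivalently $\epsilon = (1-t)/t$, which maps $\epsilon > 0$ bijectively onto $t \in (0,1)$. Once this correspondence is isolated, both implications become purely algebraic.

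For the forward implication, I would take $x \in \mathrm{ri}(E)$ and, given an arbitrary $b \in E$, apply Proposition \ref{prop:ri} with $y = b$ to obtain $\epsilon > 0$ such that $a := x - \epsilon(b-x) \in E$. A short computation rewrites $a = (1+\epsilon)x - \epsilon b$, whence $x = \frac{1}{1+\epsilon}\, a + \frac{\epsilon}{1+\epsilon}\, b$; setting $t = 1/(1+\epsilon) \in (0,1)$ exhibits $x \in (a,b)$ with $a,b \in E$, as required. The only degenerate situation, $b = x$, is handled by taking $a = x$, since then $(a,b) = \{x\} \ni x$.

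For the converse, I would first note that the hypothesis forces $x \in E$: picking any $b \in E$, the witness $a$ satisfies $x \in (a,b) \subseteq [a,b] \subseteq E$ by convexity of $E$. Then, given an arbitrary $y \in E$, I apply the hypothesis with $b = y$ to obtain $a \in E$ and $t \in (0,1)$ with $x = ta + (1-t)y$. Solving for $a$ and setting $\epsilon = (1-t)/t > 0$, a direct substitution verifies $x - \epsilon(y-x) = a \in E$, which is precisely the condition of Proposition \ref{prop:ri}; hence $x \in \mathrm{ri}(E)$.

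I do not anticipate any genuine obstacle here, as the whole argument is a reparametrization of Proposition \ref{prop:ri}. The only points demanding minor care are confirming that $t$ and $\epsilon$ stay in the correct open ranges under the substitution $t \leftrightarrow \epsilon$, checking the membership $x \in E$ in the converse (immediate from convexity), and disposing of the degenerate case $b = x$ in the forward direction.
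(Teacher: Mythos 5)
Your proposal is correct and follows essentially the same route as the paper's own proof: both deduce the corollary from Proposition \ref{prop:ri} via the reparametrization $\epsilon = (1-t)/t$, i.e.\ writing $a = x - \epsilon(b-x)$ as $x = \frac{1}{1+\epsilon}a + \frac{\epsilon}{1+\epsilon}b$ and inverting this in the converse. Your two extra observations --- that the hypothesis forces $x \in E$ by convexity (needed since Proposition \ref{prop:ri} characterizes points \emph{of} $E$) and the disposal of the degenerate case $b = x$ --- are points the paper's terser proof leaves implicit, and are harmless refinements rather than a different argument.
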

\begin{proof}
From Proposition \ref{prop:ri}, if $x \in \mathrm{ri}(E)$, for $b \in E$ there exist some $\epsilon >0 $ with
\[
a\equiv x-\epsilon (b-x)\in E,\;\mbox{so}\; x = \frac 1{1+\epsilon}a+\frac \epsilon{1+\epsilon}b\in (a,b).
\]
Conversely if $x=t\,a + (1-t)\,b$, with $t\in(0,1)$, then
\[
x - \frac {1-t}t(b-x) = a \in E.
\]
\end{proof}

In the following results we will describe some operations that preserve convexity. These operations are helpful in determining or establishing when a set is convex.
Given a map $g : \mathbb{R}^d \to \mathbb{R}^{d'}$, we are going to present two properties of preservation of convex sets by $g$ introduced in \cite{pales}.
We say $g$ \emph{preserves convexity} if $g(E)$ is convex for all convex subset $E \subseteq R^{d}$. Analogously, we say that $g^{-1}$ preserves convexity or $g$ is inversely convexity preserving if $g^{-1} (E')$ is convex whenever $E'$ is a convex subset of $g(\mathbb{R}^{d'})$. The following results are immediate.

\begin{proposition} \label{prop:preservaconvexos}
Let $g : \mathbb{R}^d \to \mathbb{R}^{d'}$,
\begin{enumerate}
\item $g$ preserves convexity if and only if $[g(x),g(y)] \subseteq g([x,y])$\\ for all $x,y \in \mathbb{R}^d$.
\item $g$ is inversely convexity preserving if and only if $g([x,y]) \subseteq [g(x),g(y)] $ for all $x,y \in \mathbb{R}^d$.
\end{enumerate}
\end{proposition}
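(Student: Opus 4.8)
The plan is to prove each of the two ``if and only if'' statements by its two implications, relying throughout on the elementary fact that a set $A\subseteq\mathbb{R}^{n}$ is convex precisely when $[a,b]\subseteq A$ for all $a,b\in A$, together with the two trivial observations that $g(x),g(y)\in g([x,y])$ and that $[x,y]\subseteq E$ whenever $x,y$ lie in a convex set $E$.

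For item (1) I would first treat the forward implication: assuming $g$ preserves convexity, fix $x,y\in\mathbb{R}^{d}$ and apply the hypothesis to the convex set $[x,y]$, so that $g([x,y])$ is convex; since it contains the two points $g(x)$ and $g(y)$, convexity gives $[g(x),g(y)]\subseteq g([x,y])$, which is the asserted inclusion. For the converse I would assume the inclusion $[g(x),g(y)]\subseteq g([x,y])$ for all $x,y$ and take an arbitrary convex $E$; given two image points $u=g(x)$ and $v=g(y)$ with $x,y\in E$, convexity of $E$ yields $[x,y]\subseteq E$, hence $[u,v]=[g(x),g(y)]\subseteq g([x,y])\subseteq g(E)$, so $g(E)$ is convex.

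Item (2) is handled symmetrically, now passing through preimages. For the ``if'' direction I would assume $g([x,y])\subseteq[g(x),g(y)]$ for all $x,y$, fix a convex $E'$, and take $x,y\in g^{-1}(E')$; for any $z\in[x,y]$ one gets $g(z)\in g([x,y])\subseteq[g(x),g(y)]\subseteq E'$, the last inclusion because $E'$ is convex and contains $g(x),g(y)$, so $z\in g^{-1}(E')$ and the preimage is convex. For the ``only if'' direction I would, given $x,y$, use the segment $E'=[g(x),g(y)]$ as the test convex set: since $g(x),g(y)\in E'$ we have $x,y\in g^{-1}(E')$, and inverse convexity preservation makes $g^{-1}(E')$ convex, whence $[x,y]\subseteq g^{-1}(E')$, i.e.\ $g([x,y])\subseteq[g(x),g(y)]$.

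The only delicate point, and the step I would treat most carefully, is this last ``only if'' argument: the test set $E'=[g(x),g(y)]$ must be admissible in the definition of inverse convexity preservation. Read literally, that definition ranges over convex subsets of the image, and a segment joining two image points need not lie in a possibly non-convex image (for instance $g(t)=(t,t^{2})$ has every convex subset of its image a singleton, yet $g([0,1])\not\subseteq[g(0),g(1)]$, so the literal reading would fail). I would therefore interpret $g^{-1}(E')$ in the usual way as $\{z:g(z)\in E'\}$ for an arbitrary convex $E'\subseteq\mathbb{R}^{d'}$, which is the reading under which the characterization of \cite{pales} holds; with that convention all four implications above are immediate, matching the paper's assertion.
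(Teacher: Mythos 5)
Your proof is correct and is essentially the argument the paper intends: the paper states this proposition without proof beyond the remark that the results are immediate, and your four implications (testing the hypothesis on the convex set $[x,y]$ or on the test segment $E'=[g(x),g(y)]$ in one direction, and using convexity of $g(E)$ resp.\ $g^{-1}(E')$ via the segment characterization of convexity in the other) are exactly the intended ones. Your closing caveat is also well taken and worth recording: under the paper's literal definition, which quantifies only over convex subsets of the image (and contains the typo $g(\mathbb{R}^{d'})$ where $g(\mathbb{R}^{d})$ is meant), the \emph{only if} half of item (2) would indeed fail, precisely as your parabola example $g(t)=(t,t^{2})$ shows, since the segment $[g(x),g(y)]$ need not lie in a non-convex image; the statement holds under the standard convention of \cite{pales}, which you adopt, that $E'$ ranges over all convex subsets of the codomain.
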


Note that it follows from the previous Proposition that a convexity preserving function which is, at the same time, inversely convexity preserving satisfy $ [g (x), g (y)] = g ([x, y]) $ for all $ x , y \in \mathbb{R}^ d $. This motivates the following definition.

\begin{definition}[Segment preserving] \label{def:preservasegmentos}
We say that a map $g : \mathbb{R}^d \to \mathbb{R}^{d'}$ \emph{preserves segments} if $[g(x),g(y)] = g([x,y])$ for all $x,y \in \mathbb{R}^d$. If $(g(x),g(y)) = g((x,y))$ for all $x,y \in \mathbb{R}^d$ we say that $g$ \emph{preserves segments strictly}.
\end{definition}

Then, $g$ preserves segments if and only if $g$ preserves convexity and preserves convexity inversely. Clearly, if $g$ preserves segments strictly, then preserves segments, the converse, however, may not be valid.

The obvious candidates to be functions that preserve segments strictly are the affine functions. Recall that a function $g: \mathbb{R}^d \rightarrow \mathbb{R}^{d'}$ is affine if it is the sum of a linear function plus a constant, that is, $g(x)=Ax+b$, where $A \in \mathbb{R}^{d \times d'}$ and $b \in \mathbb{R}^{d'}$. There is a larger class of functions which also preserve segments strictly.

\begin{theorem}[{\cite[Thm 1]{pales}}] \label{prop:afin}
  Let $A: \mathbb{R}^d \rightarrow \mathbb{R}^{d'}$ and $B: \mathbb{R}^d \rightarrow \mathbb{R}$ linear functions, $b\in \mathbb{R}^{d'}$ and $c\in \mathbb{R}$. Let $D=\{x\in \mathbb{R}^d: B(x)+c>0\}$, then: $g:D \rightarrow \mathbb{R}^{d'}$ given by
  \begin{equation}\label{ratioFunction}
  g(x)=\frac{A(x)+b}{B(x)+c}
  \end{equation}
  preserves segments strictly.
\end{theorem}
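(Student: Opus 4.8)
The plan is to verify the defining identity $g((x,y)) = (g(x),g(y))$ directly by parametrizing the segment and tracking how the parameter transforms under $g$. Fix $x,y \in D$ and write the points of the segment as $z_t = t x + (1-t) y$ for $t \in [0,1]$. Since $D = \{z : B(z)+c > 0\}$ is an open half-space it is convex, so $z_t \in D$ and $g(z_t)$ is well defined throughout. The crucial structural fact I would exploit is that both the numerator map $z \mapsto A(z)+b$ and the denominator map $z \mapsto B(z)+c$ are affine, hence restrict to affine functions of the scalar $t$ along the segment.

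Concretely, I would set $p = A(x)+b$, $q = A(y)+b$, $\mu = B(x)+c$ and $\nu = B(y)+c$, noting $\mu,\nu>0$ by the definition of $D$. Affinity then gives $A(z_t)+b = t p + (1-t) q$ and $B(z_t)+c = t\mu + (1-t)\nu$, so that
\begin{equation} \nonumber
g(z_t) = \frac{t p + (1-t) q}{t\mu + (1-t)\nu}.
\end{equation}
Regrouping the right-hand side, this equals $s(t)\,g(x) + (1-s(t))\,g(y)$ with $s(t) = t\mu/(t\mu+(1-t)\nu)$, since $g(x)=p/\mu$ and $g(y)=q/\nu$. The denominator is strictly positive on $[0,1]$ because $\mu,\nu>0$, so $s(t)$ is genuinely well defined and lies in $[0,1]$; thus every $g(z_t)$ already sits on the segment $[g(x),g(y)]$.

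It remains to check that $s$ is a bijection of $[0,1]$ onto itself carrying $(0,1)$ onto $(0,1)$; this is the one point requiring a small argument rather than pure bookkeeping, and is where the positivity of $\mu,\nu$ is essential. I would compute $s(0)=0$, $s(1)=1$, and differentiate to obtain $s'(t) = \mu\nu/(t\mu+(1-t)\nu)^2 > 0$, so $s$ is continuous and strictly increasing, hence a homeomorphism of $[0,1]$ fixing the endpoints. Consequently, as $t$ runs over $(0,1)$ the value $g(z_t)$ runs over exactly the open segment $(g(x),g(y))$, which gives $g((x,y)) = (g(x),g(y))$, and the closed version follows by also allowing $t\in\{0,1\}$. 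The only genuine obstacle is confirming strict monotonicity of the scalar reparametrization $s$; everything else is a direct consequence of the affinity of $A(\cdot)+b$ and $B(\cdot)+c$ together with the sign condition defining $D$.
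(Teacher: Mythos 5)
Your proposal is correct: since $\mu=B(x)+c>0$ and $\nu=B(y)+c>0$ on $D$, the reparametrization $s(t)=t\mu/(t\mu+(1-t)\nu)$ is well defined, fixes the endpoints, and is strictly increasing (indeed $s'(t)=\mu\nu/(t\mu+(1-t)\nu)^2>0$), so its bijectivity from $(0,1)$ onto $(0,1)$ yields both inclusions in $g((x,y))=(g(x),g(y))$, which is exactly the strict segment-preservation required. The paper itself supplies no proof for this statement, deferring to \cite[Thm 1]{pales}, and your direct affine-reparametrization computation is essentially the standard argument given in that cited source, so nothing further is needed.
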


Consider the function $X: \mathbb{R}^{d+1} \rightarrow \mathbb{R}^{d}$ with $\mathrm{dom}~X= \{x \in \mathbb{R} : x>0\} \times \mathbb{R}^d$ defined in (\ref{eqn:X}) by
 \begin{equation} \nonumber
X(x)=\frac{1}{x^0}(x^1,x^2,\dots,x^d).
 \label{eqn:perspectiva}
 \end{equation}
This function called, \emph{perspective function}, scales or normalizes vectors, so the first component is one, and then drops the first component. Since it has the form (\ref{ratioFunction}), then preserves segments strictly.

The following result is key to our analysis.

\begin{theorem}[{\cite[Thm 2]{pales}}]\label{teo:pales} Let $C\subset \mathbb{V}$ be a nonempty convex subset and $g:C \rightarrow \mathbb{V}'$ be a strict inversely convexity preserving function such that $\mbox{Im}\,f$ contains a nondegenerate triangle. Then, there exist $A: \mathbb{V} \rightarrow \mathbb{V}'$, and
$B:\mathbb{V}\rightarrow \mathbb{R}$ linear functions, $b\in \mathbb{V}'$, and $c\in\mathbb{R}$, such that
\[
B(x)+c >0 \quad \mbox{for}\;x\in C,
\]
and
\begin{equation} \label{palesCharacterization}
g(x)=\frac{A(x)+b}{B(x)+c}.
\end{equation}

\end{theorem}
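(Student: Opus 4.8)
The plan is to show that $g$ is the affine-chart representation of a projective-linear map: first I would extract from the segment condition that $g$ is a \emph{partial collineation}, and then invoke a suitable form of the fundamental theorem of projective geometry, the nondegenerate-triangle hypothesis being precisely what forces the image to be genuinely two-dimensional and thereby excludes the pathological one-dimensional case.

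First I would unwind the hypothesis. By Proposition \ref{prop:preservaconvexos}, strict inverse convexity preservation $g((x,y)) \subseteq (g(x),g(y))$ says exactly that $g$ carries betweenness to betweenness: if $z \in (x,y)$ then $g(z) \in (g(x),g(y))$. Consequently $g$ sends collinear (and suitably ordered) triples to collinear triples, so the image of the part of any line meeting $C$ lies on a single line of $\mathbb{V}'$; that is, $g$ is a collinearity- and order-preserving map on the convex domain $C$. Restricting to a single segment $[x,y]$ with $g(x)\neq g(y)$ and writing $g((1-t)x+ty)=(1-\sigma(t))g(x)+\sigma(t)g(y)$, the betweenness condition forces $\sigma\colon[0,1]\to[0,1]$ to be (weakly) monotone with $\sigma(0)=0$ and $\sigma(1)=1$. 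This one-dimensional skeleton is where the orientation used later to fix the sign of $B(x)+c$ originates, and it is compatible with $g$ being constant along some directions (which matches $A$ being possibly non-injective in the conclusion).

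Next I would normalize using the nondegenerate triangle: choose $e_0,e_1,e_2\in C$ with $g(e_0),g(e_1),g(e_2)$ affinely independent together with a fourth point in general position to fix a projective frame. Since affine coordinate changes in $\mathbb{V}$ and in $\mathbb{V}'$ preserve segments strictly (Theorem \ref{prop:afin}) and leave the class of fractional-linear maps invariant, I may assume this frame is standard. Homogenizing, I represent $\mathbb{V}$ by $\{(x,1)\}\subset\mathbb{V}\oplus\mathbb{R}$ and seek a linear $T\colon\mathbb{V}\oplus\mathbb{R}\to\mathbb{V}'\oplus\mathbb{R}$ whose dehomogenization restricts to $g$ on $C$; writing $T(x,1)=(A(x)+b,\,B(x)+c)$ recovers exactly the claimed form $g(x)=\frac{A(x)+b}{B(x)+c}$. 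The existence of such a $T$ is the content of the fundamental theorem of projective geometry applied to the collineation built above, using that the only field automorphism of $\mathbb{R}$ is the identity, so that no semilinear twist survives. Finally, since $g$ is finite-valued and orientation-preserving on the connected set $C$, the denominator $B(x)+c$ cannot vanish there and keeps a constant sign, which after the normalization is positive.

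The hard part, I expect, is this middle step. The classical fundamental theorem is stated for \emph{bijective} collineations of an entire projective space, whereas here $g$ is defined only on a convex subset $C$, need be neither injective nor surjective, and $\mathbb{V}$ may be infinite-dimensional. The genuine work is therefore to extend the local, order-preserving collinearity data on $C$ to a bona fide projective transformation -- reconstructing $T$ from the images of the chosen frame and checking consistency on overlapping lines through the monotone reparametrizations $\sigma$ -- and to confirm that the nondegenerate-triangle hypothesis rules out the degenerate situation in which $\mathrm{Im}\,g$ lies in a line and $T$ is undetermined (the case treated separately in the proof of Lemma \ref{teo:preservaarb}). One may sidestep the extension difficulty entirely by replacing the appeal to projective geometry with a direct functional-equation argument on $\sigma$, solving for the reparametrization on each line and patching the solutions into the global linear map $T$; controlling the sign of $B(x)+c$ on all of $C$ is then a routine consequence of the orientation-preservation recorded in the first step.
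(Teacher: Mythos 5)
You should note at the outset that the paper contains no proof of Theorem \ref{teo:pales} to compare against: it is imported verbatim from \cite[Thm 2]{pales} (the ``$\mbox{Im}\,f$'' in the statement is a typo for $\mathrm{Im}\,g$), so your proposal must stand on its own. As it stands, it has a genuine gap, and it is exactly the one you flag yourself as ``the hard part'': the passage from betweenness-preserving data on $C$ to a linear lift $T$ on $\mathbb{V}\oplus\mathbb{R}$. The fundamental theorem of projective geometry, in any of its classical or local forms, applies to \emph{injective} collineations (of a projective space, or of an open subset, of projective dimension at least two) that carry lines into lines. Here $g$ is defined only on a convex set, is not assumed injective, continuous, or open, and is in fact permitted by the conclusion to collapse entire segments to points (the map $A$ in \eqref{palesCharacterization} may have a kernel). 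No version of the FTPG produces a projective transformation from such data, and the ``no semilinear twist over $\mathbb{R}$'' point, while true, only becomes relevant after one already has a collineation in hand --- which one never does in this generality. So the central step of your plan is not merely unproven; the tool you invoke for it does not apply.

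Your fallback --- ``a direct functional-equation argument on $\sigma$, solving for the reparametrization on each line and patching'' --- is not a routine repair but is precisely the substance of P\'ales's proof, and your sketch contains none of it. On a single line, strict inverse convexity preservation constrains $\sigma$ only to be monotone: no continuity is assumed, and every strictly increasing (possibly discontinuous) self-map of an interval is strict inversely convexity preserving into a line, as Example \ref{counterexample2} of the paper illustrates in the inverse-convexity-preserving setting. Hence nothing forces $\sigma$ to be fractional-linear line by line; the rigidity can only come from cross-consistency over the two-parameter family of lines that the nondegenerate-triangle hypothesis guarantees, and that consistency argument --- the Darboux-type step that upgrades monotone reparametrizations to projective ones and then patches them into a single pair $(A,b,B,c)$, including the collapsing directions --- is exactly what is missing. (By contrast, your first step, the monotonicity of $\sigma$ and the reduction of collinearity within $C$ to betweenness, is fine, and your final step is also fine: once \eqref{palesCharacterization} holds pointwise with a nonvanishing denominator, the affine function $B(x)+c$ has constant sign on the convex set $C$ and can be normalized positive by replacing $(A,b,B,c)$ with $(-A,-b,-B,-c)$; no orientation argument is needed.) In short: correct strategic frame, correct identification of where the difficulty sits, but the theorem's actual content is deferred rather than proved.
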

Moreover, by {\cite[Thm 1]{pales}} $g$ preserves segments strictly, this latter notion means that equality holds in (\ref{invConvexPres}).

We will present below the Separation Theorem that we use to prove the Proposition \ref{localDefinitions}.

\begin{theorem}[{\cite[Prop A.1]{follmer}}] \label{teo:separacion}
Suposse $E \subset \mathbb{R}^d$ is a non empty convex set such that $0 \notin E$. Then, there exists $a \in \mathbb{R}^d$ such that $a \cdot x \ge 0$ for all $x \in E$ and $a \cdot x_0 > 0$ for at least one $x_0 \in E$. Furthermore, if $\inf\limits_{x \in E} \| x\|_d >0$, then we can find $a \in \mathbb{R}^d$ such that $\inf_{x \in E} | a \cdot x | >0$.
\end{theorem}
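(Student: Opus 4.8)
The plan is to prove this by nearest-point projection onto a closed convex set, splitting into the two cases determined by whether the origin lies in $\mathrm{cl}(E)$. Throughout I write $\bar{E}\equiv\mathrm{cl}(E)$, which is convex (by Proposition \ref{prop:clconv}) and closed. I first record the reduction that disposes of the ``furthermore'' clause: since the norm is continuous, $\inf_{x\in\bar{E}}\|x\|_d=\inf_{x\in E}\|x\|_d$, so the hypothesis $\inf_{x\in E}\|x\|_d>0$ is exactly the statement $0\notin\bar{E}$. Hence the strengthened conclusion will fall out automatically from the case $0\notin\bar{E}$.

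First I would treat the case $0\notin\bar{E}$. As $\bar{E}$ is nonempty, closed and convex, the origin has a unique nearest point $p\in\bar{E}$, and $p\neq 0$. The variational characterization of the projection gives $\langle 0-p,\,x-p\rangle\le 0$ for every $x\in\bar{E}$, which rearranges to $\langle p,x\rangle\ge\|p\|^2$ for all $x\in\bar{E}\supseteq E$. Taking $a=p$ therefore yields $a\cdot x\ge\|p\|^2>0$ for all $x\in E$; in particular $a\cdot x_0>0$ for any $x_0\in E$, and $\inf_{x\in E}|a\cdot x|=\inf_{x\in E}a\cdot x\ge\|p\|^2>0$. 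This settles both assertions whenever $0\notin\bar{E}$, and in view of the reduction above it fully establishes the ``furthermore'' statement.

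It remains to handle $0\in\bar{E}$ (with $0\notin E$), where only the non-strict separation together with strictness at one point is claimed. Here I would pass to the affine hull $V\equiv\mathrm{aff}(E)$, a linear subspace since $0\in\bar{E}\subseteq V$, and work inside $V$. Fix any $q\in\mathrm{ri}(E)$; then the points $y_t\equiv(1-t)q$ with $t>1$ lie in $V\setminus\bar{E}$ (otherwise, as $0$ is in the open segment $(q,y_t)$ with $q\in\mathrm{ri}(E)$ and $y_t\in\mathrm{cl}(E)$, Proposition \ref{prop:ri2} would force $0\in\mathrm{ri}(E)\subseteq E$, a contradiction), and $y_t\to 0$ as $t\to 1^+$. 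Projecting each such $y_t$ onto $\bar{E}$ with nearest point $p_t$ and normalizing the outward direction $\xi_t\equiv(p_t-y_t)/\|p_t-y_t\|\in V$, the same variational inequality gives $\langle\xi_t,x\rangle\ge\langle\xi_t,y_t\rangle$ for all $x\in\bar{E}$. Since the $\xi_t$ lie on the unit sphere of $V$, a sequence $t_n\to 1^+$ admits a subsequence along which $\xi_{t_n}\to\xi$, a unit vector of $V$; passing to the limit and using $|\langle\xi_{t_n},y_{t_n}\rangle|\le\|y_{t_n}\|\to 0$ produces $a\equiv\xi$ with $a\cdot x\ge 0$ for all $x\in E$.

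The main obstacle is the final strictness requirement: the limiting argument delivers only $a\cdot x\ge 0$, and I must exhibit $x_0\in E$ with $a\cdot x_0>0$. This is precisely where the passage to $V$ pays off. The set $\{v\in V:a\cdot v=0\}$ is a proper subspace of $V$, since $a$ is a nonzero vector of $V$, whereas $\mathrm{ri}(E)$ is nonempty and relatively open in $V$, hence of full dimension in $V$ and not contained in any such hyperplane. Therefore some $x_0\in\mathrm{ri}(E)\subseteq E$ satisfies $a\cdot x_0\neq 0$, and since $a\cdot x_0\ge 0$ this forces $a\cdot x_0>0$, completing the proof. If one prefers to avoid the affine-hull reduction, the identical strictness can be recovered by induction on the dimension: when $E$ lies in a hyperplane through the origin one simply repeats the construction inside that hyperplane.
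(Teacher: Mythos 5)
Your proof is correct; note, however, that the paper itself gives no proof of this statement --- it is imported verbatim from F\"ollmer and Schied \cite{follmer} (Prop.\ A.1) --- so there is no internal argument to compare against, and what you have written is essentially the classical proof behind the cited result. Your reduction of the ``furthermore'' clause is right: by continuity of the norm, $\inf_{x\in E}\|x\|_d>0$ is equivalent to $0\notin\mathrm{cl}(E)$, and the nearest-point projection of $0$ onto $\mathrm{cl}(E)$ then gives $a=p$ with $a\cdot x\ge\|p\|^2>0$ uniformly, settling both the strict case and the strengthened conclusion at once. In the boundary case $0\in\mathrm{cl}(E)\setminus E$, your limiting-normal construction is sound: with $\lambda=(t-1)/t$ one has $0=\lambda q+(1-\lambda)y_t\in(q,y_t)$, so $y_t\in\mathrm{cl}(E)$ would indeed force $0\in\mathrm{ri}(E)\subseteq E$ by Proposition \ref{prop:ri2}; the projection inequality chain $\langle\xi_t,x\rangle\ge\langle\xi_t,p_t\rangle=\langle\xi_t,y_t\rangle+\|p_t-y_t\|\ge\langle\xi_t,y_t\rangle$ is valid; and compactness of the unit sphere of the finite-dimensional subspace $V$ yields the limit normal with $a\cdot x\ge 0$ on $E$. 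Your strictness step via the full relative dimension of $\mathrm{ri}(E)$ in $V$ works, and can even be shortened: if $a\cdot x=0$ for all $x\in E$, then $V=\mathrm{aff}(E)\subseteq\{v:a\cdot v=0\}$, contradicting $a\in V$ with $\|a\|=1$. The only facts you invoke without proof --- existence and the variational characterization of the nearest point in a nonempty closed convex set, and nonemptiness of $\mathrm{ri}(E)$ for a nonempty convex $E$ --- are standard; the latter is not among the results collected in Appendix \ref{app:conv}, so it merits an explicit citation to \cite{rock} (Theorem 6.2) if this proof were to be included.
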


Next we will define the convex hull of a set.

\begin{definition}[Convex hull]
The \emph{convex hull} of a set $E \subset \mathbb{R}^d$, that we will denote by $\mathrm{co}(E)$, is the smallest convex set containing $ E $.
\end{definition}

One of the most important characterizations of the convex hull is the Carath\'eodory Theorem.

\begin{theorem}[Carath\'eodory theorem] \label{carath}
Let $E \subset \mathbb{R}^d$. Then
\[ \mathrm{co}(E)= \left\lbrace \sum_{i=1}^{d+1} \lambda_ix_i : x_i \in E, \lambda_i \ge 0 , \sum_{i=1}^{d+1} \lambda=1 \right\rbrace. \]
\end{theorem}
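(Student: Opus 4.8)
The plan is to prove the set equality by establishing two inclusions; write $\mathcal{C}$ for the right-hand set $\{\sum_{i=1}^{d+1}\lambda_i x_i : x_i\in E,\ \lambda_i\ge 0,\ \sum_{i=1}^{d+1}\lambda_i=1\}$ (I note in passing that the displayed statement has a typo, $\sum\lambda=1$ should read $\sum\lambda_i=1$). The inclusion $\mathcal{C}\subseteq\mathrm{co}(E)$ is immediate: $\mathrm{co}(E)$ is by definition convex and contains $E$, hence closed under the formation of finite convex combinations, so it contains every element of $\mathcal{C}$.

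The substance is the reverse inclusion $\mathrm{co}(E)\subseteq\mathcal{C}$. First I would record the elementary fact that $\mathrm{co}(E)$ coincides with the set of \emph{all} finite convex combinations of points of $E$: that set is convex, contains $E$, and is contained in any convex set containing $E$, so by minimality it equals $\mathrm{co}(E)$. Thus any $x\in\mathrm{co}(E)$ admits a representation $x=\sum_{i=1}^{n}\lambda_i x_i$ with $x_i\in E$, $\lambda_i>0$ (discarding any null coefficients), $\sum_{i=1}^n\lambda_i=1$, for some finite $n$. The goal is to reduce $n$ to at most $d+1$.

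The reduction is the heart of the argument and the step I expect to require the most care. Suppose $n>d+1$. Then the $n-1>d$ vectors $x_2-x_1,\dots,x_n-x_1$ are linearly dependent in $\mathbb{R}^d$, so there are scalars $\mu_2,\dots,\mu_n$, not all zero, with $\sum_{i=2}^n\mu_i(x_i-x_1)=0$; setting $\mu_1\equiv-\sum_{i=2}^n\mu_i$ yields scalars $\mu_1,\dots,\mu_n$, not all zero, with $\sum_{i=1}^n\mu_i x_i=0$ and $\sum_{i=1}^n\mu_i=0$. For every real $t$ one then has $x=\sum_{i=1}^n(\lambda_i-t\mu_i)x_i$ with coefficients still summing to $1$. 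Since the $\mu_i$ sum to $0$ and are not all zero, at least one is positive; choosing
\[
t=\min\left\{\frac{\lambda_i}{\mu_i}:\mu_i>0\right\}=\frac{\lambda_j}{\mu_j}
\]
makes all coefficients $\lambda_i-t\mu_i$ nonnegative while forcing the $j$-th to vanish, expressing $x$ as a convex combination of at most $n-1$ points of $E$.

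Iterating the reduction terminates after finitely many steps with a representation using at most $d+1$ points; padding with zero-weighted terms (or simply allowing some $\lambda_i=0$) brings the count to exactly $d+1$, placing $x$ in $\mathcal{C}$. The only delicate points are verifying that the choice of $t$ keeps all weights nonnegative (which hinges on the sign bookkeeping: for indices with $\mu_i\le 0$ the term $-t\mu_i$ is nonnegative, while the minimum over positive $\mu_i$ guards the remaining indices) and that at each stage a weight is genuinely eliminated; both follow from the construction above, and no convergence or topological input is needed since the entire argument is finite and purely algebraic.
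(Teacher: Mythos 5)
Your proof is correct: it is the classical Carath\'eodory argument (linear dependence of the differences $x_i-x_1$ when $n-1>d$, producing an affine dependence $\sum\mu_i x_i=0$, $\sum\mu_i=0$, then shifting weights by $t=\min\{\lambda_i/\mu_i:\mu_i>0\}$ to kill one coefficient while keeping the rest nonnegative), and your sign bookkeeping is accurate. Note that the paper itself offers no proof of this statement --- it is quoted in Appendix~B as standard background from convex analysis alongside results cited from Rockafellar and P\'ales --- so there is no in-paper argument to compare against; your self-contained, purely algebraic proof fills that gap correctly, and you are also right that the displayed condition $\sum_{i=1}^{d+1}\lambda=1$ is a typo for $\sum_{i=1}^{d+1}\lambda_i=1$.
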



\section*{Funding}

 The research of S.E. Ferrando is supported in part by an NSERC grant.\\
 
 The research of I.L. Degano and A.L. Gonz\'alez is supported in part by 
 National University of Mar del Plata, Argentina [EXA902/18].

\end{document}